\pgfplotsset{compat=newest}
\numberwithin{equation}{section}
\theoremstyle{definition}
\newtheorem{theorem}{Theorem}[section]
\newtheorem{corollary}[theorem]{Corollary}
\newtheorem{proposition}[theorem]{Proposition}
\newtheorem{definition}[theorem]{Definition}
\newtheorem{example}[theorem]{Example}
\newtheorem{notation}[theorem]{Notation}
\newtheorem{remark}[theorem]{Remark}
\newtheorem{lemma}[theorem]{Lemma}
\newtheorem*{thnonumber}{Theorem}
\newcommand{\inn}{\textnormal{in}}
\newcommand\qbin[3]{\left[\begin{matrix} #1 \\ #2 \end{matrix} \right]_{#3}}
\newcommand{\cal}{\mathcal}
\newcommand{\numberset}{\mathbb}
\newcommand{\N}{\numberset{N}}
\newcommand{\Z}{\numberset{Z}}
\newcommand{\Q}{\numberset{Q}}
\newcommand{\R}{\numberset{R}}
\newcommand{\F}{\numberset{F}}
\newcommand{\fq}{\F_q}
\newcommand{\HH}{\textnormal{H}}
\newcommand{\mH}{\mathcal{H}}
\newcommand{\mL}{\mathcal{L}}
\newcommand{\Pro}{\numberset{P}}
\newcommand{\mC}{\mathcal{C}}
\newcommand{\mS}{\mathcal{S}}
\newcommand{\mD}{\mathcal{D}}
\newcommand{\mF}{\mathcal{F}}
\newcommand{\rk}{\textnormal{rk}}
\newcommand{\mB}{\mathcal{B}}
\newcommand{\mP}{\mathcal{P}}
\newcommand{\mO}{\mathcal{O}}
\newcommand{\mat}{\F_q^{n \times m}}
\newcommand{\Ball}{B}
\newcommand{\ball}{\bm{b}}
\newcommand{\bH}{\ball^\HH}
\newcommand{\brk}{\ball^\rk}
\newcommand{\dH}{d^{\textnormal{H}}}
\newcommand{\wH}{\omega^{\textnormal{H}}}
\newcommand{\drk}{d^{\textnormal{rk}}}
\newcommand{\rhork}{\rho^{\textnormal{rk}}}
\newcommand{\rhoH}{\rho^{\textnormal{H}}}
\newcommand{\wrk}{\omega^{\rk}}
\newcommand{\WH}{W^{\HH}}
\newcommand{\Wrk}{W^{\rk}}
\newcommand{\pp}{\bm{p}}
\newcommand\p[3]{\pp(#1;#2,#3)}
\newcommand\pH[3]{\pp^\HH(#1;#2,#3)}
\newcommand\DD[2]{|#1| / |#2|}
\newtheorem{claim}{Claim}
\newcommand*{\myproofname}{Proof of the claim}
\newenvironment{clproof}[1][\myproofname]{\begin{proof}[#1]}{\end{proof}}
\begin{document}

\title[Partition-Balanced Families of Codes and Asymptotic Enumeration]{
Partition-Balanced Families of Codes \\ and Asymptotic Enumeration in Coding Theory}

\author[Eimear Byrne]{Eimear Byrne}
\address{School of Mathematics and Statistics, University College Dublin, Belfield, Ireland}
\curraddr{}
\email{ebyrne@ucd.ie}
\thanks{}

\author[Alberto Ravagnani]{Alberto Ravagnani$^*$}
\address{School of Mathematics and Statistics, University College Dublin, Belfield, Ireland}
\curraddr{}
\email{alberto.ravagnani@ucd.ie}
\thanks{$^*$The author was partially supported by the Swiss National Science Foundation through grant n. P2NEP2\_168527 and by the Marie Curie Research Grants
Scheme, grant n. 740880.}

\subjclass[2010]{05A16, 11T71}

\keywords{Asymptotic enumeration, partition-balanced family, error-correcting code, density function, Hamming metric, rank metric, MDS code, MRD code.}

\maketitle

\thispagestyle{empty}

\begin{abstract}
We introduce the class of partition-balanced families of codes,
and show how to exploit their combinatorial invariants to obtain upper and lower bounds on the number of codes that have a prescribed property. 
In particular, we derive precise asymptotic estimates on the density functions of several classes of codes that are extremal with respect to minimum distance, covering radius, and maximality.
The techniques developed in this paper apply to various distance functions, including the Hamming and the rank metric distances.
Applications of our results show that, unlike the $\F_{q^m}$-linear MRD codes, the $\fq$-linear MRD codes are not dense in the family of codes of the same dimension. More precisely, we show that the density of $\fq$-linear MRD codes in $\F_q^{n \times m}$ in the set of all matrix codes of the same dimension is asymptotically at most $1/2$, both as $q \to +\infty$ and as $m \to +\infty$. We also prove that MDS and $\F_{q^m}$-linear MRD codes are dense in the family of maximal codes. 
Although there does not exist a direct analogue of the redundancy bound for the covering radius of $\F_q$-linear rank metric codes, we show that a similar bound is satisfied by a uniformly random matrix code with high probability. In particular, we prove that codes meeting this bound are dense. 
Finally, we compute the average weight distribution of linear codes in the rank metric, and other parameters that generalize the total weight of a linear code.  
\end{abstract}

\bigskip

\bigskip

\section*{Introduction}\label{sec:intro}

Linear codes over finite fields have been extensively studied as combinatorial objects, with connections to many areas in mathematics such as graph theory, curves over finite fields,
finite geometry, lattice theory and numerous topics in algebraic combinatorics. See, for example ~\cite{caldkant,cameronvanlint,duursma,macws,vLW} and the references therein.

There are several fundamental parameters and invariants associated with a linear code, such as its dimension, minimum distance, covering radius and weight distribution. Determination of some or all of these parameters is a non-trivial problem for an arbitrary code, especially as the dimension of its ambient space increases.
For this reason, constructions of particular classes and families of codes with prescribed parameter sets are often sought. 
Much research has been spent on developing coding theoretic bounds as functions of some of the code parameters. Codes that meet such bounds are \textit{extremal} and highly interesting from a combinatorial point of view, as they often have remarkable rigidity properties. 
    
    In this paper, we offer a new perspective on extremal codes. We obtain upper and lower bounds on the {\em density functions} of a number of families of codes within a larger family, and give precise asymptotic estimates of these. We introduce the idea of a {\em partition-balanced} family of codes, and show how the combinatorial invariants of such families can be used to obtain estimates on the number of codes satisfying a particular property.

As we will show, these techniques can be applied in different contexts to establish the density or sparsity of families of codes that are extremal with respect to {\em minimum distance}, {\em covering radius}, and the related concept of code {\em maximality}.
 Our methods can be used to study codes whose ambient space is endowed with a large class of distance functions. 
 In this paper, we focus on two major distance functions as applications of our results, namely the Hamming and the rank metric. 
    
    The Hamming metric is the classical distance function associated with coding theory. The linear {\em maximum distance separable} (MDS) codes are those $k$-dimensional subpaces of $\fq^n$ meeting the {\em Singleton bound}, namely those whose minimum Hamming distance is exactly $n-k+1$. 
 There is still substantial activity around such codes, which form a central topic in coding theory.

    Another important distance function of coding theory is the {\em rank metric}, which measures the rank of the difference between a pair of matrices with entries from a finite field $\fq$.
    Rank metric codes have seen a recent resurgence of interest both for their potential use in code based cryptography and as error-correcting codes in network communications \cite{gok,l,kk,skk,wzvs,wzpr}. They are also intriguing as mathematical objects in their own right, and several researchers have sought to describe their structural properties \cite{antr_heide,bcbds,byrneravagnani,duallypaper,dlCKWW,del,gad_paper,psgorlaravagnani,ln,alb1,sheekey}. However, the general theory of rank metric codes is still rather unexplored.
    The rank metric analogue of the Singleton bound yields the class of {\em maximum rank distance} (MRD) codes, which exist for all choices of $m,n$ and minimum $\fq$-rank $d$, both for 
    $\F_{q^m}$-linear subspaces of $\F_{q^m}^n$ (which we will refer to as vector rank metric codes) and the larger class of $\fq$-linear subspaces of $\mat$ (which we will refer to as matrix rank metric codes).
    
    Concrete realisations of $\F_{q^m}$-linear MRD codes have been known since the 1970s, having been independently introduced by
    Delsarte, Gabidulin and Roth 
    who studied them from different perspectives~\cite{del,gabid,roth}. On the other hand, general classes of $\fq$-linear MRD codes that are not $\F_{q^m}$-linear were unknown until
    Sheekey~\cite{sheekey} introduced the family of twisted Gabidulin codes.
    
    While the vector rank metric codes often exhibit a behaviour similar to block codes with the Hamming distance, there is considerable divergence between these families and the class of matrix rank metric codes: if similar techniques for Hamming metric codes can be applied to make statements on vector rank metric codes, such methods often fail for matrix rank metric codes. Several examples of this can be observed in this work. 
    
    Perhaps the most profound difference is to be seen in the behaviour of the density
    functions of codes that are extremal with respect to the minimum distance. As the reader will see, while both MDS and vector rank metric MRD codes are dense among codes having the same dimension, the matrix MRD codes are {\em never} dense in this sense, both as $q \to +\infty$ and as $m \to +\infty$\footnote{In the very final stages of writing this paper, we became aware of the preprint~\cite{antr_heide}, in which the authors independently show, by a different argument, that the MRD matrix codes are not dense in the set of codes with the same dimension as $q \to +\infty$.}.
More precisely, one of the results of this paper is the following (see Theorem \ref{bigthm} and its corollaries).

\begin{thnonumber}
Fix integers $2 \le d \le n$. Given $m \in \N$ with $m \ge n$ and a prime power $q$, denote by 
$N_{q,m}$ the number of rank metric codes in $\F_q^{n \times m}$ of dimension $m(k-d+1)$, and by 
$N'_{q,m}$ the number of such codes of minimum distance at most $d-1$. 
Then for every positive real number $\varepsilon$ 
there exists $q_\varepsilon \in \N$ such that
$$\frac{N'_{q,m}}{N_{q,m}} \ge \frac{1}{2}-\varepsilon \qquad \mbox{for all $m \ge n$ and $q \ge q_\varepsilon$.}$$
Moreover, for every positive real number $\varepsilon$  
there exists $m_\varepsilon \in \N$ such that
$$\frac{N'_{q,m}}{N_{q,m}} \ge \frac{1}{2} \left( \frac{q}{q-1} - (q-1)^{-2} \right)-\varepsilon \qquad \mbox{for all $q$ and $m \ge m_\varepsilon$.}$$
\end{thnonumber}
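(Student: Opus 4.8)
The plan is to recast the ratio $N'_{q,m}/N_{q,m}$ as the probability that a uniformly random code contains a low-rank nonzero matrix, and to bound this probability from below by a second-moment (Cauchy--Schwarz) argument. Write $K=m(n-d+1)$, which is exactly the largest $\F_q$-dimension permitted by the rank-metric Singleton bound at minimum distance $d$; since $m\ge n$, every code $\mC\subseteq\mat$ with $\dim_{\F_q}\mC=K$ satisfies $\drk(\mC)\le d$, and such a code is non-MRD (that is, $\drk(\mC)\le d-1$) if and only if it contains a nonzero matrix of rank at most $d-1$. Consequently $N'_{q,m}/N_{q,m}=\Pr[Y\ge 1]$, where, for a uniformly random $K$-dimensional code $\mC$, $Y=Y(\mC)$ denotes the number of lines (one-dimensional subspaces) of $\mat$ consisting of matrices of rank at most $d-1$ and contained in $\mC$; scalar multiplication preserves rank, so the relevant matrices genuinely partition into such lines. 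Since $(\mathbb{E} Y)^2\le \mathbb{E}[Y^2]\,\Pr[Y\ge1]$, it suffices to compute the first two moments of $Y$ and to show that $\mathbb{E}[Y^2]/(\mathbb{E} Y)^2$ stays close to its Poissonian value.

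Both moments are exactly computable because the family of all $K$-dimensional codes is balanced: the number of $K$-dimensional codes of $\mat$ containing a fixed $j$-dimensional subspace equals $\qbin{nm-j}{K-j}{q}$, independently of the chosen subspace. (This is precisely the structural feature that the partition-balanced framework isolates, so this step can alternatively be read off from Theorem~\ref{bigthm}.) Writing $L=\frac{1}{q-1}\sum_{r=1}^{d-1}\qbin{n}{r}{q}\prod_{i=0}^{r-1}(q^{m}-q^{i})$ for the number of low-rank lines, summing over lines gives
\[
\mathbb{E} Y = L\,\frac{q^{K}-1}{q^{nm}-1},
\]
while, using that any two distinct lines span a plane,
\[
\mathbb{E}[Y^2]=\mathbb{E} Y+L(L-1)\,\frac{(q^{K}-1)(q^{K-1}-1)}{(q^{nm}-1)(q^{nm-1}-1)}.
\]

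Both claimed inequalities then follow by asymptotic analysis of these two closed forms. As $m\to\infty$ with $q,n,d$ fixed, the rank-$(d-1)$ term dominates $L$ and one finds $\mathbb{E} Y\to\lambda:=\qbin{n}{d-1}{q}/(q-1)$, while $\mathbb{E}[Y(Y-1)]\sim(\mathbb{E} Y)^2$, so that $\mathbb{E}[Y^2]/(\mathbb{E} Y)^2\to 1+\lambda^{-1}$ and hence $\Pr[Y\ge1]\ge \lambda/(1+\lambda)-\varepsilon$ for $m$ large. Since $t\mapsto t/(1+t)$ is increasing and the unimodality of Gaussian binomials gives $\qbin{n}{d-1}{q}\ge\qbin{n}{1}{q}\ge q+1$, we have $\lambda\ge (q+1)/(q-1)$; substituting and simplifying yields $\Pr[Y\ge1]\ge \tfrac{q+1}{2q}-\varepsilon$, which is at least $\tfrac12\bigl(\tfrac{q}{q-1}-(q-1)^{-2}\bigr)-\varepsilon$ as asserted. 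In the regime $q\to\infty$ with $m\ge n$, the same two formulas give $\mathbb{E} Y\sim \tfrac{1}{q-1}\,q^{(d-1)(n-d+1)}$, uniformly in $m$; the exponent equals $1$ exactly when $n=d=2$ and is at least $2$ otherwise, so $\mathbb{E} Y\to 1$ in the extremal case and $\mathbb{E} Y\to+\infty$ in all others. Correspondingly $(\mathbb{E} Y)^2/\mathbb{E}[Y^2]\to \tfrac12$ in the former case and $\to 1$ in the latter, so in every case $\Pr[Y\ge1]\ge \tfrac12-\varepsilon$ once $q$ is large.

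I expect the main obstacle to be the uniform control of the second moment: one must verify that $\mathbb{E}[Y(Y-1)]\sim(\mathbb{E} Y)^2$, equivalently that $\mathbb{E}[Y^2]/(\mathbb{E} Y)^2$ converges to $1+\lambda^{-1}$ (respectively to $1$), \emph{uniformly} over all $m\ge n$, which reduces to bounding the finite-size corrections in the two Gaussian-binomial ratios above independently of $m$. The conceptual crux --- and the reason matrix MRD codes fail to be dense, in contrast with the Hamming and vector rank-metric cases --- is that $\mathbb{E} Y$ does \emph{not} tend to $0$: the expected number of minimum-rank lines in a random code of MRD dimension is bounded below by a positive constant (indeed by $\lambda\ge(q+1)/(q-1)>1$), and the second-moment estimate converts this persistent positive mean into a uniform positive lower bound on $\Pr[Y\ge1]$.
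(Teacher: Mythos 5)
Your proof strategy is correct, and it takes a genuinely different route from the paper's. The paper (Theorem \ref{bigthm} together with Corollary \ref{lim_m} and the corollary preceding it) does not compute moments of the number of low-rank lines in a random code: it constructs a partial spread $U_1,\dots,U_M \le \F_q^n$ of $M=q$ pairwise trivially intersecting $(d-1)$-dimensional subspaces, passes to the anticodes $\mD_i$ of matrices with column space in $U_i$ (each of dimension $m(d-1)$ and consisting of matrices of rank at most $d-1$), and lower-bounds the number of codes meeting some $\mD_i\setminus\{0\}$ by an inclusion--exclusion truncated at the first correction, with the cross terms controlled through the partition-balanced family of Proposition \ref{examp3}. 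Your Cauchy--Schwarz argument plays the same structural role---a first-moment count with a quadratic correction---but applied to all low-rank lines individually rather than to $q$ hand-picked anticodes. This buys you exact closed forms for both moments (no parameter $M$ to optimize, cf.\ Remark \ref{rem:commM}) and strictly better constants: your $m\to+\infty$ bound $\lambda/(1+\lambda)\ge(q+1)/(2q)$ dominates $\tfrac12\bigl(\tfrac{q}{q-1}-(q-1)^{-2}\bigr)$ for every $q$ (indeed $(q+1)(q-1)^2=q(q^2-q-1)+1$), and whenever $(d-1)(n-d+1)\ge 2$ your method shows the density of non-MRD codes tends to $1$ as $q\to+\infty$, a sparseness statement the paper does not make. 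The uniformity issue you flag is genuine but goes through: every correction factor in your two closed forms is of the shape $1\pm O(q^{-c})$ or $1\pm O(q^{-cm})$ with $c>0$ depending only on $n$ and $d$ (for instance, the ranks $r<d-1$ contribute to $\mathbb{E}Y$ a relative error $O(q^{-3})$ uniformly over $m\ge n$, and the plane-containment probability satisfies $R\le 1+1/q$ uniformly over $m$), so the $q\to+\infty$ limit is uniform in $m\ge n$ and the $m\to+\infty$ limit is uniform in $q$, as the two statements require. What the paper's route buys in exchange is that it stays entirely within the partition-balanced formalism that is the paper's main theme; as a proof of this particular theorem, your argument is cleaner and yields a stronger conclusion.
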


We obtain several other estimates on the density of codes that are extremal with respect to minimum distance, covering radius and maximality, which we outline below.
Standard methods attempting to address density questions in coding theory often rely on the Schwartz-Zippel Lemma~\cite{schwartz,zippel} to obtain lower bounds on density functions. However, as the reader will see, in some important cases these methods fail. Our techniques offer an alternative general approach to asymptotic enumeration problems in coding theory.

 \subsection*{Outline.} In Section~\ref{sec:distregsp} we define basic concepts of distance-regular spaces and their codes. In Section~\ref{sec:part}, we introduce the concept of a partition-balanced family of codes, with respect to an arbitrary partition of the ambient space. We compute the invariants associated with some of these families, which will be used several times throughout the paper. In Section~\ref{sec:densityfs} we define the density functions associated with a family of codes, and give asymptotic estimates of functions that are used in later sections. 
 
 In Section~\ref{sec:densmindist} we give precise asymptotic estimates for the number of codes with the Hamming and the rank metric having given dimension and minimum distance. As immediate corollaries, we obtain the density of MDS and MRD vector rank metric codes in their respective ambient spaces. In Section~\ref{sec:mrdnotsense} we show that the matrix MRD codes are not dense in the family of matrix codes of the same dimension. In particular, we show that the density function of non-MRD matrix codes of dimension $m(n-d+1)$ in $\F_{q}^{m \times n}$ is asymptotically lower bounded by $1/2$.
    
    In Section~\ref{sec:densmaxl}, we show that the MDS and vector rank metric MRD codes are dense in the family of maximal codes for the same dimension. 
    In Section~\ref{denscovrad} we show that Hamming and vector rank metric codes meeting the {\em redundancy bound} are dense in the family of codes of the same dimension. We introduce a new upper bound on the covering radius of matrix rank metric codes, which is not in fact satisfied by all rank metric codes, but rather by a uniformly random code with high probability. We then show that the matrix codes satisfying this bound are dense in the family of matrix codes of the same dimension. 
    
    Finally, in Section~\ref{sec:avpar} we compute the average weight distributions of Hamming metric, vector rank metric and matrix rank metric codes. We obtain asymptotic estimates of these values and observe the interesting fact that, although the MDS and vector rank metric MRD codes are dense as the field size grows, the number of words of weight $d-1$ in a uniformly random linear code converges to a non-zero constant.

\section{Distance-Regular Spaces and Codes}\label{sec:distregsp}

We start by describing
 the class of metric spaces for which our methods apply. These are linear spaces defined over finite fields that exhibit  certain 
regularity properties with respect to their distance functions. They  include important examples from coding theory, such as the  Hamming metric and rank metric spaces. 

\begin{definition} \label{defregular}
Let $Q$ be a prime power, and let $\F_Q$ be the finite field with $Q$ elements. 
Let $X$ be a finite-dimensional vector space over $\F_Q$ and let $d:X \times X \to \N$ be an integer-valued distance function on $X$.
We say that $(X,d)$ is a \textbf{$Q$-ary distance-regular space} if:
\begin{enumerate} 
\item \label{pr3} for all $x,y \in X$ and all $\alpha \in \F_Q \setminus \{0\}$ we have $d(\alpha x, \alpha y) = d(x,y)$,
\item \label{pr3a} for all $x,y \in X$ we have $d(x, y) = d(0,x-y)$,
\item \label{pr4} for all $x,y \in X$ and $j,k \in \N$, 
$|\{z \in X \mid d(z,x)=j, \ d(z,y)=k\}|$ only depends on 
$d(x,y)$.
\end{enumerate}
Denote by
$\omega:X \to \N$ the \textbf{weight} induced by $d$, i.e., the function defined by
$\omega(x):=d(x,0)$ for all $x \in X$. To simplify the discussion in the sequel, we also assume that a $Q$-ary distance-regular space $(X,d)$ with weight $\omega$ satisfies: 
\begin{enumerate} 
\addtocounter{enumi}{3}
\item $|\omega|:=\max\{\omega(x) \mid x \in X \} =|\omega(X)|-1$, \label{pr5}
\end{enumerate}
where $\omega(X)=\{\omega(x)\mid x \in X\}$ is the image of $X$ under $\omega$. 
\end{definition}

Note that a $Q$-ary distance-regular space is an example of a {\em symmetric association scheme}. 
Therefore, for each triple $i,j,k \in \N$ there is an associated \textbf{intersection number} of $(X,d)$, defined to be the integer
$$\p{i}{j}{k}:= |\{z \in X \mid d(z,x)=j, \ d(z,y)=k\}|,$$
where $x,y \in X$ are any vectors with $d(x,y)=i$. 
The properties of the $\p{i}{j}{k}$'s are well-studied. The interested reader is referred to~\cite{BCN} for further details.

\begin{remark} \label{rem:candivide}
	By Property~\ref{pr3} of Definition~\ref{defregular} we have $\omega(\alpha x)=\omega(x)$ for all $x \in X$ and $\alpha \in \F_Q \setminus \{0\}$. Moreover, Property~\ref{pr5} implies that for all
	$0 \le i \le |\omega|$ there exists $x \in X$ with $\omega(x)=i$. 
\end{remark}

\begin{notation}
For the remainder, $Q$ denotes a prime power, and $(X,d)$ a fixed $Q$-ary distance-regular space of dimension $N \ge 2$ over $\F_Q$. 
\end{notation}

We are interested in the combinatorial properties and invariants of the subsets of $X$. Our focus will be mostly on the subspaces. 
We define several of the coding theoretic invariants that we will consider in this paper.

\begin{definition}
A \textbf{code} is a non-empty subset $\mC \subseteq X$, and its elements are its \textbf{codewords}. 
We say that $\mC$ is \textbf{linear} 
if it is an $\F_Q$-linear subspace of $X$. In this case we write $\mC \le X$.

Let $\mC \subseteq X$ be a (not necessarily linear) code. 
 If $|\mC| \ge 2$, then the \textbf{minimum distance} of $\mC$ is the integer
$d(\mC):=\min\{d(x,y) \mid x,y \in \mC,\ x \neq y\}$. We also set $d(\{0\}):=+\infty$.
The \textbf{weight distribution} of $\mC$ is the sequence $(W_i(\mC) \mid i \in \N)$, where
$W_i(\mC):=|\{x \in \mC \mid \omega(x)=i\}|$ for all $i \in \N$.
Finally, the \textbf{covering radius} of $\mC$ is the integer 
$\rho(\mC):=\min\{r \in \N \mid \mbox{for all } x \in X \mbox{ there exists } y \in \mC \mbox{ with } d(x,y) \le r\}$.
\end{definition}

Note that if $\mC \le X$ is a non-zero linear code, then it follows immediately from the definition of minimum distance and the linearity of $\mC$ that $d(\mC)=\min\{W_i(\mC) \mid i \in \N, \ i \neq 0\}$.

A {\em ball} in $(X,d)$ of given radius, say $r$, is an example of a code that contain only codewords of weight at most $r$. Clearly, a linear code $\mC$ will intersect a ball of radius $r$ centred at zero only if the minimum distance of $\mC$ is at most $r$. We will apply this observation later to obtain estimates on the  density of families of codes characterized as having certain properties.

\begin{definition}
Let $x \in X$ and $r \in \N$. The \textbf{ball} of radius $r$ and center $x$ is the set
$$\Ball(x,r):=\{y \in X \mid d(x,y) \le r\} \subseteq X.$$
\end{definition}

The size of the ball $\Ball(x,r)$ only depends on $r$.
This follows easily from the definition of $\Ball(x,r)$ and Property~\ref{pr4} of Definition~\ref{defregular}. More precisely, for all $x \in X$ and $r \in \N$ we have
$$|\Ball(x,r)|=\sum_{j=0}^{r} \p{0}{j}{j}.$$

\begin{notation}
For $r \in \N$, we denote by $\ball(r)$ the cardinality of $|\Ball(x,r)|$, for any vector $x \in X$.
\end{notation}

\section{Partition-Balanced Families of Codes}\label{sec:part}

We describe families of codes that exhibit regularity properties with respect to a given partition of the ambient space $X$.

\begin{notation}
Let $\mF$ be a \textbf{family} of codes in $X$, i.e., a collection of non-empty subsets of $X$. For $x \in X$, we let
$$\mF_x:=\{ \mC \in \mF : x \in \mC \}.$$
\end{notation}

We are interested in families $\mF$ such that the cardinality of $\mF_x$ depends only on the \textit{class} of $x$ with respect to a given partition, say $\mP$,  of the ambient space $X$. 
This motivates the following definition.

\begin{definition}\label{def:pbal}
Let $\mP=\{\mP_1,...,\mP_M\}$ be a partition of $X$ of size $M$. A non-empty family $\mF$ of codes in $X$ is called 
\textbf{$\mP$-balanced} if
$|\mF_x|$ depends only on the integer $i$ such that $x \in \mP_i$, for all $x \in X$.
In words, the family $\mF$ is $\mP$-balanced if the number of codes $\mC \in \mF$ containing $x \in X$
only depends on the class of $\mP$ containing $x$. 
If $\mF$ is $\mP$-balanced, then 
the \textbf{invariants} of the pair $(\mP,\mF)$ are the integers defined by
$$\mP_i(\mF):=|\mF_x|=|\{\mC \in \mF \mid x \in \mC\}| \quad \mbox{for } 1 \le i \le M,$$
where $x \in X$ is any element with $x \in \mP_i$.
\end{definition}

In~\cite{pelbook}, the authors give a definition of a balanced family of codes that is a special case of Definition \ref{def:pbal}. In particular, it is defined with respect to the partition of $X$ into two classes, namely, $\mP_1 = \{0\}$ and $\mP_2 = X \backslash \{0\}$. As the reader will see, for our results we need to consider more general partitions on $X$.


A useful property of partition-balanced families is the following simple identity, which will play a crucial role throughout the paper.

\begin{lemma} \label{KEY}
Let $\mP$ be a partition of $X$ of size $M$, and let $\mF$ be a $\mP$-balanced family of codes in $X$.
Then for all functions $f:X \to \R$ we have
$$\sum_{\mC \in \mF} \sum_{x \in \mC} f(x) = \sum_{i=1}^M \mP_i(\mF) \sum_{x \in \mP_i} f(x).$$
\end{lemma}

\begin{proof}
Exchanging the order of summation we obtain
$$\sum_{\mC \in \mF} \sum_{x \in \mC} f(x) =  \sum_{x \in X} \sum_{\substack{\mC \in \mF \\ x \in \mC}} f(x)=
\sum_{x \in X} |\mF_x| \cdot f(x) =  \sum_{i=1}^M \mP_i(\mF) \sum_{x \in \mP_i} f(x),$$
as desired.
\end{proof}

\begin{remark}
	In the case that the function $f: X \to \R$ of Lemma~\ref{KEY} is the characteristic function of a set $\mS \subseteq X$,  Lemma~\ref{KEY} yields
	$$\sum_{\mC \in \mF} |\mC \cap \mS|= \sum_{i=1}^M \mP_i(\mF) \cdot |\mP_i \cap \mS|.$$
	In particular, it expresses the average intersection between $\mS$ and a code in $\mF$ in terms of $|\mF|$, the invariants of $(\mP,\mF)$, and the intersections between $\mS$ with the classes of $\mP$.
	We will apply often arguments of this type.  
\end{remark}

We now consider specific partitions and families of codes that are balanced with respect to these partitions. We explicitly compute the invariants of such partition-family pairs. These invariants will be required in later in order to make statements on the density of certain classes of codes.
More precisely, Proposition~\ref{examp1} will be used in Section~\ref{sec:densmindist} to study the density of the MDS codes and MRD vector rank metric codes, and again in Sections~\ref{sec:densmaxl} and~\ref{sec:avpar}.
Proposition~\ref{examp3} will be applied in Section~\ref{sec:mrdnotsense} to establish the remarkable fact that $\F_q$-linear MRD codes are not dense in $\F_{q}^{m \times n}$, both as
$q \to +\infty$ and as $m \to +\infty$.

\begin{proposition} \label{examp1}
Let $\mD \le X$ be a linear code of dimension $t<N$. Construct a partition $\mP$ of $X$ of size two via
$\mP_1:=\mD$ and $\mP_2:=X \setminus \mD$. Fix an integer $k$ with $t \le k \le N$, and define the family
$\mF:=\{\mC \le X \mid \mD \le \mC, \ \dim(\mC)=k\}$. Then
 $\mF$ is $\mP$-balanced. Moreover,
$$|\mF| = \qbin{N-t}{k-t}{Q}, \qquad \mP_1(\mF) = |\mF|, \qquad 
\mP_2(\mF)=\frac{|\mF| \cdot (Q^k-Q^t)}{Q^N-Q^t}.$$
\end{proposition}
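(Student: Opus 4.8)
The plan is to realise $\mF$ as the set of all $k$-dimensional subspaces of $X$ lying above the fixed subspace $\mD$, and to count such subspaces by passing to the quotient $X/\mD$. A subspace $\mC$ with $\mD \le \mC$ corresponds bijectively to the subspace $\mC/\mD$ of $X/\mD$, and $\dim(X/\mD)=N-t$, so the members of $\mF$ are in bijection with the $(k-t)$-dimensional subspaces of an $(N-t)$-dimensional space. By the standard formula for the Gaussian binomial coefficient this gives $|\mF|=\qbin{N-t}{k-t}{Q}$ at once.

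For the invariant $\mP_1(\mF)$ I would simply observe that every $\mC \in \mF$ contains $\mD$ by definition; hence if $x \in \mP_1=\mD$ then $x \in \mC$ for every $\mC \in \mF$, so $|\mF_x|=|\mF|$ and $\mP_1(\mF)=|\mF|$.

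The crux is the second class. Fix $x \in \mP_2 = X \setminus \mD$. A code $\mC \in \mF$ contains $x$ if and only if it contains the subspace $\mD+\langle x\rangle$, which, since $x \notin \mD$, has dimension exactly $t+1$. Thus $\mF_x$ is precisely the set of $k$-dimensional subspaces of $X$ containing a fixed $(t+1)$-dimensional subspace, and by the same quotient bijection (now modulo $\mD+\langle x\rangle$) there are $\qbin{N-t-1}{k-t-1}{Q}$ of these. The key point is that this count depends only on the dimension $t+1$ and not on the particular $x$; this is exactly what establishes that $|\mF_x|$ is constant across $\mP_2$, and hence that $\mF$ is $\mP$-balanced.

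It then remains to put $\mP_2(\mF)$ in the stated closed form. Rather than manipulate Gaussian binomials directly, I would recover $\mP_2(\mF)$ by double counting the incidences $\{(x,\mC) : \mC \in \mF,\ x \in \mC\}$, i.e.\ by applying Lemma~\ref{KEY} to the constant function $f\equiv 1$. Counting by codes gives $|\mF|\cdot Q^k$, since each $\mC$ has $|\mC|=Q^k$; counting by classes gives $|\mP_1|\,\mP_1(\mF)+|\mP_2|\,\mP_2(\mF)=Q^t|\mF|+(Q^N-Q^t)\,\mP_2(\mF)$, using $|\mD|=Q^t$. Equating and solving for $\mP_2(\mF)$ yields $|\mF|(Q^k-Q^t)/(Q^N-Q^t)$, as claimed. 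The only mildly delicate step in the whole argument is the constancy of $|\mF_x|$ on $\mP_2$; once that is secured, the closed form for $|\mF|$ and the double-counting step are routine.
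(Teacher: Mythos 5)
Your proof is correct, and it differs from the paper's in one interesting respect: the way $\mP$-balancedness is established on $\mP_2$. The paper argues via symmetry, constructing for any $x,y \in X \setminus \mD$ an $\F_Q$-linear automorphism $G$ of $X$ with $G(\mD)=\mD$ and $G(x)=y$, which induces a bijection between the codes in $\mF$ containing $x$ and those containing $y$; this shows constancy of $|\mF_x|$ without ever computing it. You instead compute $|\mF_x|$ explicitly: a code in $\mF$ contains $x$ if and only if it contains the $(t+1)$-dimensional space $\mD+\langle x\rangle$, so $|\mF_x|=\qbin{N-t-1}{k-t-1}{Q}$, which visibly depends only on the class of $x$. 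Your route is more concrete and in fact yields $\mP_2(\mF)$ directly (the closed form in the statement equals $\qbin{N-t-1}{k-t-1}{Q}$ by the standard identity $\qbin{N-t}{k-t}{Q}\frac{Q^{k-t}-1}{Q^{N-t}-1}=\qbin{N-t-1}{k-t-1}{Q}$), so your final double-counting step, while perfectly valid and identical in substance to the paper's count of the incidences in $\mP_2\times\mF$, is strictly speaking redundant. The paper's automorphism argument buys generality: it is the template reused in Proposition~\ref{examp3}, where the relevant family is defined by an intersection condition rather than a containment condition and no clean closed-form count of $|\mF_x|$ is available, so the symmetry argument is the one that scales. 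Both proofs handle the degenerate case $k=t$ correctly (your count $\qbin{N-t-1}{k-t-1}{Q}$ is then $0$ by convention, matching $\mF_x=\emptyset$).
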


\begin{proof}
We first show that $\mF$ is $\mP$-balanced. Assume that $x,y \in X$ are in the same class of $\mP$.
If $x,y \in \mP_1$, then all codes $\mC \in \mF$ contain both $x$ and $y$. If $x,y \in \mP_2$, then it is easy to see that there exists an $\F_Q$-linear isomorphism $G:X \to X$ such that $G(\mD)=\mD$ and $G(x)=y$. Such a map $G$ induces a bijection between codes 
in $\mF$ containing $x$ and codes in $\mF$ containing $y$.

The formulas for $|\mF|$ and $\mP_1(\mF)$ are immediate.
To compute $\mP_2(\mF)$, it suffices to count the elements of the set
$\{(x,\mC) \in \mP_2 \times \mF \mid \mC \in \mF, \ x \in \mC\}$ in two different ways, which gives the identity
$|\mP_2| \cdot \mP_2(\mF) = |\mF| \cdot (Q^k-Q^t)$.
\end{proof}

A natural partition of $X$ is the one induced by the weight function $\omega:X \to \N$.

\begin{definition}  \label{def:WB}
For all $i \in \N$, let $\mP_i(\omega):=\{x \in X \mid \omega(x)=i\}$.
 The \textbf{weight partition} of $X$, denoted by $\mP(\omega)$,  is the partition whose 
classes are the non-empty sets of the form $\mP_i(\omega)$, $i \in \N$. Note that
$\mP(\omega)$ is a partition of size $|\omega|+1$ by Property~\ref{pr5} of Definition~\ref{defregular}. 
\end{definition}

The invariants of a weight partition balanced family can be computed as follows.

\begin{proposition}  \label{examp2}
Assume that $\mF$ is a $\mP(\omega)$-balanced family of linear codes in $X$. We have
$$\mP_i(\omega)(\mF)=\frac{\sum_{\mC \in \mF} W_i(\mC)}{W_i(X)} \ \mbox{ for all } 0 \le i \le |\omega|.$$
\end{proposition}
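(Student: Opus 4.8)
The plan is to derive this identity directly from the key double-counting formula of Lemma~\ref{KEY}, specialized to the weight partition $\mP(\omega)$. Since $\mF$ is assumed $\mP(\omega)$-balanced, its invariants are precisely the numbers $\mP_j(\omega)(\mF)$ indexed by the weight classes, so in applying Lemma~\ref{KEY} I would identify the generic invariant $\mP_j(\mF)$ of that lemma with the weight-indexed invariant $\mP_j(\omega)(\mF)$, and observe that the sum over classes runs over the attained weight values $0 \le j \le |\omega|$ (these being exactly the non-empty classes by Definition~\ref{def:WB}).

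First I would apply the Remark following Lemma~\ref{KEY} (equivalently, Lemma~\ref{KEY} itself with $f$ the characteristic function of a set) to the set $\mS = \mP_i(\omega)$, obtaining
$$\sum_{\mC \in \mF} |\mC \cap \mP_i(\omega)| = \sum_{j=0}^{|\omega|} \mP_j(\omega)(\mF) \cdot |\mP_j(\omega) \cap \mP_i(\omega)|.$$
The next step is to simplify both sides. On the left, $\mC \cap \mP_i(\omega) = \{x \in \mC \mid \omega(x) = i\}$, so its cardinality is exactly $W_i(\mC)$ by the definition of the weight distribution. On the right, the classes of a partition are pairwise disjoint, hence $|\mP_j(\omega) \cap \mP_i(\omega)| = 0$ whenever $j \neq i$, while for $j = i$ it equals $|\mP_i(\omega)| = W_i(X)$. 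Only the single term $j = i$ therefore survives, collapsing the right-hand sum and yielding
$$\sum_{\mC \in \mF} W_i(\mC) = \mP_i(\omega)(\mF) \cdot W_i(X).$$

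Finally I would divide by $W_i(X)$ to isolate the invariant and obtain the claimed formula. The one point requiring a word of justification is that this division is legitimate, i.e.\ that $W_i(X) \neq 0$ for every $0 \le i \le |\omega|$; this is guaranteed by Property~\ref{pr5} of Definition~\ref{defregular} (equivalently by Remark~\ref{rem:candivide}), which ensures that each weight value in the range $[0,|\omega|]$ is actually attained by some element of $X$, so that the class $\mP_i(\omega)$ is non-empty. I do not expect any genuine obstacle here: the entire content of the statement is captured by the double-counting of Lemma~\ref{KEY}, and the only care needed is the bookkeeping match between the two notations for the invariant and the orthogonality of distinct weight classes.
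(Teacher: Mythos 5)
Your proof is correct and coincides with the paper's: the paper's one-line argument double-counts the set $\{(\mC,x)\in\mF\times X \mid x\in\mC,\ \omega(x)=i\}$ and invokes Remark~\ref{rem:candivide} for $W_i(X)\neq 0$, which is exactly the computation you carry out by specializing Lemma~\ref{KEY} to the characteristic function of $\mP_i(\omega)$. Routing it through Lemma~\ref{KEY} versus double-counting directly is only a difference of phrasing, not of substance.
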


\begin{proof}
It suffices to double-count the elements of 
$\{(\mC,x) \in \mF \times X \mid \mC \in \mF, \ x \in \mC, \ \omega(x)=i\}$ and then use Remark~\ref{rem:candivide}, which guarantees that
$W_i(X) \neq 0$ for all $0 \le i \le |\omega|$. 
\end{proof}

We conclude this section with another class of partition balanced families.

\begin{proposition} \label{examp3}
Let $\mD \le X$ be an $\F_Q$-linear code of dimension $1 \le t < N$. Let $\mP$ be the partition of $X$ of size three given by 
$\mP_1:=\{0\}$, $\mP_2:=\mD \setminus \{0\}$, and $\mP_3:=X \setminus \mD$. Fix an integer $1 \le k \le N$, and define the family
$\mF_\mD:=\{\mC \le X \mid \dim(\mC)=k, \ \mC \cap \mD \neq \{0\}\}$. Then $\mF_\mD$ is $\mP$-balanced. 
Moreover,
$$\mP_1(\mF_\mD)=|\mF_\mD|, \qquad \mP_2(\mF_\mD)=\frac{\sum_{\mC \in \mF_\mD} |\mC \cap (\mD \setminus \{0\})|}{Q^t-1}, \qquad\mP_3(\mF_\mD)=\frac{\sum_{\mC \in \mF_\mD} |\mC \cap (X \setminus \mD)|}{Q^N-Q^t}.$$
\end{proposition}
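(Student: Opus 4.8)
The plan is to establish Proposition~\ref{examp3} by the same double-counting strategy used in Propositions~\ref{examp1} and~\ref{examp2}, after first verifying $\mP$-balancedness via a symmetry/transitivity argument on the ambient space. The key structural observation is that the three classes $\mP_1=\{0\}$, $\mP_2=\mD\setminus\{0\}$, $\mP_3=X\setminus\mD$ are each orbits (or unions of orbits that behave uniformly) under the group of $\F_Q$-linear automorphisms of $X$ that fix $\mD$ setwise. Concretely, I would let $G$ be an $\F_Q$-linear isomorphism $X\to X$ with $G(\mD)=\mD$; such a $G$ maps any code $\mC$ with $\dim\mC=k$ and $\mC\cap\mD\neq\{0\}$ to another code with the same two properties, since $G(\mC)\cap\mD = G(\mC\cap G^{-1}(\mD)) = G(\mC\cap\mD)$, and this image is nonzero precisely when $\mC\cap\mD$ is. Hence $G$ permutes $\mF_\mD$.

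**Next I would prove balancedness class by class.** For $\mP_1$ there is nothing to check, since $0$ lies in every linear code, so $|(\mF_\mD)_0|=|\mF_\mD|$, giving $\mP_1(\mF_\mD)=|\mF_\mD|$ immediately. For two nonzero elements $x,y\in\mD\setminus\{0\}$, I would exhibit an $\F_Q$-linear $G$ with $G(\mD)=\mD$ and $G(x)=y$: since $\mD$ has dimension $t\ge 1$, one can extend $x$ and $y$ separately to bases of $\mD$ and then to bases of $X$, and the change-of-basis map sends $x\mapsto y$ while preserving $\mD$. This $G$ induces a bijection between $(\mF_\mD)_x$ and $(\mF_\mD)_y$, so $|(\mF_\mD)_x|$ is constant on $\mP_2$. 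For $x,y\in X\setminus\mD$, the analogous argument uses that $x,y$ are nonzero and lie outside $\mD$, so each together with a basis of $\mD$ forms a linearly independent set extendable to a basis of $X$, yielding a $\mD$-preserving $G$ with $G(x)=y$; this handles $\mP_3$. Thus $\mF_\mD$ is $\mP$-balanced.

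**Finally I would compute the two nontrivial invariants by double counting,** exactly as in the earlier propositions. Counting the set $\{(x,\mC)\in\mP_2\times\mF_\mD \mid x\in\mC\}$ in two ways gives
\[
(Q^t-1)\cdot\mP_2(\mF_\mD)=\sum_{\mC\in\mF_\mD}|\mC\cap(\mD\setminus\{0\})|,
\]
using $|\mP_2|=Q^t-1$ and the balancedness constant $\mP_2(\mF_\mD)$ on the left. Likewise, counting $\{(x,\mC)\in\mP_3\times\mF_\mD\mid x\in\mC\}$ gives
\[
(Q^N-Q^t)\cdot\mP_3(\mF_\mD)=\sum_{\mC\in\mF_\mD}|\mC\cap(X\setminus\mD)|,
\]
using $|\mP_3|=Q^N-Q^t$. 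Solving for $\mP_2(\mF_\mD)$ and $\mP_3(\mF_\mD)$ yields the stated formulas.

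**The main obstacle** I anticipate is the construction of the $\mD$-preserving automorphism $G$ sending $x$ to $y$ in the $\mP_3$ case: unlike in Proposition~\ref{examp1}, one must simultaneously fix the subspace $\mD$ and move a prescribed vector lying outside it, so care is needed to confirm that $\{x\}\cup(\text{basis of }\mD)$ and $\{y\}\cup(\text{basis of }\mD)$ are both linearly independent (which follows from $x,y\notin\mD$) and extend to bases of $X$. Once this linear-algebra lemma is in hand, the rest is the routine double counting that mirrors the proofs already given.
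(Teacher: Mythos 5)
Your proposal is correct and follows essentially the same route as the paper's proof: a $\mD$-preserving linear automorphism of $X$ carrying $x$ to $y$ establishes balancedness on each class (the paper spells this out only for $\mP_3$, while you also treat $\mP_2$ explicitly), and the invariants then follow by the same double-counting of pairs $(x,\mC)$. No gaps.
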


\begin{proof}
By definition, to see that $\mF_\mD$ is $\mP$-balanced we need to show that, for every class  $\mP_i$ of $\mP$ and for all $x,y \in \mP_i$, the number of codes $\mC \in \mF_\mD$ containing $x$ is the same as the number of codes $\mC \in \mF_\mD$ containing $y$. We only show this for $\mP_3$.

Let $x,y \in \mP_3$ be arbitrary. Fix a basis $(z_1,...,z_t)$ of $\mD$, and let $\mB_x=(z_1,...,z_t,x,z_{t+2},...,z_N)$ and 
 $\mB_x=(z_1,...,z_t,y,z'_{t+2},...,z'_N)$ be bases of $X$. Denote by $G:X \to X$ the unique $\F_Q$-linear isomorphism that sends
$\mB_x$ to $\mB_y$. Note that $G$ preserves $\mD$ and sends $x$ to $y$. Now let $\mC \in \mF_\mD$ be an arbitrary code that contains $x$. Since $\mC \cap \mD \neq \{0\}$ by assumption and $G$ is an isomorphism, we have
$G(\mC) \cap \mD=G(\mC) \cap G(\mD)=G(\mC \cap \mD) \neq \{0\}$. Thus $G(\mC) \in \mF_\mD$. Moreover, as $x \in \mC$, we have 
$y =G(x) \in G(\mC)$. All of this shows that $G$ induces a bijection between codes in $\mF_\mD$ containing $x$ and codes in $\mF_\mD$ containing $y$.
 
We can now compute the invariants of $(\mP,\mF_\mD)$. It is immediate that $\mP_1(\mF_\mD)=|\mF_\mD|$. To compute $\mP_2(\mF_\mD)$, it suffices to use the fact that $\mF$ is $\mP$-balanced, and double count the elements of the set $\{(x,\mC) \in \mP_2 \times \mF_\mD \mid \mC \in \mF_\mD, \ x \in \mC\}$. The value of $\mP_3(\mF_\mD)$ can be obtained similarly.
\end{proof}

\section{Density Functions and Their Asymptotics}\label{sec:densityfs}

We formally define {\em density functions} and what it means for a family to be {\em sparse} or {\em dense} within a larger family.
Such notions have been used for some decades in number theory~\cite{niven}. The following definition is most apt in the context of 
our work, namely describing the asymptotic behaviour of density functions of families of error-correcting codes.

\begin{definition}
Let $S \subseteq \N$ be an infinite subset of the natural numbers. Let 
$(\mF_s \mid s \in S)$
be a sequence of finite non-empty sets indexed by $S$, and let 
$(\mF'_s \mid s \in S)$ be a sequence of sets with $\mF_s' \subseteq \mF_s$ for all
$s \in S$. The \textbf{density function} $S \to \Q$ of $\mF'_s$ in $\mF_s$ is given by
$$s \mapsto |\mF'_s|/|\mF_s|.$$

When $\lim_{s \to +\infty} |\mF'_s|/|\mF_s|$ exists and equals $\delta$, then we say that
$\mF'_s$ has \textbf{density} $\delta$ in $\mF_s$. If $\mF'_s$ has density $0$ in $\mF_s$, then 
$\mF'_s$ is \textbf{sparse} in $\mF_s$. If $\mF'_s$ has density $1$ in $\mF_s$, then 
$\mF'_s$ is \textbf{dense} in $\mF_s$.
\end{definition}

To simplify the notation, throughout the paper the variable $s$ in $\mF_s$ and $\mF'_s$ is omitted when it is clear from the context. 
We remark that notions of lower density (the $\lim_{s \to +\infty} \inf$) and upper density (the $\lim_{s \to +\infty} \sup$) are also  used and appear in the literature, but are not required here.



\subsection{Asymptotic Estimation}

Since in several instances we will obtain estimates on density functions, we recall the standard notation used to describe the asymptotic growth of functions (see~\cite{CLRS} for example).

\begin{definition}
	Let $f$ be real-valued a function defined on an infinite domain $S\subseteq \N$.
	 We denote by 
	 $\mO(f)$, $\Omega(f)$ and $\Theta(f)$ the following sets.
	 \begin{eqnarray*}
	  \mO(f) & := & 
	  \{ g:S \to \R \mid \exists \ C \in \R_{>0}  \mbox{ and }  s_0 \in S \text{ with }  0 \leq g(s) \leq C  f(s) \ \forall \  s \geq s_0 \},\\
	  \Omega(f) & := &
	  \{g:S \to \R \mid \exists \ C \in \R_{>0}  \mbox{ and }  s_0 \in S \text{ with } 0 \leq C  f(s) \leq g(s) \ \forall \ s \geq s_0\},\\
	  \Theta(f)& := &\{g:S \to \R \mid  \exists \ C_1,C_2 \in \R_{>0} \mbox{ and } s_0 \in S  \text{ with } 0 \leq C_1  f(s) \leq g(s) \leq C_2  f(s) \ \forall \ s \geq s_0\}.
	 \end{eqnarray*}

\end{definition}
 If $f,g$ are functions of more than one variable, we will put in evidence the variable with respect to which the asymptotic estimate is made by writing expressions such as
$$f \in \Theta(g) \mbox{ as } s \to +\infty,$$where all the other variables are treated as constants.
We will also need the following fact.

\begin{lemma}\label{lem:polytheta}
	Let $p(s) \in \R[s]$ be a polynomial of degree $k$. Then $p(s) \in \Theta\left(s^k\right)$ as $s \to +\infty$, when $p$ is viewed as a function $p:S \to \R$ on an infinite subset $S \subseteq \N$. 
\end{lemma}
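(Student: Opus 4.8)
The plan is to reduce everything to the elementary limit $\lim_{s \to +\infty} p(s)/s^k = a_k$, where $a_k$ is the leading coefficient of $p$. Writing $p(s) = a_k s^k + a_{k-1} s^{k-1} + \cdots + a_0$ with $a_k \neq 0$, I would first observe that the definition of $\Theta(s^k)$ given above forces $p(s) \ge 0$ for large $s$ (since the constant $C_1$ is strictly positive and $s^k \ge 0$ for $s \in \N$); consequently the statement is to be understood for polynomials with positive leading coefficient $a_k > 0$, which is exactly the situation in which the lemma is later applied (ball sizes, expanded $q$-binomial coefficients, and similar nonnegative quantities). Under this convention I would proceed as follows.

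First I would divide by $s^k$ for $s > 0$, obtaining
\[
\frac{p(s)}{s^k} = a_k + \frac{a_{k-1}}{s} + \cdots + \frac{a_0}{s^k}.
\]
Each of the lower-order terms tends to $0$ as $s \to +\infty$ through $S$ (recall that $S$ is infinite, hence unbounded in $\N$, so the limit along $S$ is well defined), whence the whole expression converges to $a_k > 0$. Applying the definition of this limit with the choice $\varepsilon := a_k/2$, I would extract an index $s_0 \in S$ such that for every $s \in S$ with $s \ge s_0$ one has $a_k/2 < p(s)/s^k < 3a_k/2$.

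Finally, multiplying through by $s^k \ge 0$ and setting $C_1 := a_k/2$ together with $C_2 := 3a_k/2$ yields $0 \le C_1 s^k \le p(s) \le C_2 s^k$ for all $s \ge s_0$, which is precisely the condition defining $p \in \Theta(s^k)$. There is no serious obstacle here: the argument is entirely routine, and the only point requiring attention is the sign condition built into the definition of $\Theta$, namely ensuring $a_k > 0$ so that the required nonnegativity of $g = p$ is guaranteed for $s \ge s_0$.
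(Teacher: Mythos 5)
Your argument is correct and is exactly the standard computation the paper has in mind (the lemma is stated there without proof): dividing by $s^k$, passing to the limit $a_k$, and choosing $\varepsilon = a_k/2$ gives the required constants $C_1, C_2$. Your observation that the paper's definition of $\Theta$ forces eventual nonnegativity, hence implicitly $a_k>0$, is a legitimate and worthwhile clarification, consistent with every application of the lemma in the paper.
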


The next simple consequences of the previous lemma will be
particularly useful in the sequel. 

\begin{lemma}\label{lem:thetas}
  Let $a\geq b$ be non-negative integers. The following hold.
    \begin{enumerate} 
  	\item\label{cor:1} The $Q$-binomial coefficient of $a$ and $b$ is a polynomial in $Q$ of degree $b(a-b)$. In particular, 
$$\qbin{a}{b}{Q} \in \Theta \left(Q^{b(a-b)} \right) \mbox{  as } Q \to +\infty.$$
  	\item\label{cor:2} For all $A_0,...,A_{b-a-1} \in \R$ we have $A_0s^{-a}+A_{1}s^{-a-1}+\cdots+s^{-b} \in \Theta\left(s^{-b} \right)$ as $s \to +\infty$. 
  \end{enumerate}
\end{lemma}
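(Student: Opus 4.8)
The plan is to reduce both statements to Lemma~\ref{lem:polytheta}, which already pins down the $\Theta$-class of a polynomial as the power of its degree. In each case the only real work is to exhibit a polynomial of the correct degree whose leading coefficient does not vanish.

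For part~\ref{cor:1}, I would start from the product expression
$$\qbin{a}{b}{Q} = \frac{\prod_{i=0}^{b-1}(Q^{a-i}-1)}{\prod_{i=1}^{b}(Q^{i}-1)}.$$
The first task is to verify that this rational expression is in fact a polynomial in $Q$. The cleanest self-contained route is induction on $a$ via the $Q$-Pascal identity $\qbin{a}{b}{Q}=\qbin{a-1}{b-1}{Q}+Q^{b}\qbin{a-1}{b}{Q}$, with base cases $\qbin{a}{0}{Q}=\qbin{a}{a}{Q}=1$; each inductive step writes the coefficient as a sum of products of powers of $Q$ with smaller $Q$-binomials, hence a polynomial with nonnegative integer coefficients. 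Once polynomiality is established, the degree is read off the product formula: the numerator has degree $\sum_{i=0}^{b-1}(a-i)=ab-\binom{b}{2}$ and the denominator has degree $\sum_{i=1}^{b}i=\binom{b+1}{2}$, so the quotient has degree $ab-\binom{b}{2}-\binom{b+1}{2}=b(a-b)$. Since both leading coefficients equal $1$, the $Q$-binomial is monic of degree $b(a-b)$, and Lemma~\ref{lem:polytheta} yields $\qbin{a}{b}{Q}\in\Theta(Q^{b(a-b)})$ as $Q\to+\infty$.

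For part~\ref{cor:2}, I would clear denominators rather than estimate directly. Since $b\le a$, the term $s^{-b}$ is the dominant one and carries coefficient $1$. Multiplying the whole expression by $s^{a}$ produces a genuine polynomial $p(s)=A_0+A_1 s+\cdots+s^{a-b}$ of degree $a-b$ with leading coefficient $1\neq 0$. By Lemma~\ref{lem:polytheta}, $p(s)\in\Theta(s^{a-b})$ as $s\to+\infty$, i.e.\ there are constants $0<C_1\le C_2$ with $C_1 s^{a-b}\le p(s)\le C_2 s^{a-b}$ for all large $s$. Dividing through by $s^{a}>0$ gives $C_1 s^{-b}\le p(s)/s^{a}\le C_2 s^{-b}$, and since $p(s)/s^{a}$ is exactly the original expression, this proves membership in $\Theta(s^{-b})$.

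I expect no serious obstacle here, since both parts are essentially corollaries of the polynomial asymptotics already recorded. The two points that deserve care are the verification of polynomiality of the $Q$-binomial in part~\ref{cor:1} (without which Lemma~\ref{lem:polytheta} does not apply), and the observation that the fixed coefficient $1$ on the top-order term is what forces a genuine two-sided $\Theta$-estimate rather than a one-sided $\mO$-bound: an arbitrary choice of the free coefficients $A_j$ could otherwise cancel the leading behaviour, so it is essential that the dominant monomial is the one with prescribed coefficient.
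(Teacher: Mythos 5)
Your proof is correct and follows exactly the route the paper intends: the paper states Lemma~\ref{lem:thetas} without proof, as a ``simple consequence'' of Lemma~\ref{lem:polytheta}, and your argument---establishing polynomiality and the degree $b(a-b)$ of $\qbin{a}{b}{Q}$ via the $Q$-Pascal identity together with the product formula, then clearing denominators in part~(2)---is a correct fleshing-out of precisely that reduction. The only remark worth making is that you have (rightly) silently corrected the typos in the statement of part~(2), where the index range $A_0,\dots,A_{b-a-1}$ and the displayed exponents should read $A_0,\dots,A_{a-b-1}$ with exponents running from $-a$ up to $-b$, so that $s^{-b}$ is indeed the dominant term with fixed coefficient $1$.
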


Throughout the paper, we will apply both Lemma~\ref{lem:polytheta} and Lemma~\ref{lem:thetas} without explicitly referring to them.

\subsection{Some Preliminary Formul{\ae}}

We conclude this section on density functions and asymptotic estimation by establishing some technical results that will be needed later.

\begin{proposition} \label{pro:inters}
Let $\mD \le X$ be a fixed $\F_Q$-linear code of dimension $t$. For all $0 \le k \le N$ we have
\begin{equation*}
  |\{\mC \le X \mid \dim(\mC) =k, \ \mC \cap \mD \neq \{0\}\}| = \sum_{h=1}^t \qbin{t}{h}{Q} \sum_{s=h}^t \qbin{t-h}{s-h}{Q} \qbin{N-s}{N-k}{Q} (-1)^{s-h} Q^{\binom{s-h}{2}}.
\end{equation*}
\end{proposition}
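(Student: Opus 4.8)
The plan is to count the $k$-dimensional subspaces $\mC \le X$ meeting $\mD$ nontrivially by classifying them according to the subspace $W = \mC \cap \mD$ that they cut out in $\mD$. For a fixed subspace $W \le \mD$, write $f(W)$ for the number of $k$-dimensional $\mC \le X$ with $\mC \cap \mD = W$, and $g(W)$ for the number of $k$-dimensional $\mC \le X$ with $\mC \supseteq W$. A $k$-dimensional subspace containing a fixed $s$-dimensional $W$ is the same datum as a $(k-s)$-dimensional subspace of the quotient $X/W$, which has dimension $N-s$, so $g(W) = \qbin{N-s}{k-s}{Q} = \qbin{N-s}{N-k}{Q}$ depends only on $s = \dim W$. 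The left-hand side of the statement is then $\sum_{W \le \mD,\, W \neq \{0\}} f(W)$, since each $\mC$ with $\mC \cap \mD \neq \{0\}$ is counted exactly once, namely for $W = \mC \cap \mD$.

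The bridge between $f$ and $g$ is the observation that any $\mC \supseteq W$ has $\mC \cap \mD \supseteq W$, and $\mC \cap \mD$ is again a subspace of $\mD$; conversely each $\mC$ counted by $g(W)$ has exact intersection $\mC \cap \mD = W'$ for a unique $W'$ with $W \le W' \le \mD$. Partitioning the subspaces counted by $g(W)$ according to this exact intersection gives
$$g(W) = \sum_{W \le W' \le \mD} f(W').$$
I would then invert this relation over the lattice $L(\mD)$ of subspaces of $\mD$, summing upward toward $\mD$, by Möbius inversion:
$$f(W) = \sum_{W \le W' \le \mD} \mu(W,W')\, g(W').$$
Here $\mu$ is the Möbius function of the subspace lattice, and it is classical that for $W \le W'$ with $\dim W' - \dim W = s-h$ one has $\mu(W,W') = (-1)^{s-h} Q^{\binom{s-h}{2}}$.

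To finish, fix $\dim W = h$ and group the terms of the inversion formula by $s = \dim W'$. The subspaces $W'$ with $W \le W' \le \mD$ and $\dim W' = s$ correspond to the $(s-h)$-dimensional subspaces of the $(t-h)$-dimensional quotient $\mD/W$, so there are $\qbin{t-h}{s-h}{Q}$ of them, each contributing $(-1)^{s-h} Q^{\binom{s-h}{2}} \qbin{N-s}{N-k}{Q}$. Hence $f(W)$ depends only on $h$ and equals the inner sum $\sum_{s=h}^t \qbin{t-h}{s-h}{Q}\qbin{N-s}{N-k}{Q} (-1)^{s-h} Q^{\binom{s-h}{2}}$. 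Summing over the $\qbin{t}{h}{Q}$ choices of an $h$-dimensional $W \le \mD$ and over $1 \le h \le t$ then yields exactly the right-hand side. The only step requiring genuine care is fixing the correct orientation of Möbius inversion on $L(\mD)$ together with the precise value of its Möbius function; the rest is bookkeeping. Degenerate ranges, such as $s > k$ (where $\qbin{N-s}{N-k}{Q} = 0$) or $s > t$, are handled automatically by the vanishing of the relevant $Q$-binomials, so no case distinctions are needed.
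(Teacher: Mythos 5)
Your proposal is correct and follows essentially the same route as the paper: both arguments define $f$ and $g$ on the subspace lattice of $\mD$, observe that $g(W)=\qbin{N-s}{N-k}{Q}$ counts all $k$-dimensional spaces containing $W$, and recover $f$ by M\"obius inversion before summing over the nonzero subspaces of $\mD$. Your write-up is, if anything, slightly more explicit than the paper's in recording the value of the M\"obius function and the count $\qbin{t-h}{s-h}{Q}$ of intermediate subspaces.
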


\begin{proof}
Denote by 
$(\mL, \le)$ the lattice of $\F_Q$-linear subspaces of $\mD$. For any subspace $\mH \in \mL$ define 
$f(\mH):=|\{\mC \le X \mid \dim(\mC)=k, \ \mC  \cap \mD =\mH\}|$. Then for all
$\mH \in \mL$ we have 
$$g(\mH):= \sum_{\substack{\mH' \in \mL \\ \mH' \ge \mH}}f(\mH') = 
 |\{\mC \le X \mid \dim(\mC)=k, \ \mC \ge \mH\}| = \qbin{N-h}{N-k}{Q},$$
where $h:=\dim(\mH)$.
We now use M\"{o}bius inversion~\cite[Proposition 3.7.1]{ec} in the lattice $(\mL,\le)$ and obtain, for all $\mH \in \mL$,
$$f(\mH)= \sum_{\substack{\mH' \in \mL \\ \mH' \ge \mH}} g(\mH') \ \mu(\mH,\mH'),$$
where $\mu$ denotes the M\"{o}bius function of $\mL$. Therefore for all
$\mH \in \mL$ of dimension $h$ we have
\begin{equation} \label{intermediate}
f(\mH) = \sum_{s=h}^t \sum_{\substack{\mH' \in \mL \\ \mH' \ge \mH \\ \dim(\mH')=s}} \qbin{N-s}{N-k}{Q} (-1)^{s-h} Q^{\binom{s-h}{2}} = \ \sum_{s=h}^t \qbin{t-h}{s-h}{Q} \qbin{N-s}{N-k}{Q} (-1)^{s-h} Q^{\binom{s-h}{2}}.
\end{equation}
The formula in the statement now follows from (\ref{intermediate}) and the fact that
\begin{equation*}
|\{\mC \le X  \mid \dim(\mC)=k, \ \mC \cap \mD \neq \{0\}\}| = \sum_{h=1}^t \sum_{\substack{\mH \in \mL \\ \dim(\mH)=h}} f(\mH). \qedhere
\end{equation*}
\end{proof}

Since the quantity  $|\{\mC \le X \mid \dim(\mC) =k, \ \mC \cap \mD \neq \{0\}\}|$ will arise a number of times in our results, we introduce the following notation. 
\begin{notation} \label{notaz:lambda}
	For non-negative integers $N$, $k$, $t$ and a prime power $Q$,  let
	$$\Lambda(Q;N,t,k):=\sum_{h=1}^t \qbin{t}{h}{Q} \sum_{s=h}^t \qbin{t-h}{s-h}{Q} \qbin{N-s}{N-k}{Q} (-1)^{s-h} Q^{\binom{s-h}{2}}.$$
\end{notation}

We can now give a precise asymptotic estimate of 
$\Lambda(Q;N,t,k)$ as $Q$ grows.
\begin{proposition}\label{prop:asymp}
	Let $N,k,t$ be non-negative integers satisfying $t+k -1 \leq N$. Then
	$$\lim_{Q \to +\infty} \frac{\Lambda(Q;N,t,k)}{Q^{t-1+ (k-1)(N-k)}}=1.$$
	In particular,
	$$\Lambda(Q;N,t,k)\in \Theta \left(Q^{t-1+ (k-1)(N-k)} \right) \mbox{ as } Q \to +\infty.$$
\end{proposition}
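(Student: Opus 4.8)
The plan is to regard $\Lambda(Q;N,t,k)$ as a polynomial in $Q$ and to pin down its leading term. By the first part of Lemma~\ref{lem:thetas}, each Gaussian binomial $\qbin{a}{b}{Q}$ is a \emph{monic} polynomial in $Q$ of degree $b(a-b)$. Hence each summand of $\Lambda$ indexed by a pair $(h,s)$ with $1 \le h \le s \le t$ is, up to the sign $(-1)^{s-h}$, a monic polynomial in $Q$ whose degree is
\[
D(h,s) = h(t-h) + (s-h)(t-s) + (N-k)(k-s) + \binom{s-h}{2}.
\]
Since a finite sum of polynomials has degree and leading coefficient governed by its terms of maximal degree, it suffices to show that $D(h,s)$ attains its maximum over the index range \emph{uniquely} at $(h,s)=(1,1)$, and then to evaluate the corresponding summand.

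To carry out the maximization, I would substitute $j := s-h \ge 0$, so that $0 \le j \le t-h$ and
\[
D(h,j) = -\tfrac{1}{2}\,j^2 + \bigl(t - h - (N-k) - \tfrac12\bigr)\,j + \bigl(ht - h^2 + (N-k)(k-h)\bigr).
\]
For fixed $h$ this is a downward parabola in $j$ whose vertex lies at $j^\ast = t - h - (N-k) - \tfrac12$. The hypothesis $t+k-1 \le N$ gives $N-k \ge t-1$, whence $j^\ast \le -\tfrac12 < 0$ for every $h \ge 1$; consequently the maximum over $j \ge 0$ is attained uniquely at $j=0$, i.e.\ at $s=h$. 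Restricting to $j=0$ leaves $D(h,0) = -h^2 + (t-(N-k))\,h + (N-k)k$, again a downward parabola, now in $h$, with vertex at $h^\ast = (t-(N-k))/2 \le \tfrac12$. Thus $D(h,0)$ is strictly decreasing for $h \ge 1$, and its maximum over $1 \le h \le t$ is attained uniquely at $h=1$. Combining the two steps, $D(h,s)$ is maximized uniquely at $(h,s)=(1,1)$, with value $D(1,1) = (t-1) + (k-1)(N-k)$.

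It then remains to read off the leading coefficient. At $(h,s)=(1,1)$ the summand equals $\qbin{t}{1}{Q}\,\qbin{N-1}{N-k}{Q}$, since $\qbin{t-1}{0}{Q}=1$, $Q^{\binom{0}{2}}=1$ and $(-1)^{0}=1$. A product of monic polynomials is monic, so this term is a monic polynomial in $Q$ of degree $t-1+(k-1)(N-k)$. As every other summand has strictly smaller degree, no cancellation can affect the top-degree coefficient, and therefore $\Lambda(Q;N,t,k)$ is itself monic of degree $t-1+(k-1)(N-k)$. The stated limit and the resulting $\Theta$-estimate follow at once. The main point requiring care is the two-variable optimization of $D(h,s)$: in particular, verifying that the hypothesis $t+k-1\le N$ forces both parabola vertices strictly to the left of the feasible region, so that the maximizing index is \emph{unique} and the leading coefficient is exactly the monic one coming from the single pair $(1,1)$ (rather than being an accidental sum of several top-degree contributions).
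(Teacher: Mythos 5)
Your proof is correct, but it takes a genuinely different route from the paper's. The paper reindexes the inner sum, replaces $\qbin{N-s-h}{N-k}{Q}$ by its leading term $Q^{(k-s-h)(N-k)}$, and then recognizes the resulting alternating inner sum as the $q$-binomial product $\prod_{s=0}^{t-h-1}\left(1-Q^{k-N+s}\right)$, which is $\Theta(1)$ with limit of leading behaviour equal to $1$ under the hypothesis $t+k-1\le N$; only the outer sum over $h$ then needs to be maximized, and the maximum occurs at $h=1$. You never invoke the $q$-binomial theorem: you record the exact degree $D(h,s)$ of each signed monic summand and carry out a two-variable discrete optimization, showing by two nested downward-parabola arguments (both vertices pushed left of the feasible region precisely by $N-k\ge t-1$) that $D$ is uniquely maximized at $(h,s)=(1,1)$ with value $t-1+(k-1)(N-k)$. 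Your version is more elementary, and it is arguably cleaner on one point: since the maximizing index is unique and the corresponding summand $\qbin{t}{1}{Q}\qbin{N-1}{N-k}{Q}$ is monic with sign $+1$, no cancellation among top-degree terms can occur, whereas the paper's substitution of leading terms inside an alternating sum needs the product identity to rule out such cancellation. What the paper's route buys in exchange is an exact closed form for the inner sum, not just its order of growth. One small caveat: Lemma~\ref{lem:thetas} as stated only asserts that $\qbin{a}{b}{Q}$ is a polynomial in $Q$ of degree $b(a-b)$, not that it is monic; monicity is standard (and the paper uses it implicitly when writing leading terms), but you should justify it rather than attribute it to that lemma. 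Terms with $s>k$ have $\qbin{N-s}{N-k}{Q}=0$ and so do not disturb the degree count.
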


\begin{proof}
 Observe that 
 $$
\Lambda(q;N,t,k) = \sum_{h=1}^{t}\qbin{t}{h}{Q} \sum_{s=0}^{t-h}\qbin{N-s-h}{N-k}{Q}\qbin{t-h}{s}{Q}(-1)^{s}Q^{\binom{s}{2}}. 
$$
For a polynomial $p$ in indeterminate $Q$, write ${\mathrm{Lt}}(p)$ to denote its leading term.
We have  
$${\mathrm{Lt}}\left(\qbin{N-s-h}{N-k}{Q} \right) = Q^{(k-s-h)(N-k)},$$
hence
\begin{eqnarray*}
	{\mathrm{Lt}}(\Lambda(Q;N,t,k)) = {\mathrm{Lt}}\left(\sum_{h=1}^{t}\qbin{t}{h}{Q} Q^{(N-k)(k-h)}\sum_{s=0}^{t-h}\qbin{t-h}{s}{Q}(-1)^{s}Q^{\binom{s}{2}}Q^{(k-N)s}\right).
\end{eqnarray*}
The inner sum can be expressed as:
\begin{eqnarray*}
	\sum_{s=0}^{t-h}\qbin{t-h}{s}{Q}(-1)^{s}Q^{\binom{s}{2}}Q^{(k-N)s} =\prod_{s=0}^{t-h-1}\left(1-Q^{k-N+s}\right),
\end{eqnarray*}
which is in $\Theta(1)$ as $Q \to +\infty$ for each value of $h$ satisfying $t+k-h \leq N$, and contributes $1$ in the product of terms yielding the leading term of $\Lambda(Q;N,t,k)$.
Moreover, 
$${\mathrm{Lt}}\left(\qbin{t}{h}{Q} Q^{(N-k)(k-h)}\right) =Q^{h(t-h-N+k)+k(N-k)} .$$ 
It is easy to check that for $h \in \{1,...,t\}$ and $t+k-N \leq h$, the quantity $Q^{h(t+k-N-h)+k(N-k)}$ attains its maximum value at $h=1$, and hence $${\mathrm{Lt}}\left(\sum_{h=1}^{t}\qbin{t}{h}{Q} Q^{(N-k)(k-h)} \right) =Q^{t+k-N-1 + k(N-k)} =Q^{t-1+(k-1)(N-k)}.$$
The proposition follows.
\end{proof}

\section{Distance-Regular Spaces from Coding Theory}\label{sec:dist-reg}

In this section we briefly describe three distance-regular spaces in coding theory, namely, the Hamming space,
the matrix rank metric space, and the vector rank metric space. 
We also provide asymptotic estimations for some of the parameters associated with these spaces.

\begin{notation} \label{mainnotaz}
Throughout the paper, $q$ denotes a prime power, and $n$, $m$ are integers that satisfy 
$2 \le n \le m$. The results that we will obtain for the general $Q$-ary space $(X,d)$ of dimension $N \ge 2$ will be applied substituting $Q=q$, $Q=q^m$, $N=n$, $N=m$, or $N=mn$ depending on the context.
\end{notation}

\subsection{Hamming Space}
Denote by 
$\dH$ the Hamming distance on $\F_q^n$. Then $(\F_q^n,\dH)$
is a $q$-ary distance-regular space, called the \textbf{Hamming space}. The weight induced by $\dH$ is denoted by $\wH$.
We use the symbol $\rhoH$ for the Hamming-metric covering radius.
The linear codes in $\F_q^n$ are the \textbf{block codes}.
We write that $\mC$ is an $[n,k]_q$ code to say that $\mC$ is an $\F_q$-linear code in $\F_q^n$ of dimension $k$.
For all $r \in \N$, the size of the ball of radius $r$ in the Hamming space is estimated to be 
\begin{equation}\label{estimate_ball_H}
\bH(r)= \sum_{i=0}^r \binom{n}{i} (q-1)^i  \ \in 
          \Theta \left(q^r \right)  \mbox{ as } q \to +\infty.
\end{equation}
 
The following upper bounds for the covering radius and minimum distance of a linear code with the Hamming metric are well known.  They are called the \textit{redundancy bound} and the 
\textit{Singleton bound}, respectively. See \cite[Corollary 11.1.3]{hupless} and \cite[Theorem 2.4.1]{hupless} respectively.

\begin{proposition} \label{Singleton_redun}
Let $\mC \le \F_q^n$ be an $[n,k]_q$ code.
Then $\rhoH(\mC) \le n-k$ and if 
$k \ge 1$, then
$\dH(\mC) \le n-k+1$. 
\end{proposition}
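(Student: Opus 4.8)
The plan is to prove the two inequalities separately, each by an elementary linear-algebra argument; neither requires the partition-balanced machinery developed earlier, since these bounds are only recalled here for later comparison.

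For the Singleton bound I would argue by puncturing. Set $d := \dH(\mC)$ and note $d \le n$, since the Hamming weight on $\F_q^n$ is bounded by $n$, while $k \ge 1$ guarantees $\mC \neq \{0\}$ so that $d$ is finite. Consider the $\F_q$-linear projection $\pi : \F_q^n \to \F_q^{n-d+1}$ that deletes the last $d-1$ coordinates. The key claim is that the restriction of $\pi$ to $\mC$ is injective: if $\pi(c)=\pi(c')$ for $c,c' \in \mC$, then $c-c' \in \mC$ is supported on the final $d-1$ coordinates, so $\wH(c-c') \le d-1 < d$, which by definition of $d$ forces $c=c'$. Injectivity yields $q^k = |\mC| \le |\F_q^{n-d+1}| = q^{n-d+1}$, hence $k \le n-d+1$, i.e. $\dH(\mC) \le n-k+1$.

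For the redundancy bound I would work with a parity-check matrix, viewing elements of $\F_q^n$ as column vectors. If $k=n$ then $\mC = \F_q^n$ and $\rhoH(\mC)=0=n-k$, so assume $k<n$ and let $H \in \F_q^{(n-k)\times n}$ be a full-rank matrix with $\mC = \{x \mid Hx = 0\}$. Fix any $x \in \F_q^n$ and set $\sigma := Hx \in \F_q^{n-k}$. Since $\rk(H)=n-k$, the columns of $H$ span $\F_q^{n-k}$, so one can choose $n-k$ columns forming a basis and write $\sigma$ as an $\F_q$-linear combination of them. Placing these coefficients in the corresponding coordinate positions and zeros elsewhere produces a vector $e \in \F_q^n$ with $He = \sigma$ and support of size at most $n-k$, so $\wH(e) \le n-k$. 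Then $H(x-e) = \sigma - \sigma = 0$, hence $x-e \in \mC$ and $\dH(x,x-e) = \wH(e) \le n-k$. As $x$ was arbitrary, every point of $\F_q^n$ lies within distance $n-k$ of a codeword, giving $\rhoH(\mC) \le n-k$.

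I do not anticipate a genuine obstacle here, as both statements are classical. The only points that need care are the edge cases already flagged, namely ensuring $d \le n$ so that the projection target $\F_q^{n-d+1}$ is meaningful in the Singleton argument, and handling $k=n$ separately in the redundancy argument so that a full-rank parity-check matrix exists. The mildly delicate step is the column-selection (equivalently, coset-leader) construction for the covering radius: one must phrase cleanly why a basis of columns lets every syndrome be realized by an error vector of weight at most $n-k$.
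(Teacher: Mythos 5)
Your proof is correct. Note that the paper does not actually prove this proposition: it is stated as a known result with references to Huffman--Pless (\cite[Corollary 11.1.3]{hupless} for the redundancy bound and \cite[Theorem 2.4.1]{hupless} for the Singleton bound). Your two arguments --- injectivity of the projection deleting the last $d-1$ coordinates for the Singleton bound, and the syndrome/coset-leader construction from a column basis of a full-rank parity-check matrix for the covering radius --- are exactly the standard textbook proofs being cited, and you handle the relevant edge cases ($d \le n$, $k=n$) correctly.
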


An $[n,k]_q$ code $\mC \le \F_q^n$ with $k \ge 1$ and minimum distance
$\dH(\mC)=n-k+1$ is called \textbf{MDS}.
It is known that a $k$-dimensional MDS code $\mC \le \F_q^n$ exists for all $1 \le k \le n$ whenever $q \ge n-1$. Moreover, the weight distribution of an MDS $[n,k]_q$ code is uniquely determined. See for example~\cite[Chapter 11]{macws}.

\begin{remark} \label{MDSbalanced}
For all $1 \le k \le n$, the family of $k$-dimensional MDS codes
$\mC \le \F_q^n$ is $\mP(\wH)$-balanced. This is easy to see. If $x,y \in \F_{q}^n$ have the same Hamming weight, then there is a monomial
transformation $\sigma:\F_q^n \to \F_q^n$ taking $x$ to $y$. This induces a bijection on the $\fq$-linear codes containing $x$ and those containing $y$.
In particular, $\sigma$, being a Hamming distance isometry, maps an MDS $[n,k]_q$ code $\mC$ containing $x$ to the equivalent MDS code $\sigma(\mC)$ containing $y$.
\end{remark}

We conclude this subsection on the Hamming space 
by giving explicit formul{\ae}  for its intersection numbers.
These expressions are well-known (see \cite[Chapter 21]{macws} for $q=2$). 
We also include asymptotic estimations for such numbers.

Let $i,j,k$ be integers satisfying   $0 \le i,j,k \le n$.
	The intersection numbers of the Hamming space  $\F_{q}^n$ are given by 
	\begin{eqnarray}\label{eqpijkham}
		\pH{i}{j}{k} & = & \sum_{r\geq 0}\binom{i}{r}\binom{n-i}{n-k-r}\binom{i-r}{j+i-k-2r}(q-1)^{k-i+r}(q-2)^{j+i-k-2r}\\
		& = & \sum_{r\geq 0}^{\min\{i,j,n-k,(j+i-k)/2\}}\theta(n,i,j,k,r)(q^{j-r} + \cdots), \nonumber
	\end{eqnarray}
	where 
	\begin{equation} \label{exptheta}
	\theta(n,i,j,k,r):=\binom{i}{r}\binom{n-i}{n-k-r}\binom{i-r}{j+i-k-2r}=\binom{i}{r}\binom{n-i}{k-i+r}\binom{i-r}{k-j+r}.
	\end{equation}

\begin{lemma}\label{lem:thetapos}
	Let $0 \le r, i,j,k \le n$ be integers. Then the expression 
	$\theta(n,i,j,k,r)$ in (\ref{exptheta}) is positive if and only if the following three inequalities hold:
	$$0 \leq r \leq i, \qquad i-k \leq r \leq n-k, \qquad j-k \leq r \leq (i+j-k)/2.$$
In particular $\pH{i}{j}{k} = 0 $ if $i > j+k$ or $j>i+k$ or $k > i+j$.
\end{lemma}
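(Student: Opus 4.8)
The plan is to use that $\theta(n,i,j,k,r)$ is a product of three ordinary binomial coefficients, each a non-negative integer, so that the product is positive if and only if each factor is positive; recall that $\binom{a}{b}$ with $a\ge 0$ is positive precisely when $0\le b\le a$. I would therefore impose this two-sided inequality on each factor separately and read off the constraints on $r$.

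For the first factor $\binom{i}{r}$, positivity is equivalent to $0\le r\le i$ (using $i\ge 0$). For the second factor I would use the form $\binom{n-i}{n-k-r}$: since $i\le n$ gives $n-i\ge 0$, it is positive iff $0\le n-k-r\le n-i$, which rearranges to $i-k\le r\le n-k$. For the third factor I would use $\binom{i-r}{j+i-k-2r}$; the first factor's range already guarantees $i-r\ge 0$, so positivity is equivalent to $0\le j+i-k-2r\le i-r$, which rearranges to $j-k\le r\le (i+j-k)/2$ (reading the upper bound as $2r\le i+j-k$). Conjoining the three pairs of inequalities yields exactly the characterization in the statement.

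For the ``in particular'' assertion, I would first observe that whenever $\theta(n,i,j,k,r)>0$ the exponents $k-i+r$ and $j+i-k-2r$ occurring in (\ref{eqpijkham}) are non-negative, being forced so by the bounds $r\ge i-k$ and $2r\le i+j-k$ established above. Hence every summand of $\pH{i}{j}{k}$ is a non-negative product of a binomial coefficient with non-negative powers of $q-1$ and $q-2$, so $\pH{i}{j}{k}=0$ as soon as $\theta(n,i,j,k,r)=0$ for every $r$. It then remains to check that each hypothesis empties the admissible range of $r$: combining $r\ge i-k$ with $2r\le i+j-k$ forces $i\le j+k$; combining $r\ge j-k$ with $r\le i$ forces $j\le i+k$; and combining $r\ge 0$ with $2r\le i+j-k$ forces $k\le i+j$. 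Contrapositively, if $i>j+k$, or $j>i+k$, or $k>i+j$, then no $r$ yields $\theta(n,i,j,k,r)>0$, and therefore $\pH{i}{j}{k}=0$.

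The argument is entirely elementary, so I do not expect a genuine obstacle; the only points requiring care are selecting the convenient one of the two equivalent expressions in (\ref{exptheta}) for each factor, noting that $i-r\ge 0$ (needed for the third binomial coefficient to obey the usual positivity criterion) follows automatically from the first factor's range, and handling the half-integer bound $(i+j-k)/2$ through the integer inequality $2r\le i+j-k$.
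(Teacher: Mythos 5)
Your proof is correct and follows the same route as the paper, which simply asserts that the equivalence "can be seen by inspection of the binomial factors" and that the emptiness of the admissible range of $r$ under each of the three hypotheses "can be checked"; you have written out exactly those inspections and checks. The only added content is your (correct and harmless) remark that the exponents of $q-1$ and $q-2$ are non-negative whenever $\theta(n,i,j,k,r)>0$, which is not needed for the stated implication since each summand of $\pH{i}{j}{k}$ carries $\theta(n,i,j,k,r)$ as a factor.
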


\begin{proof}
	The fact that $\theta(n,i,j,k,r)$ is positive if and only if $r$ satisfies the inequalities shown can be seen by inspection of its binomial factors.
	The value $\pH{i}{j}{k}$ is zero if and only if $\theta(n,i,j,k,r)=0$ for all $r$ satisfying $0 \leq r \leq i,j,n-k,(j+i-k)/2$. It can be checked that this occurs if $i > j+k$ or $j>i+k$ or $k > i+j$.	
\end{proof}

We now derive asymptotic estimates of the intersection numbers of the Hamming space as the field size $q$ grows.

\begin{proposition}\label{lem:pijkasym}
	Let $0 \le i,j,k \le n$ be integers such that intersection number
	 $\pH{i}{j}{k}$ for the Hamming space $\F_{q}^n$ is positive. 
	Let $r_0 = \min\{r \in \N \mid \theta(n,i,j,k,r)>0 \}$. Then
	$$\lim_{q \to +\infty} \pH{i}{j}{k} = \theta(n,i,j,k,r_0)q^{j-r_0}. $$
	In particular,
	as $q \to +\infty$,
	$$\pH{i}{j}{k} \in \left\{
	\begin{array}{ll}
	\Theta \left(q^j\right) & \text{ if } 0 \leq i,j \leq k \leq i+j, \\
	\Theta \left(q^{j-i+k}\right)& \text{ if } 0\leq j,k\leq i\leq j+k , \\
	\Theta \left(q^k\right) & \text{ if } 0\leq i,k \leq j \leq i+k.\\
	\end{array}
	\right.$$
\end{proposition}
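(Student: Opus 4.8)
The plan is to read the asymptotics directly off the polynomial expansion (\ref{eqpijkham}), treating $\pH{i}{j}{k}$ as a polynomial in $q$. First I would note that the factor $(q-1)^{k-i+r}(q-2)^{j+i-k-2r}$ in the $r$-th summand is a monic polynomial in $q$ of degree $(k-i+r)+(j+i-k-2r)=j-r$. Consequently each summand $\theta(n,i,j,k,r)(q-1)^{k-i+r}(q-2)^{j+i-k-2r}$ is a polynomial in $q$ with leading term $\theta(n,i,j,k,r)\,q^{j-r}$, whose coefficient is nonnegative by Lemma~\ref{lem:thetapos}. Summing over $r$ thus presents $\pH{i}{j}{k}$ explicitly as a polynomial in $q$.

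Next I would isolate its leading term. Since the $r$-th summand has degree $j-r$, and a strictly positive leading coefficient exactly when $\theta(n,i,j,k,r)>0$, the dominant contribution as $q\to+\infty$ comes from the \emph{smallest} index $r$ with $\theta(n,i,j,k,r)>0$, which is precisely $r_0$. Here I would invoke Lemma~\ref{lem:thetapos}: the indices $r$ for which $\theta(n,i,j,k,r)>0$ form the integer interval with left endpoint $\max\{0,\,i-k,\,j-k\}$, and the standing hypothesis that $\pH{i}{j}{k}$ is positive guarantees this interval is nonempty. Hence $r_0=\max\{0,\,i-k,\,j-k\}$, the leading term of $\pH{i}{j}{k}$ equals $\theta(n,i,j,k,r_0)\,q^{j-r_0}$, and the claimed limit follows at once; the estimate $\pH{i}{j}{k}\in\Theta(q^{j-r_0})$ then comes from Lemma~\ref{lem:polytheta}. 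I expect this identification of $r_0$ with $\max\{0,i-k,j-k\}$ to be the main (though modest) obstacle: it is where one must rely on the positivity hypothesis to know the admissible interval is nonempty, rather than checking the upper-bound constraints $r\le\min\{i,\,n-k,\,(i+j-k)/2\}$ by hand.

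Finally I would translate $j-r_0$ into the three stated regimes. By Lemma~\ref{lem:thetapos}, positivity of $\pH{i}{j}{k}$ is equivalent to the triangle inequalities $i\le j+k$, $j\le i+k$, $k\le i+j$, so exactly one of $i,j,k$ is maximal and we split into three cases accordingly. If $k$ is the maximum (the regime $0\le i,j\le k\le i+j$), then $i-k\le 0$ and $j-k\le 0$, so $r_0=0$ and $j-r_0=j$. If $i$ is the maximum (the regime $0\le j,k\le i\le j+k$), then $i-k\ge 0$ and $i-k\ge j-k$, so $r_0=i-k$ and $j-r_0=j-i+k$. If $j$ is the maximum (the regime $0\le i,k\le j\le i+k$), then $j-k\ge 0$ and $j-k\ge i-k$, so $r_0=j-k$ and $j-r_0=k$. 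Substituting these values of $j-r_0$ into $\Theta(q^{j-r_0})$ yields the three estimates, completing the argument.
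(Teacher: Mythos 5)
Your proposal is correct and follows essentially the same route as the paper's own proof: both read the leading term off the polynomial expansion (\ref{eqpijkham}), identify $r_0=\max\{0,i-k,j-k\}$ via Lemma~\ref{lem:thetapos} together with the positivity hypothesis, and split into the three regimes according to which of $i,j,k$ is largest. The only cosmetic difference is that you argue the admissible $r$ form an interval whose left endpoint is attained, whereas the paper verifies $\theta(n,i,j,k,r)>0$ at $r=\max\{0,i-k,j-k\}$ separately in each of the three cases; these amount to the same check.
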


\begin{proof}
	If there exists some non-negative $r \leq i,n-k,(i+j-k)/2$ such that $\theta(n,i,j,k,r)>0$, then $\pH{i}{j}{k}>0$, and  
	the leading term in $q$ of 
	$$\pH{i}{j}{k} = \sum_{r\geq 0}^{\min\{i,j,n-k,(j+i-k)/2\}}\theta(n,i,j,k,r) (q-1)^{k-i+r}(q-2)^{j+i-k-2r}$$
	is $\theta(n,i,j,k,r_0)q^{j-r_0}$.
	Let $r=\max\{ 0,i-k,j-k\}$ and suppose that $\pH{i}{j}{k}$ is non-zero. Then from Lemma~\ref{lem:thetapos} it holds
	that $k\leq i+j,j \leq i+k, i \leq j+k$. 
    We claim that $r_0 =r$. Clearly $r_0 \leq r$ since otherwise at least one of inequalities in the statement of Lemma~\ref{lem:thetapos} does not hold.
    If $r=0$ then $i,j \leq k$ and $\theta(n,i,j,k,0)>0$ only if $k\leq i+j$, so $r=r_0$ and $j-r_0=j$.
    If $r=i-k$ then $j,k \leq i$ and $\theta(n,i,j,k,i-k)>0$ only if $i\leq j+k$, so $r=r_0$ and $j-r_0 = j-i+k$.
    If $r=j-k$ then $i,k \leq j$ and $\theta(n,i,j,k,j-k)>0$ only if $j\leq i+k$, so $r=r_0$ and $j-r_0 = k$.
\end{proof}

\subsection{Matrix Rank-Metric Space}
The \textbf{rank distance} between matrices $x,y \in \F_q^{n \times m}$ is defined to be
$\drk(x,y):=\rk(x-y)$. Then $(\F_q^{n \times m},\drk)$ is a $q$-ary distance-regular space, called the (\textbf{matrix}) \textbf{rank metric} \textbf{space}. We use the symbol $\rhork$ for the rank metric covering radius.
A code in $\F_q^{n \times m}$ is called a (\textbf{matrix}) \textbf{rank metric code}.  
Clearly, the weight induced by 
$\drk$ is matrix rank.
The size of the ball of radius $r \in \N$ in $\mat$ is given by
\begin{equation} \label{estimate_ball_rk}
\brk(r) = \sum_{i=0}^{r} \qbin{n}{i}{q}\prod_{j=0}^{i-1}(q^m-q^j) \ \in \left\{ \begin{array}{ll}
          \Theta \left(q^{r(n+m-r)}\right) & \mbox{ as } q \to +\infty, \\
          \Theta \left(q^{rm}\right)   & \mbox{ as } m \to +\infty.
\end{array}\right.
\end{equation}

There exists a \textit{Singleton-type bound} also for rank metric codes. See \cite[Theorem 5.4]{del} for a proof using association schemes, or \cite[Section 3]{costch} for a linear algebra proof.

\begin{proposition} \label{rankSing}
The dimension of a matrix code in $\mat$ with minimum rank distance $d$ is at most $m(n-d+1)$.
\end{proposition}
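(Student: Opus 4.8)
The plan is to establish the bound by a puncturing argument, exactly mirroring the classical proof of the Singleton bound for the Hamming metric recalled in Proposition~\ref{Singleton_redun}. Let $\mC \le \mat$ be a code of minimum rank distance $d$ (the statement concerns the dimension, so $\mC$ is $\F_q$-linear). The first step is to introduce the $\F_q$-linear projection $\pi : \mat \to \F_q^{(n-d+1) \times m}$ sending a matrix to the submatrix formed by its first $n-d+1$ rows, i.e.\ deleting the last $d-1$ rows. Since $\pi$ is $\F_q$-linear, $\pi(\mC)$ is a subspace of $\F_q^{(n-d+1)\times m}$, a space of dimension $m(n-d+1)$.

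The crucial step is to show that $\pi$ is injective on $\mC$. For this I would invoke the elementary inequality that the rank of a matrix is at most the number of its nonzero rows. Suppose $x \in \mC$ lies in $\ker \pi$, so that the first $n-d+1$ rows of $x$ vanish; then $x$ has at most $d-1$ nonzero rows and hence $\rk(x) \le d-1$. On the other hand, since $\mC$ has minimum rank distance $d$, linearity forces every nonzero codeword $y$ to satisfy $\rk(y) = \drk(y,0) \ge d$. Thus $x$ must be the zero matrix, proving $\ker \pi \cap \mC = \{0\}$, so that $\pi|_{\mC}$ is injective.

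Combining the two steps gives $\dim(\mC) = \dim(\pi(\mC)) \le m(n-d+1)$, as required. I expect no genuine obstacle here: the only substantive ingredient is the rank-versus-nonzero-rows inequality, and the sole point requiring care is the choice to delete rows rather than columns. Deleting $d-1$ columns would, by the same reasoning, yield the bound $\dim(\mC) \le n(m-d+1)$, which is weaker since $n \le m$ forces $n(m-d+1) \ge m(n-d+1)$; puncturing rows is what produces the sharp constant. Finally, I note that the argument in fact bounds any (possibly nonlinear) code by $|\mC| \le q^{m(n-d+1)}$, since $\pi$ is injective on $\mC$ merely as a set of matrices, with the linear statement being the special case obtained by taking $\F_q$-dimensions.
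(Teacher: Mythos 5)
Your argument is correct. Note that the paper itself does not supply a proof of this proposition: it points to Delsarte's association-scheme proof in \cite[Theorem 5.4]{del} and to the linear-algebra proof in \cite[Section 3]{costch}. Your puncturing argument --- projecting onto the first $n-d+1$ rows, observing that a matrix supported on $d-1$ rows has rank at most $d-1$, and concluding that the projection is injective on $\mC$ --- is essentially that linear-algebra proof, and your remarks that rows (not columns) must be deleted to get the sharp constant when $n \le m$, and that the same injectivity gives $|\mC| \le q^{m(n-d+1)}$ for nonlinear codes, are both accurate.
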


A rank metric code $\mC \le \F_q^{n \times m}$  of minimum rank distance $d$ and dimension
$\dim(\mC)=m(n-d+1)$ is called a {\bf maximum rank distance} code and we say that $\mC$ is an \textbf{MRD} code. 
It is known (see~\cite[Section 6]{del}) that for every $1 \le d \le n$ there exists an MRD code $\mC \le \F_q^{n \times m}$ of  minimum distance  $d$. Notice that we always assume $m \ge n$. See Notation~\ref{mainnotaz}.

\begin{remark}
For all $1 \le d \le n$, the family of MRD codes
$\mC \le \F_q^{n \times m}$ of minimum rank distance $d$ is $\mP(\rk)$-invariant.
This can be seen by a very similar argument as given in Remark~\ref{MDSbalanced}. 
If $x,y \in \F_q^{n \times m}$ have the same rank over $\fq$, then there exist invertible $\F_{q}$-matrices $A,B$
satisfying $AxB =y$. Then multiplication by $A$ and $B$ induces a bijection between the $\fq$-linear codes containing $x$ and those containing $y$,
which moreover preserves the MRD property, being an $\fq$-linear isometry of $\fq^{m \times n}$.   
\end{remark}

\subsection{Vector Rank-Metric Space}

Define the $\textbf{rank weight}$ $\wrk(x)$ of a vector $x \in \F_{q^m}^n$ as the dimension over $\F_q$ of the subspace generated by its components. The \textbf{rank distance}
between vectors 
$x,y \in \F_{q^m}^n$ is $\drk(x-y)$. Then the pair
$(\F_{q^m}^n,\drk)$ is a $q^m$-ary distance regular space, called the (\textbf{vector}) \textbf{rank metric} \textbf{space}
(see~\cite{gabid} for further details).
The codes in $\F_{q^m}^n$ are the \textbf{vector rank metric codes}. 
We write that $\mC$ is an $[n,k]_{q^m}$ code to say that $\mC$ is an $\F_{q^m}$-linear code in $\F_{q^m}^n$ of dimension $k$ over $\F_{q^m}$. The rank metric covering radius of such a code is denoted by $\rhork(\mC)$.

Clearly, $(\F_{q^m},\drk)$ can be also viewed as a $q$-ary distance-regular space, which in fact is isomorphic (as $\F_q$-linear space) and isometric to the matrix space $(\mat,\drk)$. Let us make these isomorphisms more explicit. Given an $\F_{q}$-basis $\Gamma=\{\gamma_1,...,\gamma_m\}$ of $\F_{q^m}$ and given a vector $x \in \F_{q^m}^n$, denote by $\Gamma(x)$ the $n \times m$ matrix over $\F_q$ defined by
$$x_i=\sum_{j=1}^m \Gamma_{ij}(x) \gamma_j \quad \mbox{for all } 1 \le i \le n.$$
Then the following hold (see e.g. \cite[Section 1]{costch}).

\begin{proposition} \label{isome}
For every $\F_{q}$-basis $\Gamma$ of $\F_{q^m}$, the map $x \mapsto \Gamma(x)$ is an $\F_q$-linear bijective isometry
 from $(\F_{q^m},\drk)$ to $(\mat,\drk)$. In particular,
 if $\mC \le \F_{q^m}^n$ is an $[n,k]_{q^m}$ vector rank metric code, then $\Gamma(\mC)$ is an $\F_q$-linear matrix rank metric code of dimension $mk$ over $\F_q$ having the same minimum distance, weight distribution and covering radius as $\mC$.
\end{proposition}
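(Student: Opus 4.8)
The plan is to establish the three defining properties of a bijective linear isometry separately --- $\fq$-linearity, bijectivity, and distance preservation --- and then to deduce the ``in particular'' assertion formally from the fact that a bijective linear isometry transports all linear-algebraic and metric invariants.

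For \emph{linearity}, I would argue from the uniqueness of the coordinate representation with respect to the fixed basis $\Gamma$. Given $x,y \in \F_{q^m}^n$ and $\lambda,\mu \in \fq$, expanding $(\lambda x + \mu y)_i = \sum_j (\lambda\,\Gamma_{ij}(x) + \mu\,\Gamma_{ij}(y))\gamma_j$ and comparing coefficients of the $\gamma_j$ gives $\Gamma_{ij}(\lambda x + \mu y) = \lambda\,\Gamma_{ij}(x) + \mu\,\Gamma_{ij}(y)$ for every $i,j$, whence $\Gamma(\lambda x + \mu y) = \lambda\,\Gamma(x) + \mu\,\Gamma(y)$. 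For \emph{bijectivity}, note that both $\F_{q^m}^n$ and $\mat$ have $\fq$-dimension $mn$, so it suffices to check injectivity: if $\Gamma(x) = 0$ then every entry $\Gamma_{ij}(x)$ vanishes, forcing $x_i = \sum_j \Gamma_{ij}(x)\gamma_j = 0$ for all $i$, i.e. $x = 0$.

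The substantive step is \emph{distance preservation}. Since both rank distances are by definition the weight of the difference and $\Gamma$ is linear, it is enough to prove the weight identity $\wrk(x) = \rk(\Gamma(x))$ for every $x \in \F_{q^m}^n$; the isometry property then follows by applying this to $x-y$. To prove the identity, I would observe that the $i$-th row of $\Gamma(x)$ is precisely the coordinate vector of the component $x_i \in \F_{q^m}$ in the basis $\Gamma$. Hence the coordinate isomorphism $\F_{q^m} \cong \fq^m$ carries the $\fq$-subspace $\langle x_1,\dots,x_n\rangle_{\fq}$ of $\F_{q^m}$ onto the $\fq$-row space of $\Gamma(x)$; taking dimensions yields $\wrk(x) = \dim_{\fq}\langle x_1,\dots,x_n\rangle_{\fq} = \rk(\Gamma(x))$. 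I expect this to be the only step with genuine content, and the one point requiring care is to keep track of rows versus columns, i.e. to match the convention that $\wrk$ measures the $\fq$-span of the \emph{components} of $x$ with the \emph{row} space (rather than the column space) of $\Gamma(x)$.

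Finally, for the ``in particular'' statement: if $\mC \le \F_{q^m}^n$ is $[n,k]_{q^m}$, then as an $\fq$-space it has dimension $mk$, and since $\Gamma$ is an $\fq$-linear bijection, its image $\Gamma(\mC)$ is an $\fq$-linear code of dimension $mk$. Being a bijective isometry, $\Gamma$ preserves the minimum distance, and because the weight identity makes $\Gamma$ map the weight-$i$ codewords of $\mC$ bijectively onto those of $\Gamma(\mC)$, it preserves the whole weight distribution. Preservation of the covering radius uses surjectivity: every $y \in \mat$ equals $\Gamma(x)$ for some $x$, and $\min_{c \in \mC}\drk(x,c) = \min_{c' \in \Gamma(\mC)}\drk(y,c')$, so the two covering radii coincide. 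The remaining verifications are routine.
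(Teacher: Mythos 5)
Your proof is correct and complete. The paper does not actually prove this proposition---it defers to \cite[Section 1]{costch}---so there is no in-paper argument to compare against; your route (linearity from uniqueness of the coordinate expansion, bijectivity via injectivity plus the equality of $\F_q$-dimensions, and the key weight identity $\wrk(x)=\rk(\Gamma(x))$ obtained by identifying the $\F_q$-span of the components of $x$ with the row space of $\Gamma(x)$, from which preservation of minimum distance, weight distribution and covering radius all follow formally) is exactly the standard argument the cited reference supplies.
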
 

Proposition~\ref{isome} also implies that the ball of given radius, say $r$, in $(\F_{q^m}^n,\drk)$ has the same cardinality as the ball of radius $r$ in  
$(\mat,\drk)$. Such cardinality is denoted by $\brk(r)$ in both cases, and the estimates in~\ref{estimate_ball_rk} remain valid in the context of vector rank metric codes.

A \textit{redundancy bound} and a \textit{Singleton-type bound} are known for vector rank metric codes. 

\begin{proposition}\label{lem:rkred}
Let $\mC \le \F_{q^m}^n$ be an $[n,k]_{q^m}$ code.
We have $\rhork(\mC) \le n-k$. Moreover, if 
$k \ge 1$, then
$\drk(\mC) \le n-k+1$. 
\end{proposition}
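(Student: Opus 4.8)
The plan is to derive both bounds by comparison with the Hamming metric on $\F_{q^m}^n$, viewed as a $q^m$-ary Hamming space, and then to invoke Proposition~\ref{Singleton_redun} with $q$ replaced by $q^m$. The one structural fact I need is that the rank weight never exceeds the Hamming weight: for every $x \in \F_{q^m}^n$, the $\F_q$-subspace of $\F_{q^m}$ spanned by the coordinates of $x$ has dimension at most the number of nonzero coordinates, so $\wrk(x) \le \wH(x)$. Since both distances are weights of differences (Property~\ref{pr3a}), this gives $\drk(x,y) \le \dH(x,y)$ for all $x,y \in \F_{q^m}^n$.

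First I would handle the Singleton-type bound. Assuming $k \ge 1$, so that $\mC$ has a nonzero codeword, I compute $\drk(\mC) = \min\{\wrk(x) \mid x \in \mC,\ x \neq 0\} \le \min\{\wH(x) \mid x \in \mC,\ x \neq 0\} = \dH(\mC)$. Since $\mC$ is in particular a $k$-dimensional $\F_{q^m}$-linear code in the $q^m$-ary Hamming space $\F_{q^m}^n$, Proposition~\ref{Singleton_redun} gives $\dH(\mC) \le n-k+1$, and the bound $\drk(\mC) \le n-k+1$ follows.

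Next I would treat the redundancy bound. Set $r := \rhoH(\mC)$. By definition of the Hamming covering radius, for every $x \in \F_{q^m}^n$ there is $y \in \mC$ with $\dH(x,y) \le r$; the comparison $\drk(x,y) \le \dH(x,y) \le r$ then shows that $r$ is an admissible covering radius in the rank metric, whence $\rhork(\mC) \le \rhoH(\mC)$. Applying Proposition~\ref{Singleton_redun} once more yields $\rhoH(\mC) \le n-k$, and therefore $\rhork(\mC) \le n-k$.

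I do not anticipate a serious obstacle here; the entire argument rests on the inequality $\wrk \le \wH$ and on keeping the covering-radius comparison in the correct direction, namely that a pointwise smaller distance function can only decrease the covering radius. The only point requiring a moment's care is confirming that the hypotheses of Proposition~\ref{Singleton_redun} are met after the substitution $q \mapsto q^m$, which is immediate since $\F_{q^m}^n$ equipped with the Hamming metric is itself a $q^m$-ary distance-regular space and $\mC$ is $\F_{q^m}$-linear of dimension $k$.
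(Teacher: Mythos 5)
Your argument is correct. The covering-radius half is exactly the paper's: both establish $\rhork(\mC) \le \rhoH(\mC)$ and then invoke the Hamming redundancy bound of Proposition~\ref{Singleton_redun}, and your justification of the comparison (a pointwise smaller distance can only shrink the covering radius) is the right one.

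For the Singleton-type bound you take a genuinely different, and more self-contained, route. The paper does not reprove this inequality: it refers to~\cite{gabid} or to the combination of Proposition~\ref{rankSing} (the dimension bound $\dim(\mC) \le m(n-d+1)$ for matrix codes) with the isometry of Proposition~\ref{isome}, i.e.\ it passes through the matrix rank metric space. You instead observe that $\wrk(x) \le \wH(x)$ for every $x \in \F_{q^m}^n$, hence $\drk(\mC) \le \dH(\mC) \le n-k+1$ by the Hamming Singleton bound over the alphabet $\F_{q^m}$. This is a standard and perfectly valid derivation; it buys you a one-line elementary proof that uses the same comparison lemma as the covering-radius half and avoids any appeal to the matrix-code machinery, at the cost of not illustrating the matrix--vector correspondence that the paper leans on elsewhere. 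Your check that Proposition~\ref{Singleton_redun} applies after the substitution $q \mapsto q^m$ is the only hypothesis that needed verifying, and you verify it.
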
 
\begin{proof}
To see that $\rhork(\mC) \le n-k$, it suffices to observe that $\rhork(\mC) \le \rhoH(\mC)$ and apply Proposition~\ref{Singleton_redun}. See~\cite{gabid} for the Singleton-type bound,
or combine Propositions~\ref{rankSing} and~\ref{isome}.
\end{proof}

A vector rank metric code $\mC \le \F_{q^m}^n$  of minimum rank distance $d$ and dimension
$\dim(\mC)=n-d+1$ over $\F_{q^m}$ is called \textbf{MRD}.

\begin{remark}
	With respect to the vector rank metric codes that arise in network coding~\cite{kk}, such as the Delsarte-Gabidulin codes~\cite{del,gabid}, the parameter $m$ is the effective {\em block length} of the code, representing the capacity of the network edges.
	Therefore it is natural to examine the asymptotic behavour of families of vector rank metric codes as $m$ grows.
\end{remark}

\section{Density of Codes of Fixed Minimum Distance}\label{sec:densmindist}

In this section we present the first of our main results, and 
give an asymptotic estimate for the 
number of non-MDS and non-MRD codes as $q \to +\infty$ and $m\to +\infty$, respectively. As simple corollaries, we recover the result that MDS and vector rank metric MRD codes are dense among all codes with the same length and dimension. See in particular~\cite{NTRR}.

\begin{theorem} \label{th_dist1}
Let $0 \le t < N$ be an integer, and let $\mD \le X$ be a $t$-dimensional linear code with $d(\mC) \geq 2$. Let $k$ and $d$ be any integers that satisfy $t\le k \le N$ and $2 \le d \le d(\mD)$. Define the families of codes $\mF:=\{\mC \le X \mid \mD \le \mC, 
\ \dim(\mC)=k\}$ and $\mF':=\{\mC \le X \mid \mD \le \mC, \ \dim(\mC)=k, \ d(\mC)<d\}$. We have 
$$|\mF'|/|\mF| \le \frac{Q^k-Q^t}{(Q-1)(Q^N-Q^t)}  \left( \ball(d-1) -1\right).$$
In particular, for all integers $k$ and $d$ with $0 \le k \le N$ and $d \ge 2$, there exist at most 
$$\frac{Q^k-1}{(Q-1)(Q^N-1)}  \qbin{N}{k}{Q} \left( \ball(d-1) -1\right).$$
linear codes $\mC \le X$ of dimension $k$ over $\F_Q$ and minimum distance $<d$.
\end{theorem}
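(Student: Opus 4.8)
The plan is to realize both $\mF$ and $\mF'$ through the $\mP$-balanced structure supplied by Proposition~\ref{examp1}, and then to bound $|\mF'|$ by an averaging argument over the nonzero codewords of small weight. Concretely, I would take the partition $\mP=\{\mP_1,\mP_2\}$ with $\mP_1:=\mD$ and $\mP_2:=X\setminus\mD$, so that $\mF$ is precisely the family appearing in Proposition~\ref{examp1}. This immediately gives that $\mF$ is $\mP$-balanced together with the explicit invariants $\mP_1(\mF)=|\mF|$ and $\mP_2(\mF)=|\mF|(Q^k-Q^t)/(Q^N-Q^t)$.

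Next I would set $\mS:=\Ball(0,d-1)\setminus\{0\}$ and observe that a code $\mC\in\mF$ lies in $\mF'$ exactly when $\mC\cap\mS\neq\emptyset$, since $d(\mC)<d$ is equivalent to the existence of a nonzero codeword of weight at most $d-1$. The decisive gain, and the source of the factor $(Q-1)$ in the denominator, is that $\mS$ is closed under multiplication by nonzero scalars: by Remark~\ref{rem:candivide} we have $\omega(\alpha x)=\omega(x)$ for all $\alpha\in\F_Q\setminus\{0\}$, so any punctured line through a point of $\mS$ lies entirely in $\mS$. Consequently every $\mC\in\mF'$, being linear, satisfies $|\mC\cap\mS|\ge Q-1$, and therefore $(Q-1)|\mF'|\le\sum_{\mC\in\mF'}|\mC\cap\mS|\le\sum_{\mC\in\mF}|\mC\cap\mS|$.

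I would then evaluate the last sum using Lemma~\ref{KEY} with $f$ the characteristic function of $\mS$ (the identity recorded in the remark following that lemma), obtaining $\sum_{\mC\in\mF}|\mC\cap\mS|=\mP_1(\mF)\,|\mP_1\cap\mS|+\mP_2(\mF)\,|\mP_2\cap\mS|$. Here the hypothesis $d\le d(\mD)$ enters: every nonzero element of $\mD$ has weight at least $d(\mD)\ge d>d-1$, so $\mD\cap\mS=\emptyset$ and hence $|\mP_1\cap\mS|=0$, while $\mS\subseteq X\setminus\mD$ gives $|\mP_2\cap\mS|=|\mS|=\ball(d-1)-1$. Substituting the invariant $\mP_2(\mF)$ and dividing by $(Q-1)|\mF|$ yields the displayed bound. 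The ``in particular'' statement then follows by specializing to $t=0$ and $\mD=\{0\}$, for which $d(\mD)=+\infty$ (so the constraint reduces to $d\ge 2$) and $|\mF|=\qbin{N}{k}{Q}$; this recovers $\mP_2(\mF)=|\mF|(Q^k-1)/(Q^N-1)$ and gives the stated count, noting that $\mF'$ is then exactly the set of $k$-dimensional codes of minimum distance below $d$.

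The only genuinely delicate point is the upgrade from the naive union bound $|\mF'|\le\sum_{\mC}|\mC\cap\mS|$ to the sharper $(Q-1)|\mF'|\le\sum_{\mC}|\mC\cap\mS|$: recognizing that the scalar-closure of $\mS$ forces each bad code to contribute an entire punctured line is what produces the correct constant and matches the denominator in the statement. Everything else is bookkeeping with the invariants of Proposition~\ref{examp1} and the intersection identity of Lemma~\ref{KEY}.
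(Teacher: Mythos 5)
Your proposal is correct and follows essentially the same route as the paper: the partition $\{\mD, X\setminus\mD\}$ of Proposition~\ref{examp1}, Lemma~\ref{KEY} applied to the characteristic function of $\Ball(0,d-1)\setminus\{0\}$, the vanishing of the $\mD$-term via $d\le d(\mD)$, and the factor $Q-1$ from scalar-closure of the punctured ball. The specialization $\mD=\{0\}$ for the second assertion also matches the paper's intent.
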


\begin{proof}
We will apply Lemma~\ref{KEY} to the family 
$\mF=\{\mC \le X \mid \mD \le \mC, \ \dim(\mC)=k\}$. Let
$\mP = \{\mD,X \backslash \mD \}$, which is the partition of Proposition~\ref{examp1}.
Let $f: X \to \R$ the characteristic function of the set $\Ball(0,d-1) \setminus \{0\} \subseteq X$. Since $d(\mD) \ge d$ by assumption, one obtains
\begin{equation} \label{e1}
\sum_{\mC \in \mF} |\mC \cap (\Ball(0,d-1) \setminus \{0\})|
= \frac{|\mF| \cdot (Q^k-Q^t)}{Q^N-Q^t} (\ball(d-1)-1).
\end{equation}
Let $\mF':=\{\mC \in \mF \mid d(\mC) < d\} \subseteq \mF$. 
Then
\begin{equation} \label{e2}
\sum_{\mC \in \mF} |\mC \cap (\Ball(0,d-1) \setminus \{0\})|
= \sum_{\mC \in \mF'} |\mC \cap (\Ball(0,d-1) \setminus \{0\})|
 \ge |\mF'| \cdot (Q-1),
\end{equation}
since $\omega(\lambda x) = \omega(x)$ for all non-zero $\lambda \in \F_Q$ (see Definition~\ref{defregular}). The result now follows, combining (\ref{e1}) and (\ref{e2}).
\end{proof}

We now apply Proposition~\ref{pro:inters} and Theorem~\ref{th_dist1}  to codes endowed with the Hamming and the rank metric, taking as $\mD$ the zero code $\{0\} \le X$. We give  asymptotic estimates of the density function of linear codes whose minimum distance is upper bounded.

\begin{corollary} \label{coro_dist1}
Let $k$ and $d$ be integers with $1 \le k \le n$, $2 \le d \le n$, and $d+k \le n+2$.
Denote by $\mF$ the family of $[n,k]_q$ codes, and let
$\mF':=\{\mC \in \mF \mid \dH(\mC) < d\}$.
Then
$$
\qbin{n}{k}{q}^{-1} \Lambda(q;n,d-1,k) \le \DD{\mF'}{\mF} \le \frac{q^k-1}{(q-1)(q^n-1)}  (\bH(d-1)-1).
$$
In particular,
$$\DD{\mF'}{\mF} \in \Theta \left(q^{d+k-2-n} \right) \mbox{ as } q \to +\infty.$$
\end{corollary}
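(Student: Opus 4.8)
The plan is to establish the two displayed bounds and then extract the asymptotic order from them. The key observation is that both the upper and lower bound are already set up by the preceding results, so the work is mostly in assembling them and reading off the $\Theta$-estimate.

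First I would derive the \textbf{upper bound}. Applying Theorem~\ref{th_dist1} with $X = \F_q^n$, $N = n$, $\mD = \{0\}$ (so $t = 0$), and the Hamming ball size $\ball(d-1) = \bH(d-1)$, I obtain directly
\begin{equation*}
\DD{\mF'}{\mF} \le \frac{q^k-1}{(q-1)(q^n-1)}\left(\bH(d-1)-1\right).
\end{equation*}
Note that Theorem~\ref{th_dist1} requires $d \le d(\mD)$, and since $d(\{0\}) = +\infty$ this holds for every $d \ge 2$. This gives the right-hand inequality.

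Next I would derive the \textbf{lower bound}. The family $\mF'$ of codes with $\dH(\mC) < d$ is precisely the set of $k$-dimensional codes meeting the punctured ball $\Ball(0,d-1)\setminus\{0\}$ nontrivially; since $2 \le d \le n$, a code has minimum distance $< d$ if and only if it contains a nonzero word of weight at most $d-1$, i.e.\ intersects $\mD' := \Ball(0,d-1)$ in more than $\{0\}$. The subtlety is that $\Ball(0,d-1)$ is not a subspace, so Proposition~\ref{pro:inters} does not apply to it directly; instead I would lower-bound $|\mF'|$ by counting codes meeting a \emph{single} fixed $(d-1)$-dimensional coordinate subspace $\mD$ (whose nonzero words all have Hamming weight $\le d-1$). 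Every such code lies in $\mF'$, so Proposition~\ref{pro:inters} and Notation~\ref{notaz:lambda} with $t = d-1$ give $|\mF'| \ge \Lambda(q; n, d-1, k)$, whence $\DD{\mF'}{\mF} \ge \qbin{n}{k}{q}^{-1}\Lambda(q;n,d-1,k)$. The hypothesis $d+k \le n+2$ translates into $t + k - 1 = (d-1)+k-1 \le n = N$, which is exactly the condition needed to invoke Proposition~\ref{prop:asymp} in the next step.

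Finally I would extract the \textbf{asymptotic order}. For the lower bound, Proposition~\ref{prop:asymp} with $t = d-1$ gives $\Lambda(q;n,d-1,k) \in \Theta(q^{(d-1)-1+(k-1)(n-k)})$, and Lemma~\ref{lem:thetas}\eqref{cor:1} gives $\qbin{n}{k}{q} \in \Theta(q^{k(n-k)})$ as $q \to +\infty$; dividing, the exponent is $(d-2)+(k-1)(n-k)-k(n-k) = d-2-(n-k) = d+k-2-n$, so the lower bound is in $\Theta(q^{d+k-2-n})$. For the upper bound, $\bH(d-1) \in \Theta(q^{d-1})$ by~(\ref{estimate_ball_H}), and the prefactor $\tfrac{q^k-1}{(q-1)(q^n-1)} \in \Theta(q^{k-1-n})$, so their product is also in $\Theta(q^{(k-1-n)+(d-1)}) = \Theta(q^{d+k-2-n})$. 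Since the density function is squeezed between two functions of the same order $\Theta(q^{d+k-2-n})$, the claimed estimate follows. I expect the main obstacle to be the lower-bound step: correctly justifying that the non-subspace ball can be replaced by a genuine $(d-1)$-dimensional subspace whose nonzero codewords all have small weight, and checking that the resulting count $\Lambda(q;n,d-1,k)$ is a valid lower bound for $|\mF'|$ rather than an exact count.
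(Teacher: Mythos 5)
Your proposal is correct and follows essentially the same route as the paper: the upper bound is Theorem~\ref{th_dist1} with $\mD=\{0\}$, and the lower bound counts the $k$-dimensional codes meeting the fixed coordinate subspace $\F_q^{d-1}\oplus 0^{n-d+1}$ nontrivially via Proposition~\ref{pro:inters}, with the asymptotics assembled from Proposition~\ref{prop:asymp} and the ball estimate exactly as in the paper. Your explicit exponent computation $(d-2)+(k-1)(n-k)-k(n-k)=d+k-2-n$ and your remark on why the ball must be replaced by a genuine subspace are both accurate and, if anything, slightly more detailed than the paper's own write-up.
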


\begin{proof}
    Let $\mD \le \F_q^n$ be the space generated by the first 
    $d-1$ standard basis vectors in $\F_q^n$, i.e., $\mD = \F_q^{d-1} \oplus 0^{n-d+1}$. 
    Then any code $\mC$ that intersects $\mD$ non-trivially has minimum distance
    at most $d-1$. From Proposition~\ref{pro:inters}, the number of such $k$-dimensional codes is $\Lambda(q;n,d-1,k)$.
    Therefore, the number of $[n,k]_q$ codes of minimum Hamming distance at most $d-1$ is at least $\Lambda(q;n,d-1,k)$.
The upper bound on $\DD{\mF'}{\mF}$ is immediate from Theorem~\ref{th_dist1}.

We now show the second part of the statement. 
        Since $d+k-2\leq n$, we may apply Proposition~\ref{prop:asymp} to see that $\Lambda(q;n,d-1,k) \in \Theta\left(q^{d+k-2-n}\right)$. Combining this with the asymptotic estimate of the Hamming sphere given in (\ref{estimate_ball_H}) we conclude that 
\begin{equation*}
\DD{\mF'}{\mF} \in \Omega\left(q^{d+k-2-n}\right) \cap {\cal O}\left(q^{d+k-2-n}\right) \subseteq \Theta\left(q^{d+k-2-n}\right) \mbox{ as } q \to +\infty. \qedhere
\end{equation*}
\end{proof}

As a direct consequence of Corollary~\ref{coro_dist1}, we deduce that the MDS codes are dense among the linear codes of fixed dimension.
Moreover, we give a precise asymptotic estimate of the density function of the non-MDS codes in $\F_q^n$ as $q$ grows.

\begin{corollary}
Let $k$ be an integer with $1 \le k \le n$, and let $\mF$ be the family on $[n,k]_q$ codes. Denote by 
		$\mF'$ the family of $[n,k]_q$ codes that are not MDS.
Then $$\DD{\mF'}{\mF} \in \Theta\left( q^{-1}\right) \mbox{ as } q \to +\infty.$$ 
In particular, the probability 
that a randomly chosen $[n,k]_q$ code is MDS approaches 1 as $q \to +\infty.$
\end{corollary}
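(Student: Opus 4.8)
The plan is to reduce the statement directly to Corollary~\ref{coro_dist1}. The starting observation is that, by the Singleton bound (Proposition~\ref{Singleton_redun}) together with the definition of an MDS code, a code $\mC \in \mF$ with $k \ge 1$ is MDS precisely when $\dH(\mC) = n-k+1$, its maximum possible value. Consequently $\mC$ fails to be MDS exactly when $\dH(\mC) \le n-k$, that is, when $\dH(\mC) < n-k+1$. Therefore the family $\mF'$ of non-MDS codes coincides with $\{\mC \in \mF \mid \dH(\mC) < d\}$ for the specific choice $d := n-k+1$, which is exactly the family $\mF'$ appearing in Corollary~\ref{coro_dist1}.

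With this identification in hand, I would invoke Corollary~\ref{coro_dist1} with $d = n-k+1$. First I would check its hypotheses: the condition $1 \le k \le n$ is assumed; the requirement $d \le n$ reduces to $k \ge 1$; the inequality $d+k \le n+2$ becomes $n+1 \le n+2$, which always holds; and $d \ge 2$ amounts to $k \le n-1$. Thus, for every $k$ in the range $1 \le k \le n-1$, all hypotheses are satisfied and Corollary~\ref{coro_dist1} yields $\DD{\mF'}{\mF} \in \Theta(q^{d+k-2-n})$ as $q \to +\infty$. Substituting $d = n-k+1$ collapses the exponent to $(n-k+1)+k-2-n = -1$, giving exactly $\DD{\mF'}{\mF} \in \Theta(q^{-1})$.

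For the concluding probabilistic statement, I would note that $\Theta(q^{-1})$ forces $\DD{\mF'}{\mF} \to 0$ as $q \to +\infty$, so the complementary density of MDS codes, namely $1 - \DD{\mF'}{\mF}$, tends to $1$; this is precisely the assertion that a uniformly random $[n,k]_q$ code is MDS with probability approaching $1$. The only step requiring separate attention, and the main subtlety of the argument, is the boundary case $k = n$: here the unique $n$-dimensional code is the whole space $\F_q^n$, which is trivially MDS, so $\mF' = \emptyset$ and $\DD{\mF'}{\mF}$ equals $0$ rather than a quantity of order $q^{-1}$. I would dispose of this degenerate endpoint by hand, observing that the probability-one conclusion still holds trivially there. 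Beyond this, I do not anticipate any genuine obstacle, since the substance of the argument is carried entirely by Corollary~\ref{coro_dist1} and the two-sided estimate it provides.
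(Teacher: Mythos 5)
Your proposal is correct and matches the paper's intent exactly: the paper presents this corollary as a direct consequence of Corollary~\ref{coro_dist1} with $d=n-k+1$, which is precisely your reduction, and the hypothesis check and exponent computation $(n-k+1)+k-2-n=-1$ are right. Your observation about the degenerate endpoint $k=n$ (where $\mF'=\emptyset$ and the $\Theta(q^{-1})$ claim fails while the density-one conclusion holds trivially) is a valid refinement that the paper glosses over.
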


We now consider vector rank metric codes.

\begin{corollary} \label{coro_dist2}
Let $k$ and $d$ be integers with $1 \le k \le n$, 
$2 \le d \le n$ and $d+k-2 \leq n$.
Let $\mF$ be the family of $[n,k]_{q^m}$ vector rank metric codes, and let
$\mF':=\{\mC \in \mF \mid \drk(\mC) < d\}$. Then 
$$\qbin{n}{k}{q^m}^{-1} \Lambda(q^m;n,d-1,k) \le \DD{\mF'}{\mF} \le \frac{q^{mk}-1}{(q^m-1)(q^{mn}-1)}
 (\brk(d-1) -1).$$
In particular,
$$\DD{\mF'}{\mF} \in \Theta \left(q^{m(d+k-2-n)}\right) \mbox{ as } m \to +\infty.$$
\end{corollary}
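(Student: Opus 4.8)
The plan is to follow the template of Corollary~\ref{coro_dist1} almost verbatim, using that the vector rank metric space $(\F_{q^m}^n,\drk)$ is a $q^m$-ary distance-regular space: every result established for the abstract space $(X,d)$ then applies after the substitution $Q=q^m$ and $N=n$.

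First I would prove the lower bound. I take $\mD$ to be the $\F_{q^m}$-subspace of $\F_{q^m}^n$ of all vectors supported on the first $d-1$ coordinates, so $\dim_{\F_{q^m}}\mD = d-1$. Since the rank weight of a vector is the $\F_q$-dimension of the span of its components, every nonzero $x\in\mD$ has $\wrk(x)\le d-1$; hence any code $\mC$ with $\mC\cap\mD\ne\{0\}$ contains a nonzero codeword of rank weight at most $d-1$, so $\drk(\mC)<d$ and $\mC\in\mF'$. Proposition~\ref{pro:inters} with $Q=q^m$, $N=n$, $t=d-1$ then counts these $k$-dimensional codes as $\Lambda(q^m;n,d-1,k)$, giving $|\mF'|\ge\Lambda(q^m;n,d-1,k)$; dividing by $|\mF|=\qbin{n}{k}{q^m}$ yields the left-hand inequality.

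Next I would obtain the upper bound by applying Theorem~\ref{th_dist1} with $Q=q^m$, $N=n$ and $\mD=\{0\}$ (so $t=0$, and the hypothesis $d\le d(\{0\})=+\infty$ holds trivially). Because the ball of radius $d-1$ in $(\F_{q^m}^n,\drk)$ has cardinality $\ball(d-1)=\brk(d-1)$ by Proposition~\ref{isome}, the density bound of that theorem specialises to $\DD{\mF'}{\mF}\le \frac{q^{mk}-1}{(q^m-1)(q^{mn}-1)}(\brk(d-1)-1)$, as claimed.

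Finally, for the asymptotic statement I let $m\to+\infty$, so that $Q=q^m\to+\infty$ while $q$ stays fixed. The hypothesis $d+k-2\le n$ is exactly the condition $t+k-1\le N$ of Proposition~\ref{prop:asymp} with $t=d-1$, which gives $\Lambda(q^m;n,d-1,k)\in\Theta\!\big((q^m)^{d-2+(k-1)(n-k)}\big)$; dividing by $\qbin{n}{k}{q^m}\in\Theta\!\big((q^m)^{k(n-k)}\big)$ collapses the exponent to $m(d+k-2-n)$, placing the lower bound in $\Omega(q^{m(d+k-2-n)})$. For the upper bound I would invoke the $m\to+\infty$ branch of~(\ref{estimate_ball_rk}), namely $\brk(d-1)\in\Theta(q^{(d-1)m})$, together with $\frac{q^{mk}-1}{(q^m-1)(q^{mn}-1)}\in\Theta(q^{m(k-n-1)})$ from Lemma~\ref{lem:thetas}; their product is $\Theta(q^{m(d+k-2-n)})$, putting the upper bound in $\mO(q^{m(d+k-2-n)})$. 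Sandwiching the two bounds gives $\DD{\mF'}{\mF}\in\Theta(q^{m(d+k-2-n)})$. The one point demanding care, and the step I would watch most closely, is that here the asymptotic variable is $Q=q^m$ and not $q$: one must use the $m\to+\infty$ estimate for $\brk(d-1)$ (which grows like $q^{rm}$, not like $q^{r(n+m-r)}$) and verify that the exponent bookkeeping produces the same power $q^{m(d+k-2-n)}$ on both sides.
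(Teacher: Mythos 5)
Your proposal is correct and follows essentially the same route as the paper: the coordinate subspace $\mD$ of dimension $d-1$ together with Proposition~\ref{pro:inters} for the lower bound (the paper phrases this as citing Corollary~\ref{coro_dist1} plus the observation that rank weight is at most Hamming weight), Theorem~\ref{th_dist1} with $\mD=\{0\}$ for the upper bound, and Proposition~\ref{prop:asymp} with the $m\to+\infty$ branch of~(\ref{estimate_ball_rk}) for the matching $\Theta$ estimate. Your explicit exponent bookkeeping and the caution about the asymptotic variable being $q^m$ rather than $q$ are exactly the points the paper treats tersely.
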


\begin{proof}
	Since the $\F_{q}$-rank of a vector over $\F_{q^m}$ is at most its Hamming weight, from Corollary~\ref{coro_dist1} we have that
	$$\DD{\mF'}{\mF} \geq \qbin{n}{k}{q^m}^{-1} \Lambda(q^m;n,d-1,k) \in \Omega\left(q^{m(d-(n-k+1)-1)}\right)\mbox{ as } m \to \infty,$$
	where the last asymptotic estimate follows from Proposition~\ref{prop:asymp} and the fact that
	$q^m \to +\infty$ as $m \to +\infty$.
	The upper bound on $\DD{\mF'}{\mF}$ follows directly from Theorem~\ref{th_dist1}.
	As before, an estimate of such upper bound is follows now using the asymptotic estimate of the rank metric ball given in~(\ref{estimate_ball_rk}).
\end{proof}

In analogy with the Hamming metric, Corollary~\ref{coro_dist2} shows that the MRD codes are dense in the family of vector $\F_{q^m}$-linear codes. This fact was first shown in~\cite{NTRR}. Note that Corollary~\ref{coro_dist2} also allows us to give a precise asymptotic estimate of the density function of the non-MRD vector codes in $\F_{q^m}^n$ as $m$ grows.

\begin{corollary}
Let $k$ be an integer with  $1 \le k \le n$, and let $\mF$ be the family of $[n,k]_{q^m}$ codes. Denote by 
		$\mF'$ the family of $[n,k]_{q^m}$ codes that are not MRD.
Then $$\DD{\mF'}{\mF} \in \Theta\left(q^{-m}\right) \mbox{ as } m \to \infty.$$ 
In particular, the probability 
that a randomly chosen $[n,k]_{q^m}$ code is MRD approaches 1 as $m$ approaches $+\infty$.
\end{corollary}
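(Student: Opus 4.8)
The plan is to recognize that this corollary is precisely the specialization of Corollary~\ref{coro_dist2} to the extremal value of the minimum distance. By the Singleton-type bound (Proposition~\ref{lem:rkred}), an $[n,k]_{q^m}$ code $\mC$ satisfies $\drk(\mC) \le n-k+1$, with equality exactly when $\mC$ is MRD. Hence a code in $\mF$ fails to be MRD if and only if $\drk(\mC) < n-k+1$. Setting $d := n-k+1$, the family $\mF'$ of non-MRD codes is therefore identical to the family $\{\mC \in \mF \mid \drk(\mC) < d\}$ appearing in Corollary~\ref{coro_dist2}.

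First I would verify that the hypotheses of Corollary~\ref{coro_dist2} hold for this choice of $d$. Assuming $1 \le k \le n-1$ (the case $k=n$ being trivial, since then the unique $[n,n]_{q^m}$ code is the full space, which is MRD, so $\mF'=\emptyset$), one has $2 \le d = n-k+1 \le n$ and $d+k-2 = n-1 \le n$, so all the constraints are met. Corollary~\ref{coro_dist2} then yields
$$\DD{\mF'}{\mF} \in \Theta\left(q^{m(d+k-2-n)}\right).$$
Substituting $d = n-k+1$ makes the exponent $d+k-2-n = -1$, whence $\DD{\mF'}{\mF} \in \Theta\left(q^{-m}\right)$ as $m \to +\infty$, which is the first assertion. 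For the final claim, since this density tends to $0$, the family $\mF'$ of non-MRD codes is sparse in $\mF$; consequently the complementary family of MRD codes is dense, so a uniformly random $[n,k]_{q^m}$ code is MRD with probability approaching $1$ as $m \to +\infty$.

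There is essentially no genuine obstacle here beyond the bookkeeping of the substitution: the full content of the statement is already carried by Corollary~\ref{coro_dist2}, and the only point requiring care is confirming that the extremal distance $d = n-k+1$ lands in the admissible range $2 \le d \le n$ together with $d+k-2 \le n$, which it does for every $1 \le k \le n-1$. I would simply note the degenerate boundary $k=n$ separately and invoke the previous corollary for the remaining cases.
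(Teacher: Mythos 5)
Your proposal is correct and matches the paper's intent exactly: the corollary is stated as an immediate specialization of Corollary~\ref{coro_dist2} with $d=n-k+1$, and your verification of the hypotheses ($d+k-2=n-1\le n$, $2\le d\le n$ for $1\le k\le n-1$) together with the separate treatment of the degenerate case $k=n$ is precisely the routine bookkeeping the paper leaves implicit.
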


We now apply the same machinery to study matrix rank metric codes.

\begin{corollary} \label{coro_dist333}
Let $k$ and $d$ be integers with $1 \le k \le mn$ and 
$2 \le d \le n$.
Let $\mF$ be the family of $k$-dimensional matrix rank metric codes in $\mat$, and let
$\mF':=\{\mC \in \mF \mid \drk(\mC) < d\}$. Then 
$$\DD{\mF'}{\mF} \le \frac{q^{k}-1}{(q-1)(q^{mn}-1)}
 (\brk(d-1) -1).$$
In particular,
\begin{equation} \label{exxp}
\DD{\mF'}{\mF} \in \mO \left(q^{k-1-mn+(d-1)(m+n-d+1)}\right) \mbox{ as } q \to +\infty.
\end{equation}
\end{corollary}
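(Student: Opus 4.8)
The plan is to obtain the explicit bound as an immediate specialization of Theorem~\ref{th_dist1}, and then read off the asymptotics from the estimate of the rank-metric ball. First I would recall from Section~\ref{sec:dist-reg} that $(\mat,\drk)$ is a $q$-ary distance-regular space of dimension $N=mn$, so that the general machinery applies under the substitution $Q=q$, $N=mn$, $\ball=\brk$. I would then apply Theorem~\ref{th_dist1} with the same $k$ and $d$ and with $\mD=\{0\}$ the zero code (so $t=0$); the hypothesis $2\le d\le d(\mD)$ holds vacuously, since $d(\{0\})=+\infty$. With this choice the family $\mF$ of Theorem~\ref{th_dist1} consists of all $k$-dimensional matrix codes (every such code contains $\{0\}$), and $\mF'$ is exactly the set of those with minimum rank distance $<d$. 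The conclusion of the theorem at $t=0$ reads
$$\DD{\mF'}{\mF} \le \frac{q^{k}-1}{(q-1)(q^{mn}-1)}\bigl(\brk(d-1)-1\bigr),$$
which is the first assertion of the corollary.

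For the $\mO$-estimate I would estimate the three factors of this upper bound separately as $q\to+\infty$. By Lemma~\ref{lem:thetas} we have $q^{k}-1\in\Theta(q^{k})$ and $(q-1)(q^{mn}-1)\in\Theta(q^{1+mn})$. For the ball term I would invoke the estimate in~(\ref{estimate_ball_rk}), which gives $\brk(d-1)\in\Theta(q^{(d-1)(n+m-d+1)})$ as $q\to+\infty$; since the ball size grows without bound, $\brk(d-1)-1$ lies in the same class. Multiplying the three estimates shows that the right-hand side above is in $\Theta(q^{\,k-1-mn+(d-1)(m+n-d+1)})$, whence $\DD{\mF'}{\mF}$ is in $\mO$ of this quantity, as claimed.

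There is no real obstacle here: the statement is a direct application of results already in hand. The point worth flagging is \emph{why} only an upper bound with an $\mO$-estimate appears, in contrast to the matching $\Theta$-estimates of Corollaries~\ref{coro_dist1} and~\ref{coro_dist2}. In the Hamming and vector settings the matching lower bound comes from Proposition~\ref{pro:inters}: one chooses $\mD$ of dimension $d-1$ all of whose nonzero elements have weight at most $d-1$, so that any code meeting $\mD$ nontrivially has minimum distance $<d$, and then estimates $\Lambda$ via Proposition~\ref{prop:asymp}. For matrices the analogous construction of a subspace all of whose nonzero elements have rank at most $d-1$ can only reach dimension about $m(d-1)$, and the resulting $\Lambda$-estimate yields an exponent smaller than the upper bound by $(d-1)(n-d+1)>0$. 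This exponent gap is precisely the phenomenon that is exploited, by the finer three-class argument based on Proposition~\ref{examp3}, in Section~\ref{sec:mrdnotsense} to show that matrix MRD codes fail to be dense.
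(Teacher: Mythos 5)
Your proof is correct and follows the same route as the paper: specialize Theorem~\ref{th_dist1} (equivalently its ``in particular'' clause with $\mD=\{0\}$) to $(\mat,\drk)$ with $Q=q$, $N=mn$, and then combine with the ball estimate in~(\ref{estimate_ball_rk}) to extract the exponent $k-1-mn+(d-1)(m+n-d+1)$. Your closing discussion of why only a one-sided $\mO$-bound is available here, and how the gap is closed by the three-class argument of Section~\ref{sec:mrdnotsense}, accurately mirrors the remark the paper makes immediately after this corollary.
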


\begin{proof}
Again, the result follows combining Theorem~\ref{th_dist1} with the asymptotic estimate of the rank metric ball given in~(\ref{estimate_ball_rk}).
\end{proof}

Following the notation of Corollary~\ref{coro_dist333}, it is easy to derive conditions on $(m,n,k,d)$ such that the exponent of $q$ in the right-hand side of (\ref{exxp}) is negative. For those values of $(m,n,k,d)$ Corollary~\ref{coro_dist333} shows the sparsity of the family of matrix codes 
in $\mat$ of dimension $k$ and minimum distance $<d$ within the family of $k$-dimensional codes.

\begin{remark}
	The method used in this section cannot be applied to establish the density of $\fq$-linear MRD codes as $q \to +\infty$ or $m \to +\infty$. In fact, as we show in the next section, this is false; the $\fq$-linear MRD
	codes of minimum distance $d$ are \textit{never} dense in the family of $m \times n$ matrix codes of dimension $m(n-d+1)$, both as $q \to +\infty$ and as $m \to +\infty$.
	
	Note moreover that the direct analogues for matrix codes of the lower bounds in Corollaries~\ref{coro_dist1} and~\ref{coro_dist2} give a lower bound on the proportion of non-MRD codes that is far from sharp,
	resulting in the estimate that the density function of non-MRD codes being in $\Omega(q^{-1})$, which approaches zero as $q$ grows.	
\end{remark}

In~\cite{NTRR}, the authors give a lower bound on the density of MRD $[n,k]_{q^m}$ codes as follows.
Fix $k$ with $1 \le k \le n$, and let $\mF$ be the 
family of $[n,k]_{q^m}$ vector rank metric codes.
Denote by $\overline{\mF}$ the family of $[n,k]_{q^m}$ vector rank metric codes whose generator matrix (in reduced row-echelon form) has an identity in the first $k$ columns. Finally, let 
$\mF''$ be the family of $[n,k]_{q^m}$ MRD codes. It is well known that $\mF'' \subseteq \overline{\mF}$. In \cite[Theorem 26]{NTRR} it is shown that
$$\DD{\mF''}{\overline{\mF}} \ge 1- \sum_{r=0}^k \qbin{k}{k-r}{q} \qbin{n-k}{r}{q} q^{r^2} q^{-m},$$ 
i.e., $$\DD{\mF''}{\mF} \ge \DD{\overline{\mF}}{\mF} \left(1- \sum_{r=0}^k \qbin{k}{k-r}{q} \qbin{n-k}{r}{q} q^{r^2} q^{-m}\right).$$
We can therefore re-state the bound as follows.

\begin{theorem}[Theorem 26 of~\cite{NTRR}]
Let $k$ be an integer with $1 \le k \le n$. Let $\mF$ be the family of $[n,k]_{q^m}$ codes, and let 
$\mF''$ be the family of $[n,k]_{q^m}$ codes that are MRD.
We have
 $$\DD{\mF''}{\mF} \ge q^{mk(n-k)} \qbin{n}{k}{q^m}^{-1}\left(1- \sum_{r=0}^k \qbin{k}{k-r}{q} \qbin{n-k}{r}{q} q^{r^2} q^{-m}\right).$$
\end{theorem}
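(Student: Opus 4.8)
The plan is to read off the asserted bound from the inequality recorded immediately above the statement, namely
$$\DD{\mF''}{\mF} \ge \DD{\overline{\mF}}{\mF} \left(1- \sum_{r=0}^k \qbin{k}{k-r}{q} \qbin{n-k}{r}{q} q^{r^2} q^{-m}\right),$$
and then to evaluate the single remaining quantity $\DD{\overline{\mF}}{\mF}$ explicitly. The chain $\mF'' \subseteq \overline{\mF} \subseteq \mF$ lets the density $\DD{\mF''}{\mF}$ factor as the product $\DD{\mF''}{\overline{\mF}} \cdot \DD{\overline{\mF}}{\mF}$; the first factor is bounded below by the parenthesised expression by Theorem~26 of~\cite{NTRR}, which I take as given. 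So all that is left is to compute the proportion of systematic codes.

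Next I would count $|\overline{\mF}|$ and $|\mF|$ directly. A code in $\overline{\mF}$ is, by definition, the row space of a generator matrix in reduced row-echelon form whose first $k$ columns form the identity, i.e. a matrix of the shape $[I_k \mid A]$ with $A \in \F_{q^m}^{k \times (n-k)}$. Since the reduced row-echelon form of a subspace is unique, distinct matrices $A$ yield distinct codes, and every code in $\overline{\mF}$ arises exactly once this way. As $A$ ranges over $\F_{q^m}^{k \times (n-k)}$ there are $(q^m)^{k(n-k)} = q^{mk(n-k)}$ choices, whence $|\overline{\mF}| = q^{mk(n-k)}$. On the other hand $\mF$ is the set of all $k$-dimensional $\F_{q^m}$-subspaces of $\F_{q^m}^n$, so $|\mF| = \qbin{n}{k}{q^m}$. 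Therefore $\DD{\overline{\mF}}{\mF} = q^{mk(n-k)}\,\qbin{n}{k}{q^m}^{-1}$.

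Substituting this value of $\DD{\overline{\mF}}{\mF}$ into the displayed inequality yields exactly the bound in the statement, so the proof is complete at that point. I do not expect any real obstacle here: the entire substantive content lives in the cited inequality of~\cite{NTRR}, and the present statement is merely a reformulation of it relative to the full family $\mF$ rather than to the subfamily $\overline{\mF}$ of systematic codes. The only point requiring a word of care is the uniqueness of reduced row-echelon form, which guarantees that the $q^{mk(n-k)}$ matrices $[I_k \mid A]$ are in bijection with the codes of $\overline{\mF}$ and hence that the count $|\overline{\mF}| = q^{mk(n-k)}$ is exact rather than merely an upper bound.
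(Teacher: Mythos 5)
Your proposal is correct and follows exactly the route the paper takes: the statement is obtained by multiplying the cited bound on $\DD{\mF''}{\overline{\mF}}$ from Theorem~26 of~\cite{NTRR} by the explicit value $\DD{\overline{\mF}}{\mF} = q^{mk(n-k)}\qbin{n}{k}{q^m}^{-1}$, which comes from counting systematic generator matrices $[I_k \mid A]$ against all $k$-dimensional subspaces. Your added remark on the uniqueness of the reduced row-echelon form is a correct justification of the count $|\overline{\mF}| = q^{mk(n-k)}$, which the paper leaves implicit.
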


   It can be checked that the bound of Corollary~\ref{coro_dist2} is tighter than that of \cite[Theorem~26]{NTRR} for all parameters. Both formul{\ae} rapidly approach 1 as $m$ increases.

\begin{remark}
Fix an integer $k$ with $1 \le k \le n$. The proof of~\cite{NTRR} on the density of $k$-dimensional $\F_{q^m}$-linear MRD codes uses the fact that these cardinality-optimal codes
correspond to the non-zeroes of a polynomial in $k(n-k)$ variables with coefficients in $\F_{q^m}$.
 In the argument of~\cite{NTRR}, it is crucial that the degree of $p$ satisfies
\begin{equation} \label{limkey}
\lim_{m \to +\infty} \deg(p)/q^m=0,
\end{equation}
from which the density of $\F_{q^m}$-linear MRD codes as $m \to +\infty$ can be deduced by the well-known Schwartz-Zippel Lemma~\cite{schwartz,zippel}.

We notice that the proof technique illustrated above cannot be applied to deduce that $\F_q$-linear matrix MRD codes are dense as $q \to +\infty$. These optimal codes of dimension $m(k-d+1)$ over $\F_q$, where $d$ denotes the minimum distance, can easily be described 
as the non-zeroes of a multivariate polynomial in $m^2(n-d+1)(d-1)$ variables and coefficients in $\F_q$. However, the degree of any such  polynomial, say $p$, does \textit{not} satisfy
\begin{equation} \label{limitno}
\lim_{q\to+\infty} \deg(p)/q=0
\end{equation} in general, thereby preventing the application of the Schwartz-Zippel lemma as $q \to +\infty$.
The fact that the equality in (\ref{limitno}) does not hold will follow \textit{a posteriori} from the non-density of $\F_q$-linear MRD codes, which we establish in the next section of the paper.
\end{remark}

\section{$\F_q$-Linear MRD Codes Are not Dense}\label{sec:mrdnotsense}

We now obtain a lower bound on the density function of the non-MRD codes in the family of $\fq$-linear
codes of a fixed dimension. 
 We show that, far from being sparse in this set, this family asymptotically forms at least half of all 
$\fq$-linear codes of the same dimension, as both $q$ and $m$ approach infinity. 
This fact is in sharp contrast with the asymptotic behaviour of $\F_q$-linear MDS codes and $\F_{q^m}$-linear MRD codes.

Note that we only examine 
dimensions that are a multiple of $m$, as for other dimensions MRD codes do not exist (see Section~\ref{sec:dist-reg}).

\begin{theorem} \label{bigthm}
Fix an integer $d$ with $2 \le d \le n$, and let $k=m(n-d+1)$. Let $\mF'$ be the family of $k$-dimensional non-MRD codes in
$\F_{q}^{m \times n}$, that is, 
$\mF'=\{\mC \le \mat \mid \dim(\mC)=k,  \ \drk(\mC) \leq n-d\}$.
We have
\begin{equation*} 
|\mF'| \ge
q \cdot \Lambda(q;mn,m(d-1),k) \cdot \left( 1- 
\frac{\left(q^k-1\right)\left(q^{mn-k}-1\right)}{2\left(q^{mn}-q^{mn-k} \right)}\right).
\end{equation*}
\end{theorem}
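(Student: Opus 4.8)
The plan is to identify $\mF'$ with the set of $k$-dimensional codes that meet the \emph{low-rank locus} $\mS\subseteq\mat$ consisting of the nonzero matrices of rank at most $d-1$: since $k=m(n-d+1)$ is exactly the MRD dimension for minimum distance $d$, a code of this dimension fails to be MRD precisely when it contains a nonzero word of rank $\le d-1$, i.e. when $\mC\cap\mS\neq\{0\}$. A single maximal all-low-rank subspace $\mD$ --- for instance the $m(d-1)$-dimensional space of matrices whose only nonzero columns are the first $d-1$, every nonzero element of which has rank $\le d-1$ --- already shows that $\mF_\mD:=\{\mC\le\mat\mid\dim(\mC)=k,\ \mC\cap\mD\neq\{0\}\}$ is contained in $\mF'$, and by Proposition~\ref{pro:inters} one has $|\mF_\mD|=\Lambda(q;mn,m(d-1),k)$. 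This only yields $|\mF'|\ge\Lambda$, so the entire content of the theorem lies in the extra factor of order $q/2$, and the whole point is that a non-MRD code typically contains \emph{many} low-rank words, all clustered in subspaces of $\mS$.

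To capture this, first I would replace the single $\mD$ by a family $\mD_1,\dots,\mD_L$ of maximal all-low-rank subspaces, for example $\mD_i=\{M\mid \operatorname{rowsp}(M)\subseteq U_i\}$ as $U_i$ ranges over $(d-1)$-dimensional subspaces of $\F_q^n$ (of which there are $\qbin{n}{d-1}{q}\ge q$). Then $\bigcup_i\mF_{\mD_i}\subseteq\mF'$ with $|\mF_{\mD_i}|=\Lambda$ for every $i$. A naive union bound over all low-rank words is hopeless here --- exactly because of the clustering noted above --- so instead I would apply the Bonferroni inequality $|\mF'|\ge\big|\bigcup_i\mF_{\mD_i}\big|\ge\sum_i|\mF_{\mD_i}|-\sum_{i<j}|\mF_{\mD_i}\cap\mF_{\mD_j}|=L\Lambda-\sum_{i<j}|\mF_{\mD_i}\cap\mF_{\mD_j}|$. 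The pairwise terms are the crux. I would bound $|\mF_{\mD_i}\cap\mF_{\mD_j}|$ from above by summing over pairs of distinct one-dimensional subspaces $\ell\subseteq\mD_i$, $\ell'\subseteq\mD_j$ the number $\qbin{mn-2}{k-2}{q}$ of $k$-dimensional codes containing the two-dimensional space $\ell+\ell'$, with an $\qbin{mn-1}{k-1}{q}$-contribution for any shared low-rank line. The partition-balanced machinery is the natural vehicle for turning this into an exact count: applying Lemma~\ref{KEY} to the family $\mF_{\mD_i}$ with the characteristic function of $\mD_j\setminus\{0\}$, and reading off the invariants $\mP_2(\mF_{\mD_i})=\qbin{mn-1}{k-1}{q}$ and $\mP_3(\mF_{\mD_i})$ from Proposition~\ref{examp3}, pins down the overlap in terms of $\Lambda$, $q^t$ and the $q$-binomial ratio $\qbin{mn-2}{k-2}{q}/\qbin{mn-1}{k-1}{q}$.

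Finally I would take $L$ of order $q$ and simplify. With $t=m(d-1)=mn-k$, the main term is $L\Lambda$ and the correction is of the shape $\binom{L}{2}\,(\text{pairwise term})$; the optimal scaling is $L\approx q$, and the surviving correction should collapse to the closed form $\tfrac12\big(1-q^{-t}\big)$, giving $|\mF'|\ge q\Lambda\big(1-\tfrac12(1-q^{-t})\big)=q\Lambda\cdot\tfrac12(1+q^{-t})$, which is exactly the claimed bound since $1-\frac{(q^{k}-1)(q^{mn-k}-1)}{2(q^{mn}-q^{mn-k})}=\tfrac12\big(1+q^{-(mn-k)}\big)$. I expect two genuine difficulties. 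The first, and main, obstacle is controlling the pairwise overlaps when $2(d-1)>n$: then no two maximal all-low-rank subspaces can meet only in $\{0\}$, so the shared low-rank lines (those in $\mD_{U_i\cap U_j}$) must be accounted for, and I would want the $U_i$ arranged in a configuration with uniform, computable intersections (a partial spread when one exists, otherwise a pencil through a common subspace). The second is purely bookkeeping: forcing the $q$-binomial ratios to telescope into the clean factor $\tfrac12(1+q^{-t})$ rather than merely its leading order. I would isolate the exact overlap estimate as a separate Claim, proved via Lemma~\ref{KEY} applied to the pair $(\mD_i,\mD_j)$, so that the final inequality follows by a single optimization in $L$.
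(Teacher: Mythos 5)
Your plan is, in all essentials, the proof the paper gives: a partial spread $U_1,\dots,U_M$ of $(d-1)$-dimensional subspaces of $\F_q^n$, the associated all-low-rank spaces $\mD_i$ of dimension $m(d-1)$ with $\mD_i\cap\mD_j=\{0\}$, a truncated inclusion--exclusion over the families $\mF_{\mD_i}$ (the paper organizes this as a disjointification $\overline{\mF}_i=\mF_{\mD_i}\setminus\bigcup_{j<i}\mF_{\mD_j}$, which is exactly your Bonferroni bound), the pairwise overlap controlled by Lemma~\ref{KEY} and the invariant $\mP_3(\mF_{\mD_i})\le|\mF_{\mD_i}|\,(q^k-1)/(q^{mn}-q^{m(d-1)})$ from Proposition~\ref{examp3}, and finally the choice $M=q$. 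Your algebra for the correction term, $\tfrac{1}{2}(1-q^{-(mn-k)})$, is also the paper's.

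The one genuine gap is the case $2(d-1)>n$, which you correctly identify as the main obstacle but do not resolve. When $d-1>n/2$, any two $(d-1)$-subspaces of $\F_q^n$ meet in dimension at least $2(d-1)-n>0$, so no partial spread with more than one member exists and the construction collapses. Your proposed fallback --- a pencil of $U_i$ through a common subspace $W$ --- does not recover the stated inequality: every code meeting $\mD_W$ nontrivially lies in \emph{all} of the $\mF_{\mD_i}$, so each pairwise overlap picks up an additive term of size at least $\Lambda(q;mn,m(2(d-1)-n),k)$, and while this is asymptotically negligible against $q\Lambda(q;mn,m(d-1),k)$, the theorem asserts an exact inequality for every $q$ and $m$, which the pencil bookkeeping will not deliver in the stated closed form. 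The paper disposes of this case in one line (Claim A): trace-duality in $\mat$ carries non-MRD codes of dimension $k$ and distance parameter $d$ bijectively onto non-MRD codes of dimension $mn-k$ and distance parameter $n-d+2$, and $n<2(d-1)$ forces $n>2((n-d+2)-1)$; one checks that $\Lambda(q;mn,mn-k,k)$ is invariant under this swap and that the resulting bound implies the claimed one. Without this (or an equivalent) reduction, your argument proves the theorem only for $n\ge 2(d-1)$. A secondary quibble: you first let $U_i$ range over all $\qbin{n}{d-1}{q}$ subspaces, for which the pairwise sum would be far too large; the disjointness requirement and the choice $L\approx q$ that you arrive at later are not optional refinements but are needed from the start.
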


\begin{proof}
To simplify the notation throughout the proof, define $\mD^\times:=\mD \setminus \{0\}$
for any linear code $\mD \le \mat$. 
The argument contains multiple claims, which we prove separately.
We start with the following simple observation.

\begin{claim} \label{claimA}
Without loss of generality, we can assume $n \ge 2(d-1)$.
\end{claim}
\begin{clproof}
Trace-duality in $\mat$ gives a bijection between the MRD codes of distance $d$ and dimension $k$ and the MRD codes of distance $n-d+2$ and dimension $mn-k$. See e.g. \cite[Section 4]{alb1}. Now observe that if $n < 2(d-1)$, then  $n>2((n-d+2)-1)$ and $2 \le n-d+2 \le n$. 
\end{clproof}

We henceforth assume that $n \ge 2(d-1)$. 
Divide $n$ by $d-1$ with remainder and write $n=a(d-1)+b$ with $a,b \in \Z$ and $0 \le b \le d-2$. Fix \textit{any} integer $M$ with 
$$2 \le M \le \frac{q^n-q^b}{q^{d-1}-1}-q^b+1.$$ Since $n \ge 2(d-1)$, we have
\begin{equation} \label{possq}
\frac{q^n-q^b}{q^{d-1}-1}-q^b+1 \ge q.
\end{equation}
Let $U_1,...,U_M \le \F_q^n$ be $M$ subspaces 
of $\F_q^n$ of dimension $d-1$ with the property that $U_i \cap U_j=\{0\}$ for all
$i,j \in \{1,...,M\}$ with $i \neq j$. The fact that $M$ spaces with this property exist follows from known results 
on \textit{partial spreads} in finite geometry.
We refer the reader to~\cite{beutel} or \cite[Section 2]{psgorlaravagnani} for further details.

For all $1 \le i \le M$ define  
$\mD_i:=\{x \in \mat \mid \mbox{colsp}(x) \le U_i\} \le \mat$,
where $\mbox{colsp}(x) \le \F_q^n$ denotes the $\F_q$-linear space generated by the columns of 
$x$.  The properties of the $\mD_i$'s can be summarized as follows.

\begin{claim} \label{claimB}
For all $1 \le i \le M$ the vector space $\mD_i \le \mat$ has dimension $m(d-1)$ over $\F_q$, and every
$x \in \mD_i$ has $\rk(x) \le d-1$. Moreover,
$\mD_i \cap \mD_j =\{0\}$ for all $i,j \in \{1,...,M\}$ with $i \neq j$.
\end{claim}
\begin{clproof}
The dimension of each $\mD_i$ is computed in \cite[Lemma 26]{alb1}.
The second property  follows directly from the definition of rank.
Finally, observe that if $i,j \in \{1,...,M\}$, then we have 
$\mD_i \cap \mD_j =\{x \in \mat \mid \mbox{colsp}(x) \le U_i \cap U_j\}$.
Thus if $i \neq j$ we have $\mD_i \cap \mD_j=\{0\}$.
\end{clproof}

For all $1 \le i \le M$, define the family $\mF_i:=\{\mC \in \mF \mid \mC\cap \mD_i^\times \neq \emptyset \} \subseteq \mF$. 
The following hold.
\begin{claim} \label{claimC}
Let $1 \le i \le M$ and $x \in \mat \setminus \mD_i$. Then
$$|\{ \mC \in \mF_i \mid x \in \mC \}| \le |\mF_i| \cdot \frac{q^k-1}{q^{mn}-q^{m(d-1)}}.$$
\end{claim}
\begin{clproof}
Fix $i$, and let $\mP$ be the partition of size 3 given by
$\mP_1=\{0\}$, $\mP_2=\mD_i^\times$ and $\mP_3=\mat \setminus \mD_i$.
It was shown in Proposition~\ref{examp3} that the family $\mF_i$ is $\mP$-balanced. Furthermore, it holds that
 \begin{equation*}
 |\{\mC \in \mF_i \mid x \in \mC \}| = \mP_3(\mF_i)
 =\frac{\sum_{\mC \in \mF_i}|\mC \cap (\mat \setminus \mD_i)|}{q^{mn}-q^{m(d-1)}} \le |\mF_i| \cdot \frac{q^k-1}{q^{mn}-q^{m(d-1)}}.
\qedhere
 \end{equation*}
\end{clproof}

Construct $M$ subsets of $\mat$ as follows:
$$A_1:=\emptyset, \qquad \quad A_i := \bigcup_{1 \le j <i} \mD_j^\times \quad \mbox{ for }
2 \le i \le M.$$
Since $A_1=\emptyset$ and the $\mD_i$'s are pairwise disjoint by Claim~\ref{claimB}, we have
\begin{equation} \label{cardAi}
|A_i| = (i-1)\left( q^{m(d-1)} -1\right) \quad \mbox{for all } 1 \le i \le M.
\end{equation}
We now introduce the auxiliary families of codes 
$ \overline{\mF}_i:=\{\mC \in \mF \mid \mC \cap \mD^\times_i \neq \emptyset \mbox{ and } \mC \cap A_i = \emptyset\} \subseteq \mF_i$, for $1 \le i \le M$. Observe that
\begin{equation}|\mF'| \ge \sum_{i=1}^M |\overline{\mF}_i|.
\label{ineqsum}
\end{equation}
The size of each
$\overline{\mF}_i$ can be lower bounded as follows. 

\begin{claim} \label{claimD}
For all $1 \le i \le M$ we have
\begin{equation*}
|\overline{\mF}_i| \ge |\mF_i| \left( 1- (q-1)^{-1} \frac{q^k-1}{q^{mn}-q^{m(d-1)}} \cdot |A_i|\right).
\end{equation*}
\end{claim}

\begin{clproof}
The result is immediate for $i=1$, as $A_1=\emptyset$. Now fix an arbitrary index $i \in \{2,...,M\}$, and define
$\hat{\mF}_i:=\{\mC \in \mF \mid \mC \cap \mD_i^\times \neq \emptyset \mbox{ and } \mC \cap A_i \neq \emptyset\}$.
Since $A_i$ is the union of linear spaces (without the zero vector), we have
$$|\hat{\mF}_i|\cdot (q-1) \le \sum_{\mC \in \mF_i} |\mC \cap A_i|,$$ from which we obtain
\begin{equation} \label{wanted}
|\overline{\mF}_i| =|\mF_i|-|\hat{\mF}_i| \ge |\mF_i|- (q-1)^{-1} \sum_{\mC \in \mF_i} |\mC \cap A_i|.
\end{equation}
On the other hand, by definition,
\begin{equation} \label{good}
\sum_{\mC \in \mF_i} |\mC \cap A_i| = \sum_{x \in A_i} |\{\mC \in \mF_i \mid x \in \mC\}|.
\end{equation}
By Claim~\ref{claimB}, we have $A_i \subseteq \mat \setminus \mD_i$. Therefore by Claim~\ref{claimC} and (\ref{good}) we conclude
\begin{equation}\sum_{\mC \in \mF_i} |\mC \cap A_i| \le |\mF_i| \cdot \frac{q^k-1}{q^{mn}-q^{m(d-1)}} \cdot |A_i|.  \label{crucialeq}
\end{equation}
The claim now follows combining (\ref{wanted}) and (\ref{crucialeq}). 
\end{clproof}

Note that by Proposition~\ref{pro:inters} and Notation~\ref{notaz:lambda} we have
$\Lambda:=\Lambda(q;mn,mn-k,k) = |\mF_i|$ for all $1 \le i \le M$.
Therefore combining (\ref{ineqsum}), Claim~\ref{claimD} and (\ref{cardAi}) we obtain
\begin{eqnarray}\label{eq:ineq1}
|\mF'| &\ge& \sum_{i=1}^M \Lambda \cdot \left( 1- (q-1)^{-1} \frac{q^k-1}{q^{mn}-q^{mn-k}} \cdot |A_i|\right) \nonumber \\
&=& \Lambda \cdot \left( M - \frac{\left(q^k-1\right)\left(q^{mn-k}-1\right)}{(q-1)\left(q^{mn}-q^{mn-k} \right)} \frac{M(M-1)}{2}\right).
\label{finOK}
\end{eqnarray}

Since the inequality in (\ref{possq}) holds, we can now set $M=q$ (we will comment later in Remark~\ref{rem:commM} on this choice of $M$) 
and obtain 
\begin{equation} \label{concl1tolimit}
|\mF'| \ge
q \cdot \Lambda \cdot \left( 1- 
\frac{\left(q^k-1\right)\left(q^{mn-k}-1\right)}{2\left(q^{mn}-q^{mn-k} \right)}\right),
\end{equation}
as desired.
\end{proof}

We now compute the limit as $q\to+\infty$ of the lower bound of the density function given in Theorem~\ref{bigthm}. 

\begin{corollary}
Fix an integer $d$ with $2 \le d \le n$, and set $k:=m(n-d+1)$. Define the families
$\mF:=\{\mC \le \mat \mid \dim(\mC)=k\}$ and $\mF':=\{\mC \in \mF \mid \mC \mbox{ is not MRD}\}$.
Then for every $\varepsilon \in \R_{>0}$ there exists $q_\varepsilon \in \N$ such that
$$\DD{\mF'}{\mF} \ge \frac{1}{2}-\varepsilon$$ for all prime powers $q \ge q_\varepsilon$.
In particular, 
$\lim_{q \to \infty} \DD{\mF'}{\mF} \ge 1/2$, provided the limit exists.
\end{corollary}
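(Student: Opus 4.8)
The plan is to substitute the explicit lower bound on $|\mF'|$ supplied by Theorem~\ref{bigthm} into the density function $\DD{\mF'}{\mF}$ and compute the limit of the resulting rational-in-$q$ expression as $q \to +\infty$. I would begin by recording that $\mat \cong \F_q^{mn}$ as $\F_q$-vector spaces, so that $\mF$ is exactly the set of all $k$-dimensional subspaces and $|\mF| = \qbin{mn}{k}{q}$. Since the family $\mF'$ of non-MRD codes in the corollary coincides with the family $\mF'$ of Theorem~\ref{bigthm}, the bound
$$|\mF'| \ge q \cdot \Lambda(q;mn,m(d-1),k) \cdot \left(1 - \frac{(q^k-1)(q^{mn-k}-1)}{2(q^{mn}-q^{mn-k})}\right)$$
applies verbatim.

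The key bookkeeping observation is that $mn - k = m(d-1)$, so that $\Lambda(q;mn,m(d-1),k)$ is precisely $\Lambda(q;mn,mn-k,k)$ in the sense of Notation~\ref{notaz:lambda}. I would then invoke Proposition~\ref{prop:asymp} with $N=mn$ and $t=mn-k$; its hypothesis $t+k-1 \le N$ reduces to $mn-1 \le mn$, which holds, and it yields
$$\lim_{q \to +\infty} \frac{\Lambda(q;mn,mn-k,k)}{q^{k(mn-k)-1}} = 1.$$
In parallel, Lemma~\ref{lem:thetas}(\ref{cor:1}), together with the monicity of the Gaussian binomial already exploited in the leading-term computation of Proposition~\ref{prop:asymp}, gives that $\qbin{mn}{k}{q}$ is a polynomial in $q$ of degree $k(mn-k)$ with leading coefficient $1$, whence $\qbin{mn}{k}{q}/q^{k(mn-k)} \to 1$.

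Dividing the lower bound by $|\mF|$ and factoring, I would write
$$\frac{q \cdot \Lambda(q;mn,mn-k,k)}{\qbin{mn}{k}{q}} = \frac{\Lambda(q;mn,mn-k,k)}{q^{k(mn-k)-1}} \cdot \frac{q^{k(mn-k)}}{\qbin{mn}{k}{q}} \longrightarrow 1,$$
so the explicit leading factor of $q$ in Theorem~\ref{bigthm} is exactly what cancels the single power of $q$ separating the degrees of $\Lambda$ and the $q$-binomial. A short computation then shows the parenthetical factor tends to $1/2$: expanding $(q^k-1)(q^{mn-k}-1)$ and $2(q^{mn}-q^{mn-k})$, the numerator and denominator have leading terms $q^{mn}$ and $2q^{mn}$ respectively (using $0 < k < mn$, which holds since $2 \le d \le n$), so the subtracted fraction converges to $1/2$. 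Multiplying the two limits, the lower bound on $\DD{\mF'}{\mF}$ converges to $1 \cdot (1 - 1/2) = 1/2$.

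Finally, since $\DD{\mF'}{\mF}$ is bounded below by a quantity converging to $1/2$, the $\varepsilon$-statement is immediate: given $\varepsilon \in \R_{>0}$, choose $q_\varepsilon$ so that the lower bound exceeds $1/2 - \varepsilon$ for all $q \ge q_\varepsilon$, and this inequality transfers to $\DD{\mF'}{\mF}$; the assertion about the limit, when it exists, follows at once. The only delicate point — more a matter of care than a genuine obstacle — is the degree count in the division step: were the one-degree gap between $\Lambda$ and $\qbin{mn}{k}{q}$ miscounted, the explicit factor of $q$ would fail to compensate it and the limit would collapse to $0$ or diverge, rather than landing on the sharp value $1/2$.
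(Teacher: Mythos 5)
Your proposal is correct and follows essentially the same route as the paper: substitute the lower bound of Theorem~\ref{bigthm}, use Proposition~\ref{prop:asymp} to see that $q\cdot\Lambda(q;mn,m(d-1),k)/\qbin{mn}{k}{q}\to 1$, and observe that the parenthetical factor tends to $1/2$. Your factorization into two ratios each tending to $1$ is, if anything, slightly more careful than the paper's one-line leading-term computation, but the argument is the same.
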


\begin{proof}
By Theorem~\ref{bigthm} we have 
\begin{equation} \label{concl1}
\frac{|\mF'|}{|\mF|} \ge
\frac{q \cdot  \Lambda(q;mn,m(d-1),k)}{|\mF|} \left( 1- 
\frac{\left(q^k-1\right)\left(q^{mn-k}-1\right)}{2\left(q^{mn}-q^{mn-k} \right)}\right). 
\end{equation}
Proposition~\ref{prop:asymp} allows us to compute
\begin{equation*} \label{thelimit}
\lim_{q\to +\infty} \frac{q\cdot  \Lambda(q;mn,m(d-1),k)}{|\mF|} = \lim_{q\to +\infty} \frac{q\cdot q^{k(mn-k)-1}}{q^{k(mn-k)}}
=1.\end{equation*} 
Therefore
\begin{equation} \label{concl2}
\lim_{q \to +\infty} \frac{q \cdot  \Lambda(q;mn,m(d-1),k)}{|\mF|} \cdot \left(1- \frac{\left(q^k-1\right)\left(q^{mn-k}-1\right)}{2\left(q^{mn}-q^{mn-k} \right)}\right)=1 \left(1-\frac{1}{2}\right)=\frac{1}{2}.
\end{equation}
The statement can now be obtained by combining (\ref{concl1}) and (\ref{concl2}).
\end{proof}


\begin{remark} \label{rem:commM}
 In the notation of the proof of Theorem~\ref{bigthm}, there are other choices of $M$ that would still yield a positive lower bound for $\lim_{q \to +\infty} |\mF'|/|\mF|$.
  Suppose that $M$ has the form $M=\alpha q^r$ for some $\alpha >0$ and $r\geq 0$. Substituting into (\ref{eq:ineq1}), we get 
  \begin{equation*} \label{concl3}
  \frac{|\mF'|}{|\mF|} \ge
  \frac{\alpha q^r \cdot \Lambda}{|\mF|} \left( 1- 
  \frac{\left(q^k-1\right)\left(q^{mn-k}-1\right)(\alpha q^r-1)}{\left(q^{mn}-q^{mn-k} \right)(q-1)}\cdot\frac{1}{2}\right). 
  \end{equation*}
  The right-hand side converges to a non-negative limit as $q \to + \infty$ if and only if $0\leq r\leq 1$. Now if $0\leq r<1$, then 
  $$\lim_{q \to +\infty} \frac{\alpha q^r \cdot \Lambda}{|\mF|} \left( 1- 
  \frac{\left(q^k-1\right)\left(q^{mn-k}-1\right)(\alpha q^r-1)}{\left(q^{mn}-q^{mn-k} \right)(q-1)}\cdot\frac{1}{2}\right) 
  = \lim_{q \to +\infty} \alpha q^{r-1}\left( 1- \frac{\alpha q^{r-1}}{2}\right) =0.$$
  If $r=1$, then
  $\lim_{q \to +\infty} |\mF'|/|\mF| = \alpha \left( 1- \alpha/2\right),$
  and the lower bound is maximal for $\alpha=1$.  
\end{remark}

We now compute the limit for the same formula as a function of $m$.  

\begin{corollary} \label{lim_m}
Fix an integer $d$ with $2 \le d \le n$, and let $k:=m(n-d+1)$. Define the families
$\mF:=\{\mC \le \mat \mid \dim(\mC)=k\}$ and $\mF':=\{\mC \in \mF \mid \mC \mbox{ is not MRD}\}$.
Then for every $\varepsilon \in \R_{>0}$ there exists $m_\varepsilon \in \N$ such that 
$$\DD{\mF'}{\mF} \ge \frac{1}{2} \left( \frac{q}{q-1}-\frac{1}{(q-1)^2} \right) -\varepsilon$$ for all integers $m \ge m_\varepsilon$.
In particular, 
$\lim_{m \to \infty} \DD{\mF'}{\mF} \ge 1/2$, provided the limit exists. 
\end{corollary}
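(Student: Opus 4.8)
The plan is to divide the explicit lower bound for $|\mF'|$ in Theorem~\ref{bigthm} by $|\mF|=\qbin{mn}{k}{q}$ and let $m\to+\infty$ with $q,n,d$ fixed. Writing $\Lambda:=\Lambda(q;mn,m(d-1),k)$, this yields
\[
\frac{|\mF'|}{|\mF|}\ge\frac{q\,\Lambda}{|\mF|}\left(1-\frac{(q^{k}-1)(q^{mn-k}-1)}{2\,(q^{mn}-q^{mn-k})}\right),
\]
so I must understand two factors. The correction factor is elementary: since $mn-k=m(d-1)$, the inner fraction simplifies exactly to $1-q^{-m(d-1)}$, so the bracket equals $\tfrac12(1+q^{-m(d-1)})\to\tfrac12$. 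Note that, unlike the companion statement for $q\to+\infty$, Proposition~\ref{prop:asymp} is unavailable here, so the limit of $q\Lambda/|\mF|$ must be computed directly.

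Since $\Lambda$ is the number of $k$-dimensional codes meeting a fixed $(mn-k)$-dimensional space $\mD$ nontrivially, complementation gives $\Lambda=\qbin{mn}{k}{q}-q^{k(mn-k)}$, the term $q^{k(mn-k)}$ being the number of $k$-dimensional complements of $\mD$. Hence $q\Lambda/|\mF|=q\bigl(1-q^{k(mn-k)}/\qbin{mn}{k}{q}\bigr)$, and the crucial identity is
\[
\frac{q^{k(mn-k)}}{\qbin{mn}{k}{q}}=\prod_{i=1}^{k}\frac{1-q^{-i}}{1-q^{-(mn-k+i)}}.
\]
The main obstacle is evaluating the limit of this product: as $m\to+\infty$ both $k$ and $mn-k$ diverge, and I would argue that the numerator converges to the $q$-Pochhammer constant $\phi_q:=\prod_{i\ge1}(1-q^{-i})$ while the denominator, all of whose factors carry exponent at least $m(d-1)+1$, tends to $1$ (say by bounding its logarithm through $\sum_{j>m(d-1)}q^{-j}\to0$). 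Thus $q\Lambda/|\mF|\to q(1-\phi_q)$.

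It remains to bound $q(1-\phi_q)$ below by the stated expression. Applying the second Bonferroni inequality $\prod_{i\ge1}(1-a_i)\le 1-e_1+e_2$ (with $e_1,e_2$ the first two elementary symmetric functions of the $a_i$) to $a_i=q^{-i}$, where $e_1=\sum_{i\ge1}q^{-i}=\frac1{q-1}$ and $e_2=\sum_{i<j}q^{-(i+j)}=\frac1{(q-1)^2(q+1)}$, gives $1-\phi_q\ge e_1-e_2$. Therefore
\[
q(1-\phi_q)\ge\frac{q}{q-1}-\frac{q}{(q-1)^2(q+1)}\ge\frac{q}{q-1}-\frac{1}{(q-1)^2},
\]
the last step using $q/(q+1)\le1$. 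Multiplying by the limiting factor $\tfrac12$ gives $\liminf_{m\to+\infty}|\mF'|/|\mF|\ge\frac12\bigl(\frac{q}{q-1}-\frac1{(q-1)^2}\bigr)$, which I would finally phrase in the required form with $\varepsilon$ and $m_\varepsilon$.
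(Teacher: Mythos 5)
Your proof is correct, and it takes a genuinely different route from the paper's. The paper works directly with the M\"obius-inversion expansion of $\Lambda(q;mn,m(d-1),k)$: it isolates the $h=1$ term, keeps the first two summands of the alternating inner sum, proves the remaining tail $\sigma(t)$ is non-negative by pairing consecutive terms, and thereby obtains an explicit rational function $\beta(m)$ bounding $q\Lambda/|\mF|$ from below whose limit is exactly $\frac{q}{q-1}-\frac{1}{(q-1)^2}$. You instead exploit the special feature of this corollary that $\dim(\mD)+k=mn$, so that the $k$-dimensional codes meeting $\mD$ trivially are precisely the complements of $\mD$; this collapses $\Lambda$ to the closed form $\qbin{mn}{k}{q}-q^{k(mn-k)}$ and reduces everything to the $q$-Pochhammer product $\prod_{i=1}^{k}(1-q^{-i})\big/\prod_{i=1}^{k}(1-q^{-(mn-k+i)})$, whose limit as $m\to+\infty$ you evaluate correctly (numerator to $\phi_q$, denominator to $1$ since every exponent exceeds $m(d-1)$). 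Your Bonferroni step $1-\phi_q\ge e_1-e_2$ with $e_1=\frac{1}{q-1}$ and $e_2=\frac{1}{(q-1)^2(q+1)}$ is valid and recovers the stated constant. What each approach buys: yours identifies the \emph{exact} limit $\frac{q}{2}(1-\phi_q)$ of the lower bound coming from Theorem~\ref{bigthm}, which is slightly sharper than the constant in the statement, and avoids the delicate sign analysis of $\sigma(t)$; the paper's argument stays entirely within the $\Lambda$ machinery it has already set up (and its tail-positivity lemma is reusable for parameters where $\dim(\mD)+k\neq N$, when your complementation shortcut is unavailable).
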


\begin{proof}
To simplify notation, throughout the proof we let $N=mn$ and $t=N-k=m(d-1)$ . 
By Proposition~\ref{pro:inters} we have
$$\Lambda:=\Lambda(q;mn,m(d-1),k)=\Lambda(q;N,t,k)=
\sum_{h=1}^t \qbin{t}{h}{q} \sum_{s=h}^t \qbin{t-h}{s-h}{q} \qbin{N-s}{t}{q} (-1)^{s-h} q^{\binom{s-h}{2}}.$$
Moreover, from the proof of the proposition we see that
\begin{equation}
\sum_{s=h}^t \qbin{t-h}{s-h}{q} \qbin{N-s}{t}{q} (-1)^{s-h} q^{\binom{s-h}{2}} \ge 0 \label{qty} \quad \mbox{for all } 1 \le h \le t.
\end{equation}
Indeed, the quantity in (\ref{qty}) equals $|\{\mC \le X \mid \dim(\mC)=k, \ \mC  \cap \mD =\mH\}|$,
where $\mD,\mH \le \mat$ are any spaces with $\mH \le \mD$,
$\dim(\mD)=t$ and $\dim(\mH)=h$.
In particular,
\begin{eqnarray}
\Lambda \cdot \qbin{t}{1}{q}^{-1} &\ge&  \sum_{s=1}^t \qbin{t-1}{s-1}{q} \qbin{N-s}{t}{q} (-1)^{s-1} q^{\binom{s-1}{2}} \nonumber\\
&=& \qbin{N-1}{t}{q} - \qbin{t-1}{1}{q} \qbin{N-2}{t}{q} + \sum_{s=3}^t \qbin{t-1}{s-1}{q} \qbin{N-s}{t}{q}(-1)^{s-1} q^{\binom{s-1}{2}}. \ \ \ \ \ \label{extraterm}
\end{eqnarray}

We now show that the sum
$$\sigma(t)=\sum_{s=3}^t \qbin{t-1}{s-1}{q} \qbin{N-s}{t}{q}(-1)^{s-1} q^{\binom{s-1}{2}}=\sum_{s=3}^{\min\{t,N-t\}} \qbin{t-1}{s-1}{q} \qbin{N-s}{t}{q}(-1)^{s-1} q^{\binom{s-1}{2}},$$
is non-negative.
If $t \in \{2,3\}$ or $N-t \in \{2,3\}$, then the result is straightforward. Therefore we assume $t \ge 4$ and $N-t\ge 4$. Define $\overline{t}:=\max\{s \in \N \mid s \le \min\{t, N-t\}, \ s \mbox{ is even}\}$. Note that $\min\{t,N-t\}-\overline{t} \in \{0,1\}$.
Then we have
\begin{eqnarray}\sigma(t) &\ge&  \sum_{s=3}^{\overline{t}} \qbin{t-1}{s-1}{q} \qbin{N-s}{t}{q}(-1)^{s-1} q^{\binom{s-1}{2}} \nonumber \\ &=& \sum_{\substack{s \in \{3,...,\overline{t}-1\} \\ s \textnormal{ odd}}} \qbin{t-1}{s-1}{q} \qbin{N-s}{t}{q} q^{\binom{s-1}{2}} - \qbin{t-1}{s}{q} \qbin{N-s-1}{t}{q} q^{\binom{s}{2}}. \label{ssi}
\end{eqnarray}
By definition of $q$-binomial coefficient, for all $3 \le s \le \overline{t}-1$ we have
\begin{equation}\qbin{t-1}{s-1}{q} \qbin{N-s}{t}{q}q^{\binom{s-1}{2}}=
\qbin{t-1}{s}{q} \qbin{N-s-1}{t}{q} q^{\binom{s}{2}} \cdot \frac{(q^s-1)(q^{N-s}-1)q^{-s+1}}{(q^{t-s+1}-1)(q^{N-s-t}-1)},
\label{fract}\end{equation}
where the fraction on the right-hand side of (\ref{fract}) is well-defined by our choice of $\overline{t}$.
We have $q^r-1 \ge q^r/2$ for all $r \in \N$ with $r \ge 1$. Therefore
$$\frac{(q^s-1)(q^{N-s}-1)q^{-s+1}}{(q^{t-s+1}-1)(q^{N-s-t}-1)} \ge \frac{q^s}{4} \ge \frac{q^3}{4} > 1 \quad \mbox{ for all } 3 \le s \le \overline{t}-1.$$
In particular, from (\ref{fract}) we deduce that
\begin{equation}\qbin{t-1}{s-1}{q} \qbin{N-s}{t}{q}q^{\binom{s-1}{2}} \ge 
\qbin{t-1}{s}{q} \qbin{N-s-1}{t}{q} q^{\binom{s}{2}} \quad \mbox{ for all } 3 \le s \le \overline{t}-1.
\label{frac2}
\end{equation}
The fact that $\sigma(t)\ge 0$ now follows combining (\ref{ssi}) and (\ref{frac2}).

Using the above result and the bound in (\ref{extraterm}) we find
\begin{eqnarray}
\frac{q \cdot \Lambda}{|\mF|} &\ge& q \cdot {\qbin{N}{t}{q}}^{-1} \left( \qbin{t}{1}{q} \qbin{N-1}{t}{q}- \qbin{t}{1}{q} \qbin{t-1}{1}{q} \qbin{N-2}{t}{q}\right) \nonumber \\
&=& \frac{q(q^t-1)(q^{N-t}-1)}{(q-1)(q^N-1)}-
\frac{q(q^t-1)(q^{t-1}-1)(q^{N-t-1}-1)(q^{N-t}-1)}{(q-1)^2(q^N-1)(q^{N-1}-1)}. \label{lastexpr}
\end{eqnarray}
Denote by $\beta(m)$ the expression in (\ref{lastexpr}). Then applying Theorem~\ref{bigthm} one gets
$$|\mF'|/|\mF| \ge \beta(m) \cdot \left( 1- 
\frac{\left(q^k-1\right)\left(q^{mn-k}-1\right)}{2\left(q^{mn}-q^{mn-k} \right)}\right).$$
It is easy to check that, since $k\leq m(n-1)$, we have
$$\lim_{m \to +\infty} \beta(m) \cdot \left( 1- 
\frac{\left(q^k-1\right)\left(q^{mn-k}-1\right)}{2\left(q^{mn}-q^{mn-k} \right)}\right) =\frac{1}{2} \left(\frac{q}{q-1} - \frac{1}{(q-1)^2} \right),$$
from which the result follows.
\end{proof}

\section{Density of Maximal Codes}\label{sec:densmaxl}

Maximal codes are those that are not contained any larger code with the same minimum distance. Any extremal code with respect to the minimum distance
is therefore maximal, but the converse does not necessarily hold. However, as we show in this section, 
for $q$ sufficiently large the MDS codes are dense in the set of $\F_q$-linear maximal codes, and for $m$ sufficiently large
the vector MRD codes are dense in the set of $\F_{q^m}$-linear maximal codes.

\begin{definition} \label{def:maxcode}
An $\F_Q$-linear code $\mC \le X$ is called \textbf{maximal} if $\mC \neq \{0\}$, and there is no code $\mC' \le X$ such that
$\mC' \ge \mC$, $\dim(\mC')=\dim(\mC)+1$ and $d(\mC')=d(\mC)$.
\end{definition}

The following is an immediate consequence of Theorem~\ref{th_dist1}. We will shortly apply it for both Hamming metric and vector rank metric codes.

\begin{corollary}\label{cor:max1}
Let $\mC \le X$ be an $\F_Q$-linear code of dimension $k<N$ and minimum distance $d$. The number of $\F_Q$-linear codes in $\mC' \le X$ such that $\mC \le \mC'$, $\dim(\mC')=k+1$, and $d(\mC') = d$ is at least
	\begin{equation*}
	     \qbin{n-k}{1}{Q} \left(1-\frac{Q^{k}}{Q^{n}-Q^{k}} (\ball(d-1)-1)\right).
	\end{equation*}
\end{corollary}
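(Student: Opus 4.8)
The plan is to count all one-dimensional extensions of $\mC$ and then use Theorem~\ref{th_dist1} to discard those whose minimum distance falls below $d$. First I would note that the codes $\mC' \le X$ with $\mC \le \mC'$ and $\dim(\mC') = k+1$ are in bijection with the one-dimensional subspaces of the quotient $X/\mC$, which has dimension $N-k$; hence there are exactly $\qbin{N-k}{1}{Q}$ of them. This is the cardinality $|\mF|$ produced by Proposition~\ref{examp1} with $\mD = \mC$, $t = k$, and target dimension $k+1$.

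The crucial observation is that $\mC \le \mC'$ forces $d(\mC') \le d(\mC) = d$, so an extension $\mC'$ satisfies $d(\mC') = d$ \emph{exactly} when $d(\mC') \not< d$. Thus the desired count equals $\qbin{N-k}{1}{Q}$ minus the number of extensions with $d(\mC') < d$, and it is this latter quantity that Theorem~\ref{th_dist1} bounds. I would apply that theorem with $\mD := \mC$, $t := k$, and the theorem's dimension parameter equal to $k+1$: the hypotheses $t = k < N$ (so $k+1 \le N$) and $2 \le d \le d(\mC) = d(\mD)$ are satisfied, the degenerate case $d = 1$ being trivial since then $\ball(d-1) - 1 = 0$ and every extension automatically has distance $d$.

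Theorem~\ref{th_dist1} then gives $|\mF'| \le \qbin{N-k}{1}{Q} \cdot \frac{Q^{k+1}-Q^k}{(Q-1)(Q^N-Q^k)}(\ball(d-1)-1)$, and since $Q^{k+1}-Q^k = Q^k(Q-1)$ cancels the factor $Q-1$ in the denominator, this simplifies to $\qbin{N-k}{1}{Q} \cdot \frac{Q^k}{Q^N-Q^k}(\ball(d-1)-1)$. Subtracting this from the total $\qbin{N-k}{1}{Q}$ yields the claimed lower bound. There is essentially no real obstacle here: the whole argument is choosing the right instance of Theorem~\ref{th_dist1} and simplifying one factor, with the only delicate point being the monotonicity remark $d(\mC') \le d(\mC)$ that converts the ``distance exactly $d$'' condition into the complement of a count the theorem already controls.
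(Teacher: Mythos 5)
Your proof is correct and is exactly the argument the paper intends: the paper presents this corollary as an ``immediate consequence of Theorem~\ref{th_dist1}'', and the intended instantiation is precisely yours ($\mD:=\mC$, $t:=k$, target dimension $k+1$, with $|\mF|=\qbin{N-k}{1}{Q}$ from Proposition~\ref{examp1} and the monotonicity $d(\mC')\le d(\mC)$ turning ``distance exactly $d$'' into the complement of the count bounded by the theorem). Your handling of the degenerate case $d=1$ and the cancellation $Q^{k+1}-Q^k=Q^k(Q-1)$ are both right; the $n$ versus $N$ discrepancy in the corollary's displayed formula is just a typo in the paper.
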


\begin{proposition}
The following hold.
\begin{enumerate}
	\item If $q$ is sufficiently large, all maximal codes $\mC \le \F_q^n$ with respect to the Hamming metric are MDS.
	Equivalently, if $\mF$ is the family of $\F_{q}$-linear maximal codes in $\F_q^n$, and $\mF'$ is the family of MDS codes in 
	$\F_q^n$, then $\DD{\mF'}{\mF}=1$ for $q$ sufficiently large.
	\item If $m$ is sufficiently large, all maximal codes $\mC \le \F_{q^m}^n$ with respect to the rank metric are MRD.
	Equivalently, if $\mF$ is the family of $\F_{q^m}$-linear maximal codes in $\F_{q^m}^n$, and $\mF'$ is the family of MRD codes in 
	$\F_{q^m}^n$, then $\DD{\mF'}{\mF}=1$ for $m$ sufficiently large.
\end{enumerate}
 
\end{proposition}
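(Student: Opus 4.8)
The plan is to prove both parts by a single argument, exploiting the fact that maximality of a code is exactly the non-existence of the one-dimensional extension whose abundance is lower-bounded in Corollary~\ref{cor:max1}.

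First I would record the easy inclusion, valid for every $q$ (resp.\ $m$): every MDS code (resp.\ vector rank metric MRD code) is maximal. Indeed, if $\mC$ has dimension $k$ and meets the Singleton (resp.\ Singleton-type) bound, then any $\mC' \ge \mC$ with $\dim(\mC')=k+1$ satisfies $d(\mC') \le n-k < n-k+1 = d(\mC)$ by Proposition~\ref{Singleton_redun} (resp.\ Proposition~\ref{lem:rkred}), so no such extension preserves the minimum distance (and when $k=n$ no such $\mC'$ exists at all). Hence $\mF' \subseteq \mF$ always, and once I show the reverse inclusion asymptotically the equality $\mF=\mF'$ gives $\DD{\mF'}{\mF}=1$ for free.

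For the reverse inclusion I would take a maximal code $\mC \le X$ of dimension $k$ and minimum distance $d$ and assume it is \emph{not} MDS (resp.\ MRD). Then $1 \le k < n$ and $d \le n-k$, i.e.\ $d+k \le n$. Applying Corollary~\ref{cor:max1} (with $N=n$), maximality forces the lower bound there to be non-positive, and since $\qbin{n-k}{1}{Q}>0$ this yields
$$\frac{Q^k}{Q^n - Q^k}\left(\ball(d-1) - 1\right) \ge 1.$$
The heart of the proof is to contradict this for large $q$ (resp.\ $m$). Using $\frac{Q^k}{Q^n-Q^k}=(Q^{n-k}-1)^{-1}\in\Theta(Q^{-(n-k)})$ together with the ball estimates — $\bH(d-1)-1\in\Theta(q^{d-1})$ as $q\to+\infty$ from (\ref{estimate_ball_H}) in the Hamming case ($Q=q$), and $\brk(d-1)-1\in\Theta(q^{m(d-1)})$ as $m\to+\infty$ from (\ref{estimate_ball_rk}) in the vector rank case ($Q=q^m$) — the left-hand side lies in $\Theta(q^{d+k-1-n})$ respectively $\Theta(q^{m(d+k-1-n)})$. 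Because $d+k\le n$, the exponent is at most $-1$ (resp.\ $\le -m$), so the left-hand side tends to $0$, contradicting the displayed inequality once $q$ (resp.\ $m$) is sufficiently large.

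Finally I would make the threshold uniform. Since $n$ is fixed there are only finitely many pairs $(k,d)$ with $1\le k<n$ and $2\le d\le n-k$; each produces a threshold $q_{k,d}$ (resp.\ $m_{k,d}$), and the degenerate case $d=1$ needs no threshold, since then $\ball(d-1)-1=0$ and the bound of Corollary~\ref{cor:max1} equals $\qbin{n-k}{1}{Q}>0$, so any such code with $k<n$ is extendable and hence never maximal. Taking the maximum of the finitely many thresholds gives a single $q_0$ (resp.\ $m_0$) beyond which no non-MDS (resp.\ non-MRD) code is maximal, whence $\mF=\mF'$ and $\DD{\mF'}{\mF}=1$. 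I expect the only delicate part to be the bookkeeping of the boundary cases — the full space $k=n$ (which is MDS/MRD and maximal, and outside the scope of Corollary~\ref{cor:max1}) and the $d=1$ codes — and, in part~(2), the careful substitution $Q=q^m$, $N=n$ so that all asymptotics are taken in $m$ rather than in $q$.
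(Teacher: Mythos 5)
Your proposal is correct and follows essentially the same route as the paper: assume a maximal code is not MDS (resp.\ MRD), so $d+k\le n$, and combine Corollary~\ref{cor:max1} with the ball estimates (\ref{estimate_ball_H}) and (\ref{estimate_ball_rk}) to show the obstruction term $\frac{Q^k}{Q^n-Q^k}(\ball(d-1)-1)$ vanishes asymptotically, forcing the existence of a $(k+1)$-dimensional extension with the same minimum distance and contradicting maximality. Your extra bookkeeping (uniformity of the threshold over the finitely many pairs $(k,d)$, the cases $k=n$ and $d=1$, and the reverse inclusion that extremal codes are maximal) is a welcome tightening of details the paper leaves implicit, but it is not a different argument.
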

\begin{proof}
Let $\mC \le \F_q^n$ be a non-zero maximal code. Denote by
 $k$ and $d$ the dimension and the minimum Hamming distance of $\mC$, respectively. Towards a contradiction, assume that
$d \le n-k$, so in particular, $\mC$ is not MDS. 
Combining Corollary~\ref{cor:max1} for the Hamming metric with the estimate in (\ref{estimate_ball_H}), we obtain 
$$\frac{q^k}{q^n-q^k} (\bH(d-1)-1) \in \mO\left( q^{k-n+d-1}\right) \mbox{ as } q \to +\infty.$$
Since $d \le n-k$, we have $k-n+d-1<0$. Therefore if $q$ is large, there exists at least one $ [n,k+1]_{q}$ code of minimum distance $d$ containing $\mC$, contradicting our choice of $\mC$ as maximal. 
We deduce that, for sufficiently large $q$, every maximal code is MDS. 

Similarly, if $\mC \le \F_{q^m}^n$ is a non-zero maximal code of $\F_{q^m}$-dimension $k$ and minimum rank distance $d\leq n-k$ over $\F_q$, then applying  Corollary~\ref{cor:max1} for the rank metric and the estimate in
(\ref{estimate_ball_rk}) we see that
$$\frac{q^{mk}}{q^{mn}-q^{mk}} (\brk(d-1)-1) \in \mO\left( q^{m(k-n+d-1)}\right) \mbox{ as } m \to +\infty.$$
As before, we deduce that there exists a $[n,k+1]_{q^m}$ code of minimum distance $d$ containing $\mC$, yielding a contradiction. Therefore, $d =n-k+1$ and $\mC$ is MRD. 
\end{proof}

\section{Density of Codes of Given Covering Radius}\label{denscovrad}

In this section we compute the density of codes of prescribed (often extremal) covering radius for Hamming and rank metric codes. These results do not use the machinery of partition-balanced families, but we include them for completeness.

\begin{definition}
Let $\mC \le X$ be an $\F_Q$-linear code, and let $r \in \N$. The 
\textbf{cloud} of radius $r$ about $\mC$ is the set
$$\Ball(\mC,r):= \bigcup_{x \in \mC} \Ball(x,r).$$
\end{definition}

The cloud about $\mC$ relates to the covering radius of $\mC$ as follows: For all $\rho \in \N$,  
$\rho(\mC) \le \rho$ if and only if $\Ball(\mC,\rho)=X$.
This implies the following simple preliminary result. 

\begin{lemma} \label{ovvio}
Let $\mC \le X$ be a linear code of dimension $k$, and let $\rho \ge 1$ be an integer. Assume
$q^k \cdot \ball(\rho-1)<q^N$. Then $\rho(\mC) \ge \rho$.
\end{lemma}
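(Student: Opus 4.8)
The plan is to argue via the characterization of the covering radius recorded immediately before the lemma: for every $r \in \N$ one has $\rho(\mC) \le r$ if and only if $\Ball(\mC,r)=X$. Taking $r=\rho-1$, the desired conclusion $\rho(\mC) \ge \rho$ is precisely the negation of $\rho(\mC) \le \rho-1$, hence it is equivalent to the assertion that the cloud $\Ball(\mC,\rho-1)$ is a \emph{proper} subset of $X$; in words, the balls of radius $\rho-1$ centred at the codewords of $\mC$ fail to cover the whole ambient space.

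To establish this I would bound the size of the cloud by a union bound over the codewords. Since $\Ball(\mC,\rho-1)=\bigcup_{x \in \mC} \Ball(x,\rho-1)$, subadditivity of cardinality gives
$$|\Ball(\mC,\rho-1)| \le \sum_{x \in \mC} |\Ball(x,\rho-1)|.$$
Each individual ball has cardinality $\ball(\rho-1)$, independently of its centre (this is exactly the center-independence of the ball size fixed in the notation earlier, coming from Property~\ref{pr4}), and $\mC$ has precisely $q^k$ codewords, so the right-hand side equals $q^k \cdot \ball(\rho-1)$. The hypothesis $q^k \cdot \ball(\rho-1) < q^N = |X|$ then forces $|\Ball(\mC,\rho-1)| < |X|$, so the cloud cannot be all of $X$. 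Feeding this back into the covering-radius characterization yields $\rho(\mC) > \rho-1$, that is, $\rho(\mC) \ge \rho$.

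There is no genuine obstacle here: the argument is a one-line covering/counting estimate. The only point requiring mild care is that the union bound is in general strict, because the balls about distinct codewords overlap; but since we only need an \emph{upper} bound on the cloud size, this overcounting is harmless, and the strict inequality in the hypothesis is exactly what delivers the strict properness $\Ball(\mC,\rho-1) \subsetneq X$. The crux is thus purely the bookkeeping — correctly invoking the center-independence of $\ball(\rho-1)$ together with the equivalence between a small covering radius and the cloud filling $X$.
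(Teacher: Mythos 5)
Your argument is correct and is exactly the counting argument the paper has in mind (the paper omits the proof, presenting the lemma as an immediate consequence of the cloud characterization $\rho(\mC)\le r \iff \Ball(\mC,r)=X$ together with the union bound $|\Ball(\mC,\rho-1)|\le |\mC|\cdot\ball(\rho-1)$). No issues.
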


\subsection{Codes of Prescribed Dimension}

Recall from Section~\ref{sec:dist-reg} the well-known upper bounds for the covering radius of Hamming and vector rank metric codes in the form of the redundancy bounds:
$$\rhoH(\mC) \leq n-k \quad \text{ and } \quad \rhork(\mC) \leq n-k,$$
for codes $\mC$ of dimension $k$ over $\fq$ and $\F_{q^m}$, respectively. It is not hard to see that for $q$ and $m$ sufficiently large,
all codes meet these bounds, respectively, as we show shortly.

\begin{proposition} \label{prop:HammingCR}
Let $k$ be an integer with $0 \le k \le n$, and let $\mC \le \F_q^n$ be an $[n,k]_q$ code. We have
$$\rhoH(\mC) =n-k, \quad \mbox{provided that $q$ is  sufficiently large}.$$

Equivalently, if $\mF$ is the family of $[n,k]_q$ codes and 
$\mF':=\{\mC \in \mF \mid \rhoH(\mC) =n-k\}$, 
then $\DD{\mF'}{\mF}=1$ for $q$ sufficiently large.
\end{proposition}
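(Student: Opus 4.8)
The plan is to combine the redundancy bound with Lemma~\ref{ovvio}. The redundancy bound (Proposition~\ref{Singleton_redun}) already gives $\rhoH(\mC) \le n-k$ for every $[n,k]_q$ code $\mC$, so the task reduces to establishing the matching lower bound $\rhoH(\mC) \ge n-k$ for all $q$ sufficiently large, uniformly in $\mC$.

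First I would dispose of the degenerate cases: when $k=n$ the code is all of $\F_q^n$ and $\rhoH(\mC)=0=n-k$, while when $k=0$ the code is $\{0\}$ and $\rhoH(\mC)=n=n-k$. So I may assume $1 \le k \le n-1$, which guarantees $\rho:=n-k \ge 1$ and hence that Lemma~\ref{ovvio} is applicable.

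Next I would apply Lemma~\ref{ovvio} with $N=n$ and $\rho=n-k$: if $q^k \cdot \bH(n-k-1) < q^n$, then $\rhoH(\mC) \ge n-k$. To verify the hypothesis I would invoke the asymptotic estimate (\ref{estimate_ball_H}), namely $\bH(n-k-1) \in \Theta(q^{n-k-1})$ as $q \to +\infty$, which gives $q^k \cdot \bH(n-k-1) \in \Theta(q^{n-1})$. Since any function in $\Theta(q^{n-1})$ is eventually strictly dominated by $q^n$, there is a threshold $q_0$ such that $q^k \cdot \bH(n-k-1) < q^n$ for all $q \ge q_0$. Combining the resulting lower bound $\rhoH(\mC) \ge n-k$ with the redundancy bound yields $\rhoH(\mC) = n-k$ whenever $q \ge q_0$.

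There is no genuinely hard step here; the only point deserving attention is the \emph{uniformity} in $\mC$. Because the hypothesis of Lemma~\ref{ovvio} and the ball-size estimate depend only on the parameters $n$ and $k$ and not on the individual code, the threshold $q_0$ is independent of $\mC$. Consequently \emph{every} $[n,k]_q$ code attains covering radius $n-k$ once $q \ge q_0$, so in fact $\mF'=\mF$ for $q$ large. This gives $\DD{\mF'}{\mF}=1$ for $q$ sufficiently large, which is the exact (rather than merely asymptotic) form of the density claim.
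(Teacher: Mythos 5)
Your proposal is correct and matches the paper's own proof: both dispose of the trivial case(s), use the redundancy bound for the upper estimate, and apply Lemma~\ref{ovvio} together with the ball-size estimate (\ref{estimate_ball_H}) to get $q^k \cdot \bH(n-k-1) \in \mO(q^{n-1}) < q^n$ for large $q$, hence the matching lower bound. The remark on uniformity of the threshold in $\mC$ is a nice explicit touch that the paper leaves implicit.
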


\begin{proof}	
The result is clear if $k=n$, thus we assume $k<n$. By the redundancy bound of Proposition~\ref{Singleton_redun} we have $\rhoH(\mC) \le n-k$. Using the estimate in (\ref{estimate_ball_H}) one easily sees that
$q^k \cdot \bH(n-k-1) \in \mO(q^{n-1})$ as $q \to +\infty$ and so for $q$ sufficiently large, by Lemma~\ref{ovvio}, it holds that $\rhoH(\mC) \geq n-k$.
The result follows.
\end{proof}

We now show the analogous result for vector rank metric codes.

\begin{proposition} \label{prop:rkCR}
Let $k$ be an integer with $0 \le k \le n$, and let $\mC \le \F_{q^m}^n$ be an $[n,k]_{q^m}$ vector rank metric code.
We have
$$\rhork(\mC) =n-k, \quad \mbox{provided that $m$ is  sufficiently large}.$$

Equivalently, if $\mF$ is the family of $[n,k]_{q^m}$ codes and 
$\mF':=\{\mC \in \mF \mid \rhork(\mC) =n-k\}$, 
then $\DD{\mF'}{\mF}=1$ for $m$ sufficiently large.
\end{proposition}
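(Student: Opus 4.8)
The plan is to mirror the proof of Proposition~\ref{prop:HammingCR}, replacing the Hamming ingredients with their rank-metric counterparts and working in the $q^m$-ary distance-regular space $(\F_{q^m}^n,\drk)$, for which $Q=q^m$ and $N=n$. The case $k=n$ is trivial, since then the covering radius is $0=n-k$; so I would assume $k<n$ throughout. For the upper bound, I would invoke the redundancy bound of Proposition~\ref{lem:rkred}, which gives $\rhork(\mC)\le n-k$ for every $[n,k]_{q^m}$ code $\mC$.

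The substance of the argument is the matching lower bound. Here I would apply Lemma~\ref{ovvio} with $Q=q^m$, $N=n$ and $\rho=n-k$: it suffices to verify the ball-counting inequality $(q^m)^k\cdot\brk(n-k-1)<q^{mn}$, which then forces $\rhork(\mC)\ge n-k$. To check this inequality for large $m$, I would use the estimate in (\ref{estimate_ball_rk}), namely $\brk(r)\in\Theta(q^{rm})$ as $m\to+\infty$. This yields $q^{mk}\cdot\brk(n-k-1)\in\Theta(q^{m(n-1)})$, so that $q^{mk}\cdot\brk(n-k-1)/q^{mn}\in\Theta(q^{-m})$, which tends to $0$ as $m\to+\infty$. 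Hence the inequality holds once $m$ exceeds some threshold, and at that point $\rhork(\mC)=n-k$.

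The key feature, and the reason the density statement follows at once, is that the threshold on $m$ produced by this argument depends only on $q$, $n$ and $k$, not on the individual code $\mC$: Lemma~\ref{ovvio} yields $\rhork(\mC)\ge n-k$ uniformly for every $k$-dimensional code as soon as the inequality holds. Thus for $m$ sufficiently large one has $\mF=\mF'$, giving density $1$.

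I do not anticipate a serious obstacle; the only point requiring care is to use the correct asymptotic regime $m\to+\infty$ (rather than $q\to+\infty$), and hence the branch $\brk(r)\in\Theta(q^{rm})$ of the ball estimate in (\ref{estimate_ball_rk}), since this is precisely the regime in which the gap $q^{mk}\cdot\brk(n-k-1)<q^{mn}$ opens up and the redundancy bound becomes tight for the rank metric.
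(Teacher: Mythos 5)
Your proposal is correct and follows essentially the same route as the paper: the redundancy bound of Proposition~\ref{lem:rkred} for the upper bound, and Lemma~\ref{ovvio} combined with the estimate $\brk(r)\in\Theta(q^{rm})$ from (\ref{estimate_ball_rk}) to show $q^{mk}\cdot\brk(n-k-1)$ is eventually smaller than $q^{mn}$, forcing $\rhork(\mC)\ge n-k$ for all codes simultaneously once $m$ is large. Your explicit remark that the threshold is uniform in $\mC$ is a worthwhile clarification that the paper leaves implicit.
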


\begin{proof}
The result is immediate if $k=n$. Now assume $k<n$. By the rank metric redundancy bound of Proposition~\ref{lem:rkred} we have $\rhork(\mC) \leq n-k$. 
From the estimate in (\ref{estimate_ball_rk}) it follows that $\brk(\rho) \in \Theta(q^{\rho m})$ as $m \to \infty$.
Thus $q^{mk} \cdot \brk(n-k-1) \in \mO(q^{m(n-1)})$ as $m \to + \infty$. Again by Lemma~\ref{ovvio}, the result follows.
\end{proof}

In the remainder of the section we focus on $\F_q$-linear matrix rank metric codes, and establish the following result on their covering radii.
\begin{theorem} \label{th_rk_in}
Let $k$ be an integer with $0 \le k \le nm$.
Denote by $\mF$ the family of $\F_q$-linear rank metric codes
of dimension $k$.
Define 
$\rho_k:=n-\lfloor k/m \rfloor$, and let
$\mF':=\{\mC \in \mF \mid \rhork(\mC) =\rho_k\}$.
Then
$$\lim_{q \to \infty} \DD{\mF'}{\mF} =1, \quad \mbox{ provided that } k < (m-n+\lfloor k/m \rfloor +1)(\lfloor k/m \rfloor +1).$$
\end{theorem}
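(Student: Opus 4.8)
My plan is to prove the two inequalities $\rhork(\mC)\ge\rho_k$ and $\rhork(\mC)\le\rho_k$ separately: the first for \emph{every} $k$-dimensional code once $q$ is large, and the second only for a density-$1$ subfamily. Together they force $\rhork(\mC)=\rho_k$ on a density-$1$ family, so that $\DD{\mF'}{\mF}\to1$. Throughout I write $\ell:=\lfloor k/m\rfloor$ and $r:=k-\ell m$, so $\rho_k=n-\ell$ and $0\le r<m$. The lower bound is the easy half and is where the hypothesis enters. By Lemma~\ref{ovvio} it suffices that $q^k\cdot\brk(\rho_k-1)<q^{mn}$, and since $\brk(\rho_k-1)\in\Theta(q^{(\rho_k-1)(n+m-\rho_k+1)})$ by~(\ref{estimate_ball_rk}), this holds for all large $q$ as soon as $k+(n-\ell-1)(m+\ell+1)<mn$. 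A direct computation gives $mn-(n-\ell-1)(m+\ell+1)=(m-n+\ell+1)(\ell+1)$, which is exactly the stated hypothesis $k<(m-n+\ell+1)(\ell+1)$. Hence $\rhork(\mC)\ge\rho_k$ for all $\mC\in\mF$ when $q$ is large.

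For the upper bound there is no deterministic reason to expect $\rhork(\mC)\le\rho_k$ (there is no redundancy bound for matrix codes), so it must be obtained in density. The first step I would take is to reformulate covering coset-wise: since $\drk(x,\mC)$ is constant on each coset of $\mC$, one has $\rhork(\mC)\le\rho_k$ if and only if every coset of $\mC$ meets the rank ball $\mathcal{B}:=\Ball(0,\rho_k)$. Thus, once $q$ is large, $\mF\setminus\mF'$ is exactly the set of codes possessing a coset disjoint from $\mathcal{B}$. Each such coset is an affine $k$-flat avoiding $\mathcal{B}$ whose direction space is the code itself, so a first-moment (union) bound yields
\[
|\mF\setminus\mF'|\ \le\ \#\{\text{affine }k\text{-flats in }\mat\text{ avoiding }\mathcal{B}\}.
\]
Dividing by $|\mF|=\qbin{mn}{k}{q}$ and by the total number $\qbin{mn}{k}{q}\,q^{mn-k}$ of affine $k$-flats gives $\DD{\mF\setminus\mF'}{\mF}\le q^{mn-k}\,P$, where $P$ is the probability that a uniformly random affine $k$-flat avoids $\mathcal{B}$.

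The crux is then to show that $P$ decays faster than any power of $q$, so that the polynomial factor $q^{mn-k}$ is absorbed. I would estimate $P$ by exposing the flat one dimension at a time, building $H_0\subset H_1\subset\cdots\subset H_k$ with $H_0$ a random point and $H_j=H_{j-1}+\langle w_j\rangle$ for a fresh $w_j$, and demanding at each step that the $q^{j}-q^{j-1}$ newly added points all avoid $\mathcal{B}$. The key heuristic is that the expected number of flat points landing in $\mathcal{B}$ is of order $(q^k-1)\,\brk(\rho_k)/q^{mn}$, whose exponent equals $k-\ell(m-n+\ell)=\ell\rho_k+r$; this ``covering excess'' is $\ge1$ whenever $1\le\ell\le n-1$, while the case $\ell=0$ is trivial since then $\rho_k=n$ and $\mathcal{B}=\mat$. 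Lower-bounding each conditional step (which requires, beyond the first-moment count of new points, a Bonferroni/second-moment correction to bound the per-step hitting probability away from $0$) should give $P\le\exp(-c\,q^{\ell\rho_k+r})$ for some $c>0$; as $\ell\rho_k+r\ge1$ this beats $q^{mn-k}$, forcing $\DD{\mF\setminus\mF'}{\mF}\to0$ and hence $\DD{\mF'}{\mF}\to1$.

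The main obstacle is precisely this last estimate. The $q^k$ points of a flat are far from independent, so the naive route---union bound over the $\Theta(q^{mn})$ potentially uncovered points, each missed with only polynomially small probability (e.g.\ via a single second-moment estimate)---is hopelessly weak, since the product then grows with $q$. The whole difficulty is to extract genuine exponential decay from the dimension-by-dimension exposure while rigorously controlling the hitting probability contributed at each step. Everything else---the coset reformulation, the first-moment reduction to affine flats, and the arithmetic identifying the hypothesis with the Lemma~\ref{ovvio} inequality---is routine.
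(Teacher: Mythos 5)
Your lower-bound half is correct and is exactly the paper's argument: Lemma~\ref{ovvio} together with the estimate $\brk(\rho_k-1)\in\Theta\bigl(q^{(n-\ell-1)(m+\ell+1)}\bigr)$, and the identity $mn-(n-\ell-1)(m+\ell+1)=(\ell+1)(m-n+\ell+1)$ showing that the hypothesis on $k$ is precisely what makes $q^k\cdot\brk(\rho_k-1)<q^{mn}$ for large $q$. The problem is the upper-bound half, where you openly leave the decisive estimate unproved. Your reduction to the avoidance probability $P$ of a random affine $k$-flat is fine, but the union bound forces you to beat the factor $q^{mn-k}$, and the only decay you can extract by standard first/second-moment reasoning is polynomial: with the heavy dependencies among the $q^k$ points of a flat, a second-moment bound gives at best $P\in\mO\bigl(q^{-(\ell\rho_k+r)}\bigr)$, and $\ell\rho_k+r$ is in general far smaller than $mn-k$. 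The claimed bound $P\le\exp(-c\,q^{\ell\rho_k+r})$ is exactly the hard content of your approach, it is not justified by the ``dimension-by-dimension exposure'' sketch, and it is not clear how to establish it. As written, the proof does not go through.

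The paper avoids this difficulty entirely with a deterministic structural argument. Let $J_k$ be the lexicographically first $k$ positions of $\{1,\dots,n\}\times\{1,\dots,m\}$ and let $\mF''$ be the set of codes whose initial set (pivot set of the reduced row echelon form) equals $J_k$. Then $\Pro[\mC\in\mF'']=q^{k(mn-k)}\qbin{mn}{k}{q}^{-1}\to 1$ as $q\to+\infty$, so $\mF''$ has density $1$; and for \emph{every} $\mC\in\mF''$ one gets $\rhork(\mC)\le\rho_k$ with no probabilistic input: given any $y\in\mat$ there is a unique codeword agreeing with $y$ on $J_k$, so $y$ is within rank distance $\lambda(S)$ of $\mC$, where $S$ is the complement of $J_k$; since $J_k$ contains the first $\lfloor k/m\rfloor$ full rows, $S$ is covered by $n-\lfloor k/m\rfloor$ lines, whence $\rhork(\mC)\le\rho_k$ (Proposition~\ref{ISB}). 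If you want to salvage your route you would need a genuine concentration inequality for the intersection of a random linear code with a fixed set, which is a substantially harder statement than the theorem itself; I recommend replacing the upper-bound half by the initial-set argument.
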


Note that the requirement on $k$ in the statement above is automatically satisfied, for example, when $m \ge n + k$, $k < m$, or when 
$m > (t+1)(n-t-1)$, where $k=(t+\varepsilon)m$ and $0 \leq \varepsilon <1$.

Before presenting the proof of Theorem~\ref{th_rk_in}, we point out the following 
interesting difference between $\F_{q^m}$-linear and $\F_q$-linear rank metric codes.

\begin{remark}
 As the following example shows, not all $k$-dimensional codes
$\mC \le \mat$ have covering radius upper bounded by $n-\lfloor k/m \rfloor$. Therefore 
one cannot apply directly the proof strategy in Proposition~\ref{prop:rkCR} to establish 
Theorem~\ref{th_rk_in}. The idea behind our argument is to first show that \textit{most} $k$-dimensional codes
$\mC \le \mat$ have covering radius upper bounded by $n-\lfloor k/m \rfloor$.
\end{remark}

\begin{example}
Assume $m \ge 2n$ and $n(m-n) >m$. Fix an integer $k$ with $m< k \le n(m-n)$, and take a subset $L \subseteq \{1,...,n\} \times \{1,...,m-n\}$
of cardinality $k$.
Define $$\mC:=\{x \in \F_q^{n \times m} \mid x_{ij}=0 \mbox{ whenever } (i,j) \notin L\} \le \F_q^{n \times m}, \qquad y:= \begin{bmatrix} 0_{n \times (m-n)} & I_n\end{bmatrix} \in \F_q^{n \times m}.$$
The code $\mC$ is linear of dimension $k$, and all its codewords are matrices supported on $L$.
Therefore $\rk(x-y)=n$ for all $x \in \mC$. This shows that $\rho(\mC)=n$. Thus $n-\lfloor k/m \rfloor <n=\rho(\mC)$.
\end{example}

We now derive a matrix-analogue of the \textit{redundancy bound} for linear codes with the Hamming metric.
See~\cite{byrneravagnani} for a bound of the same type that takes into account also the minimum rank distance of the code.

\begin{definition}
Let $S \subseteq \{1,...,n\} \times \{1,...,m\}$ be a set. The
\textbf{characteristic matrix} of $S$ is the binary $n \times m$ matrix $\chi(S)$
defined by $\chi(S)_{ij}:=1$ if $(i,j) \in S$, and 
$\chi(S)_{ij}:=0$ otherwise.
We denote by $\lambda(S)$ the minimum numbers of lines (rows or columns) needed to cover all the 1's in the matrix $\chi(S)$.

The \textbf{initial entry} of a non-zero matrix $x \in \mat$ is $\inn(x):=\min\{(i,j) \mid x_{ij} \neq 0\}$, where the minimum is with respect to the lexicographic order. The 
\textbf{initial set} of a non-zero $\F_q$-linear rank metric code $\mC \le \mat$ is $$\inn(\mC):=\{\inn(x) \mid x \in \mC, \ x \neq 0\} \subseteq \{1,...,n\} \times 
\{1,...,m\}.$$
\end{definition}

\begin{proposition} \label{ISB}
Let $\mC \le \mat$ be a non-zero $\F_q$-linear rank metric code. Denote by 
$S$ the complement of $\inn(\mC)$ in $\{1,...,n\} \times \{1,...,m\}$. Then
$\rhork(\mC) \le \lambda(S)$.
\end{proposition}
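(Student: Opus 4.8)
The plan is to show directly, from the definition of covering radius, that every $y\in\mat$ lies within rank-distance $\lambda(S)$ of the code: I will produce a codeword $x\in\mC$ with $\drk(y,x)=\rk(y-x)\le\lambda(S)$, which immediately yields $\rhork(\mC)\le\lambda(S)$. The argument rests on two ingredients: a normal-form (reduction) step that replaces $y$ by a coset representative supported off $\inn(\mC)$, and the elementary fact that a matrix whose nonzero entries are covered by $a$ rows and $b$ columns has rank at most $a+b$.

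First I would establish the reduction step. For each position $p\in\inn(\mC)$ fix, by definition of the initial set, a nonzero codeword $c_p\in\mC$ with $\inn(c_p)=p$; the defining property of the initial entry forces $c_p$ to vanish at every position strictly lex-smaller than $p$. Starting from $r:=y$, I process the finitely many positions of $\inn(\mC)$ in increasing lexicographic order: whenever the current residue $r$ is nonzero at a position $p\in\inn(\mC)$, I subtract the scalar multiple $\bigl(r_p/(c_p)_p\bigr)\,c_p$, which zeroes out entry $p$. Because $c_p$ is zero at all positions lex-smaller than $p$, this operation leaves every already-processed (smaller) position untouched. Hence the process terminates with a residue $r=y-x$, where $x\in\mC$ is the sum of the subtracted codewords, satisfying $r_p=0$ for all $p\in\inn(\mC)$; equivalently $\mathrm{supp}(r)\subseteq S$.

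Second I would record the rank estimate. If all nonzero entries of a matrix $r\in\mat$ lie in the union of $a$ prescribed rows and $b$ prescribed columns, then writing $r=r_1+r_2$, with $r_1$ carrying the entries in the chosen rows and $r_2$ the remaining entries (which necessarily lie in the chosen columns), gives $\rk(r)\le\rk(r_1)+\rk(r_2)\le a+b$, since a matrix supported on $a$ rows has rank at most $a$ and likewise for columns. Applying this to the residue $r$ from the first step: by definition of $\lambda(S)$ the set $S$ is covered by some $\lambda(S)$ lines, say $a$ rows and $b$ columns with $a+b=\lambda(S)$; since $\mathrm{supp}(r)\subseteq S$ these same lines cover $\mathrm{supp}(r)$, whence $\rk(y-x)=\rk(r)\le\lambda(S)$. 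As $y$ was arbitrary, $\rhork(\mC)\le\lambda(S)$.

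The only genuinely delicate point is the reduction step, specifically the claim that processing $\inn(\mC)$ in increasing lexicographic order never reintroduces a nonzero entry at an already-cleared position. This is exactly guaranteed by the property $\inn(c_p)=p$, which forces $c_p$ to vanish below $p$, so that killing entry $p$ cannot disturb any smaller position. Everything else is routine: finiteness of $\inn(\mC)$ makes termination immediate, and the rank-versus-lines inequality is just subadditivity of rank under the row/column splitting.
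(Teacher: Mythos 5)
Your proposal is correct and follows essentially the same route as the paper: both arguments produce a codeword $x$ agreeing with $y$ on $\inn(\mC)$, observe that $y-x$ is then supported on $S$, and bound the rank of such a matrix by the number of covering lines. The paper simply asserts the existence (and uniqueness) of $x$ and leaves the rank-versus-lines inequality implicit, whereas you supply the lex-ordered elimination and the row/column splitting explicitly; these are exactly the details the paper's one-line proof suppresses.
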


\begin{proof}
Choose an arbitrary matrix $y \in \mat$. Let by $x$ the unique matrix in 
$\mC$ such that $x_{ij}=y_{ij}$ for all $(i,j) \in \inn(\mC)$.
Then $x-y$ is supported on $S$. In particular, 
$\rk(x-y) \le \lambda(S)$. Since $y$ was arbitrary, this shows that
$\rhork(\mC) \le \lambda(S)$.
\end{proof}

We are now ready to show the main result of this subsection. 

\begin{proof}[Proof of Theorem~\ref{th_rk_in}]
The result is trivial if $k=0$. From now on we assume $0<k<mn$.
Let $J_k \subseteq \{1,...,n\} \times \{1,...,m\}$ be the set of the first 
$k$ elements of $\{1,...,n\} \times \{1,...,m\}$, with respect to the lexicographic order.   
Define the familiy of codes $
\mF'':=\{\mC \in \mF \mid \inn(\mC) =J_k\}$.

Using Proposition~\ref{ISB} one can easily check that
$\rhork(\mC) \le \rho_k$ for all $\mC \in \mF''$.
Now assume that $\mC \in \mF$ is chosen uniformly at random. Then 
\begin{eqnarray}
\Pro[\rhork(\mC)=\rho_k] &=& \Pro[\rhork(\mC) > \rho_k-1] -  \Pro[\rhork(\mC) > \rho_k] \nonumber \\
&=&  \Pro[\rhork(\mC) > \rho_k-1] -  1+\Pro[\rhork(\mC) \le \rho_k] \nonumber \\
&\ge& \Pro[\rhork(\mC) > \rho_k-1] -  1 + \Pro[\mC \in \mF'']. \label{aaaa1}
\end{eqnarray} 
It is easy to see that 
\begin{equation*} 
\Pro[\mC \in \mF''] =q^{k(mn-k)} \qbin{mn}{k}{q}^{-1}.
\end{equation*}
Therefore
\begin{equation} \Pro[\mC \in \mF'']-1  = \frac{\displaystyle q^{k(mn-k)}\prod_{j=0}^{k-1}(q^k-q^j) - \prod_{j=0}^{k-1}(q^{mn}-q^j)}{\displaystyle \prod_{j=0}^{k-1}(q^{mn}-q^j)} \to 0 \quad \mbox{as } q \to +\infty,\label{aaaa3}
\end{equation}
where the latter estimate follows from the fact that 
the leading term in $q$ of the numerator has degree strictly less than ${mnk}$, which is the degree in $q$ of the denominator. 
Finally, using the estimate in (\ref{estimate_ball_rk}) we obtain 
$$q^{k-mn} \cdot \brk(\rho_k-1) \in \Theta \left(q^{k-(m-n+\lfloor k/m \rfloor +1))(\lfloor k/m \rfloor +1)} \right) \mbox{ as } 
q \to +\infty.$$
This vanishes as  $q \to \infty$ when $k < (m-n+\lfloor k/m \rfloor +1)(\lfloor k/m \rfloor +1).$
Thus under this assumption we have
 $\Pro[\rhork(\mC) \ge \rho_k] =  \Pro[\rhork(\mC) > \rho_k-1] =1$ for $q$ sufficiently large.
The theorem can now be obtained combining this observation with (\ref{aaaa1}) and (\ref{aaaa3}).
\end{proof}

\subsection{Maximal Codes}

We explicitly compute the covering radius of maximal codes (see Definition~\ref{def:maxcode}) endowed with the Hamming and rank metric, for sufficiently large values of some of the code parameters. 

Note that if $\mC \le X$ is a code of minimum distance $d$ and $\rho(\mC) \ge d$, then there exists $y \in X \setminus \mC$ such that $d(x,y) \ge d$ for all $x \in \mC$. Therefore $\mC \cup \{y\}$ is a (non-linear) code of minimum distance $d$ that contains $\mC$. This observation is known as the \textit{supercode lemma} in the literature~\cite{coveringcodes}. In the sequel, we will need the following ``linear'' version of this result for distance-regular spaces, which relies on Properties~\ref{pr3} and~\ref{pr3a} of Definition~\ref{defregular}.

\begin{lemma}[Linear Supercode Lemma] \label{LSC}
Let $\mC \le X$ be a non-zero code of minimum distance $d$. If $\mC$ is maximal, then $\rho(\mC) \le d-1$.
\end{lemma}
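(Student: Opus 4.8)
The plan is to prove the contrapositive: assuming $\rho(\mC) \ge d$, I will produce a linear code $\mC'$ that properly contains $\mC$, has dimension $\dim(\mC)+1$, and still has minimum distance $d$, which contradicts maximality via Definition~\ref{def:maxcode}. First, since $\rho(\mC) \ge d$, the definition of covering radius furnishes a vector $y \in X$ whose distance to every codeword is at least $d$, i.e.\ $d(x,y) \ge d$ for all $x \in \mC$. Taking $x = y$ would force $0 = d(y,y) \ge d$, which is impossible, so $y \notin \mC$. I then set $\mC' := \mC \oplus \F_Q y$, a linear code of dimension $\dim(\mC)+1$ that properly contains $\mC$.

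The core of the argument is to verify that $d(\mC') = d$. Since $\mC'$ is linear, its minimum distance equals its minimum nonzero weight, so it suffices to bound $\omega(z)$ from below for every nonzero $z = c + \alpha y$ with $c \in \mC$ and $\alpha \in \F_Q$. When $\alpha = 0$ the element lies in $\mC$ and has weight at least $d$ by hypothesis; moreover $\mC$ already contains a word of weight exactly $d$, which gives the matching upper bound $d(\mC') \le d$. The remaining case $\alpha \neq 0$ is where the distance-regular structure of $(X,d)$ is essential.

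The key step, which I expect to be the main obstacle, is to reduce $\omega(c + \alpha y)$ for $\alpha \neq 0$ to a distance from $y$ to some codeword. Writing $c + \alpha y = \alpha(y - c')$ with $c' := -\alpha^{-1} c \in \mC$, I would apply Property~\ref{pr3} of Definition~\ref{defregular} (scaling invariance, legitimate because $\alpha \neq 0$) to strip off the factor $\alpha$, and then Property~\ref{pr3a} (translation to the origin) to obtain $\omega(c + \alpha y) = d(y, c')$. Since $c' \in \mC$, the choice of $y$ forces $d(y, c') \ge d$. The delicate points are the scalar and sign bookkeeping needed for the axioms to apply cleanly, and the observation that Property~\ref{pr3a}, applied to $\omega(u) = d(u,0)$, already yields $d(u,0) = d(0,u)$, so no additional symmetry hypothesis is required. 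Combining the two cases gives $d(\mC') \ge d$, hence $d(\mC') = d$; the existence of such a $\mC'$ contradicts the maximality of $\mC$, completing the contrapositive and therefore the proof.
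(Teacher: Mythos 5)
Your proposal is correct and follows essentially the same route as the paper: both take a vector $y$ at distance at least $d$ from every codeword, form the extension $\mC + \langle y \rangle$, and use Properties~\ref{pr3} and~\ref{pr3a} of Definition~\ref{defregular} to strip the scalar and translate to the origin, reducing the weight of $c + \alpha y$ (for $\alpha \neq 0$) to a distance $d(y,c')$ with $c' \in \mC$. The only cosmetic difference is that you bound minimum nonzero weights while the paper bounds pairwise distances of arbitrary distinct elements, and you make explicit the upper bound $d(\mC') \le d$ that the paper leaves implicit.
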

\begin{proof}
Assume by contradiction that there exists $y \in X \setminus \mC$ with $d(x,y) \ge d$ for all $x \in \mC$. Define 
the linear code $\mC':=\mC + \langle y \rangle \gneq \mC$. Let $y_1:=x_1+\alpha_1 y$ and $y_2:=x_1+\alpha_2 y$ be arbitrary distinct elements of $\mC'$, with 
 with $x_1,x_2 \in \mC$ and $\alpha_1,\alpha_2 \in \F_q$.  We will show that
$d(y_1,y_2) \ge d$.
By Property~\ref{pr3a} of Definition~\ref{defregular} we have
$$d(y_1,y_2)= d(x_1-x_2,(\alpha_2-\alpha_1)y).$$
If $\alpha_1=\alpha_2$, then we must have $x_1 \neq x_2$, as $y_1$ and $y_2$ are distinct. Therefore, again by Property~\ref{pr3a} of Definition~\ref{defregular},  $d(y_1,y_2)=d(x_1,x_2) \ge d$.
On the other hand, if $\alpha_1 \neq \alpha_2$ then by Property~\ref{pr3} of Definition~\ref{defregular} we conclude
$d(y_1,y_2)=d((\alpha_2-\alpha_1)^{-1}(x_1-x_2),y) \ge d$, where the last inequality follows from the linearity of $\mC$. This contradicts the maximality of $\mC$.
\end{proof}

\begin{proposition}
Let $d$ be an integer with $2 \le d \le n$, and let $\mC \le \F_q^n$ be
a linear maximal code of minimum Hamming distance $d$. Then
$$\rhoH(\mC)=d-1, \quad \mbox{provided that $q$ is sufficiently large}.$$

Equivalently, if $\mF$ denotes the family of linear maximal codes $\mC \le \F_q^n$ with $\dH(\mC)=d$, and 
$\mF':=\{\mC \in \mF \mid \rhoH(\mC)=d-1\}$, then
$\DD{\mF'}{\mF}=1$ for $q$ sufficiently large.
\end{proposition}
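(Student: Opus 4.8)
The plan is to establish the two bounds $\rhoH(\mC)\le d-1$ and $\rhoH(\mC)\ge d-1$ separately and then read off the density statement. The upper bound is immediate: as $\mC$ is a non-zero maximal code of minimum distance $d$, the Linear Supercode Lemma (Lemma~\ref{LSC}) gives $\rhoH(\mC)\le d-1$ with no restriction on $q$.

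For the lower bound I would first bound the dimension of $\mC$. Since $\dH(\mC)=d$, the Singleton bound (Proposition~\ref{Singleton_redun}) forces $k:=\dim(\mC)\le n-d+1$. I then apply Lemma~\ref{ovvio} with $\rho=d-1$, which satisfies $\rho\ge 1$ because $d\ge 2$; its hypothesis reads $q^k\cdot\bH(d-2)<q^n$. From the estimate $\bH(d-2)\in\Theta(q^{d-2})$ in (\ref{estimate_ball_H}) one obtains $q^{n-d+1}\cdot\bH(d-2)\in\Theta(q^{n-1})$, so there is a threshold $q_0$, depending only on $n$ and $d$, with $q^{n-d+1}\cdot\bH(d-2)<q^n$ whenever $q\ge q_0$. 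For any maximal code $\mC$ of minimum distance $d$ we have $q^k\le q^{n-d+1}$, hence
$$q^k\cdot\bH(d-2)\le q^{n-d+1}\cdot\bH(d-2)<q^n\qquad\text{for all }q\ge q_0.$$
Lemma~\ref{ovvio} then yields $\rhoH(\mC)\ge d-1$, and combining with the upper bound gives $\rhoH(\mC)=d-1$ for every such code once $q\ge q_0$.

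The only real subtlety, and the point to handle carefully, is that the threshold $q_0$ must be uniform over the whole family of maximal codes of minimum distance $d$, since the density claim concerns $\mF$ as a whole. This is automatic here: the only code-dependent quantity in the argument is the dimension $k$, which the Singleton bound caps by $n-d+1$ independently of the code, while the covering estimate depends only on $n$ and $d$. Hence a single $q_0=q_0(n,d)$ works, so that $\mF'=\mF$ for all $q\ge q_0$ and therefore $\DD{\mF'}{\mF}=1$ for $q$ sufficiently large. Beyond this uniformity observation there is no genuine obstacle, as the result follows from a direct combination of Lemma~\ref{LSC}, the Singleton bound, and the sphere-covering estimate of Lemma~\ref{ovvio}.
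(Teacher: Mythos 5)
Your proposal is correct and follows essentially the same route as the paper: the upper bound via the Linear Supercode Lemma, and the lower bound by combining the Singleton bound $k\le n-d+1$ with the ball estimate $\bH(d-2)\in\Theta\left(q^{d-2}\right)$ and Lemma~\ref{ovvio}. Your explicit remark that the threshold $q_0$ is uniform over the family (since the Singleton bound caps $k$ independently of the code) is a point the paper leaves implicit, but it is the same argument.
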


\begin{proof}
Since $\mC$ is maximal, we have $\rhoH(\mC) \le d-1$ by Lemma~\ref{LSC}.
Let $k$ denote the dimension of $\mC$. By Proposition~\ref{Singleton_redun} we have
$q^k \bH(d-2) \le q^{n-d+1}\bH(d-2) \in \mO(q^{n-1})$
as $q \to +\infty$, where the asymptotic estimate follows from (\ref{estimate_ball_H}). Therefore $\rhoH(\mC) \ge d-1$ when
 $q$ is sufficiently large.
\end{proof}

We now focus on rank metric codes, and explicitly compute the covering radius of any maximal code $\mC \le \mat$, for $m$ sufficiently large and for any $q$.

\begin{theorem} \label{CRmax}
Let $\varepsilon$ be a positive integer, and let $\mC \le \mat$ be a linear non-zero matrix rank metric code with minimum rank distance $d \ge \varepsilon$. Assume 
$(\varepsilon-1)m \ge \log_q(4)+n^2/4$. Then
$$\rhork(\mC) \ge d-\varepsilon+1.$$ 

Equivalently, if $\mF$ denotes the family of non-zero linear rank metric codes $\mC \le \mat$ with $\drk(\mC) \ge \varepsilon$, and 
$\mF':=\{\mC \in \mF \mid \rhoH(\mC) \ge d-\varepsilon+1\}$, then we have
$\DD{\mF'}{\mF}=1$, provided that $(\varepsilon-1)m \ge \log_q(4)+n^2/4$.
\end{theorem}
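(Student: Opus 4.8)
The plan is to run a cloud‑covering (counting) argument: I will show that the ball/cloud $\Ball(\mC,d-\varepsilon)=\bigcup_{x\in\mC}\Ball(x,d-\varepsilon)$ cannot exhaust $\mat$, so some matrix $z$ satisfies $\rk(z-x)\ge d-\varepsilon+1$ for every $x\in\mC$, which is precisely $\rhork(\mC)\ge d-\varepsilon+1$. This is exactly the content of Lemma~\ref{ovvio} applied with $\rho=d-\varepsilon+1$ and $N=mn$: it suffices to verify the single inequality
\[
q^{\dim\mC}\cdot\brk(d-\varepsilon) < q^{mn}.
\]
The minimum distance enters twice here: through the cloud radius $d-\varepsilon$, and through the Singleton‑type bound of Proposition~\ref{rankSing}, which gives $\dim\mC\le m(n-d+1)$, hence $q^{\dim\mC}\le q^{m(n-d+1)}$. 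Substituting this, it is enough to prove the cleaner statement $\brk(d-\varepsilon)<q^{m(d-1)}$.

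The technical heart is an explicit upper bound on the rank‑metric ball, sharpening the purely asymptotic estimate in~(\ref{estimate_ball_rk}) to one with an absolute constant, namely $\brk(r)<4\,q^{r(n+m-r)}$ for all $0\le r\le n-1$. I would prove this by bounding the number $N_i$ of rank‑$i$ matrices. Writing $\qbin{n}{i}{q}=q^{i(n-i)}\prod_{\ell=0}^{i-1}\bigl(1-q^{-(n-\ell)}\bigr)\prod_{s=1}^{i}\bigl(1-q^{-s}\bigr)^{-1}$ and using $\prod_{j=0}^{i-1}(q^m-q^j)<q^{mi}$, one obtains $N_i<\gamma_q\,q^{i(n+m-i)}$, where $\gamma_q:=\prod_{s\ge1}(1-q^{-s})^{-1}$. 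Since $m\ge n$ and $i\le r-1\le n-2$, consecutive exponents satisfy $(i+1)(n+m-i-1)-i(n+m-i)=(n+m)-2i-1\ge 3$, so the terms $q^{i(n+m-i)}$ grow by at least a factor $q^3$ at each step up to $i=r$; summing the resulting geometric tail below $T_r:=q^{r(n+m-r)}$ yields $\brk(r)<\tfrac{\gamma_q}{1-q^{-3}}\,q^{r(n+m-r)}$. The constant $\gamma_q/(1-q^{-3})$ decreases in $q$ and is largest at $q=2$, where $\gamma_2\approx 3.46$ gives $\gamma_2/(1-2^{-3})\approx 3.96<4$; verifying $\gamma_2<7/2$ rigorously (for instance via $\log\gamma_2=\sum_{k\ge1}\frac{1}{k(2^k-1)}$ together with a crude tail bound) is the one genuinely delicate estimate, and it is exactly what forces the constant $4$ — and hence the $\log_q 4$ — in the hypothesis.

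Finally I would substitute $r=d-\varepsilon$ (which lies in $\{0,\dots,n-1\}$ since $\varepsilon\le d\le n$) into the ball bound, so that the required $\brk(d-\varepsilon)<q^{m(d-1)}$ follows once
\[
4\,q^{(d-\varepsilon)(n+m-d+\varepsilon)}\le q^{m(d-1)}.
\]
Taking $\log_q$ and simplifying, the exponent difference is
$m(d-1)-(d-\varepsilon)(n+m-d+\varepsilon)=m(\varepsilon-1)-(d-\varepsilon)(n-d+\varepsilon)$,
so the inequality reduces to $m(\varepsilon-1)\ge \log_q 4+(d-\varepsilon)(n-d+\varepsilon)$. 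Because the factors $(d-\varepsilon)$ and $(n-d+\varepsilon)$ are nonnegative and sum to $n$, AM–GM gives $(d-\varepsilon)(n-d+\varepsilon)\le n^2/4$, so the hypothesis $(\varepsilon-1)m\ge \log_q 4+n^2/4$ closes the argument and yields $\rhork(\mC)\ge d-\varepsilon+1$. Since this conclusion holds for \emph{every} $\mC\in\mF$ under the stated condition on $m$, we get $\mF'=\mF$, and hence the equivalent density reformulation $\DD{\mF'}{\mF}=1$. The only real obstacle is the explicit ball estimate — specifically pinning the absolute constant strictly below $4$ so that it matches the $\log_q 4$ appearing in the hypothesis.
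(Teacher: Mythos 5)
Your proposal is correct and follows essentially the same route as the paper: apply Lemma~\ref{ovvio} with $\rho=d-\varepsilon+1$, bound $|\mC|\le q^{m(n-d+1)}$ via Proposition~\ref{rankSing}, use the ball estimate $\brk(d-\varepsilon)<4\,q^{(d-\varepsilon)(n+m-d+\varepsilon)}$, and observe that the resulting exponent inequality reduces to $(\varepsilon-1)m\ge\log_q 4+(d-\varepsilon)(n-d+\varepsilon)\le\log_q 4+n^2/4$ (the paper maximizes the quadratic in $d$ where you invoke AM--GM, which is the same computation). The only difference is that the paper simply cites the ball bound with constant $\prod_{j\ge1}(1-q^{-j})^{-1}<4$ rather than deriving it, whereas you reprove it from scratch with the slightly larger but still sufficient constant $\gamma_q/(1-q^{-3})$.
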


\begin{proof}
The size of a ball of radius 
$d-\varepsilon$ in the rank metric space $\mat$ can be upper bounded (see e.g. \cite[Lemma 5]{gad_paper}) as
$$\brk(d-\varepsilon) < q^{(d-\varepsilon)(n+m-d+\varepsilon)} \cdot \prod_{j=1}^{+\infty} \frac{1}{1-q^{-j}}.$$
It is well known (see~\cite{berlekamp}) that 
$\prod_{j=1}^{+\infty} \frac{1}{1-q^{-j}} < 4$ for any prime power $q$. Therefore by Proposition~\ref{rankSing} 
we have
\begin{equation} \label{qqq1}
|\mC| \cdot \brk(d-\varepsilon) < 4 \cdot q^{m(n-d+1)} \cdot q^{(d-\varepsilon)(n+m-d+\varepsilon)}.
\end{equation}
It is easy to see that
\begin{equation} \label{qqq2}
\log_q(4) + m(n-d+1)+(d-\varepsilon)(n+m-d+\varepsilon) \le nm \quad \mbox{if } (\varepsilon-1)m \ge \log_q(4)+n^2/4.
\end{equation}
Indeed, the inequality in (\ref{qqq2}) holds if and only if
$(\varepsilon-1)m \ge \log_q(4) -d^2+d(n+2\varepsilon) -\varepsilon n-\varepsilon^2$. Moreover,
the real-valued function $d \mapsto \log_q(4) -d^2+d(n+2\varepsilon) -\varepsilon n-\varepsilon^2$  attains 
its maximum for $d=n/2+\varepsilon$. 
Combining (\ref{qqq1}) and (\ref{qqq2}) we therefore obtain
$$|\mC| \cdot \brk(d-\varepsilon) < q^{nm} \quad \mbox{whenever } \ (\varepsilon-1)m \ge \log_q(4)+n^2/4.$$
Then by Lemma~\ref{ovvio} we have $\rhork(\mC) \geq  d-\varepsilon+1$. 
The second part of the statement follows from Lemma~\ref{LSC}.
\end{proof}

\begin{corollary}
Let $d$ be an integer with $2 \le d \le n$.
Denote by $\mF$ the family of maximal codes $\mC \le \mat$ with $\drk(\mC) \ge 2$, and 
let $\mF':=\{\mC \in \mF \mid \rhoH(\mC) \ge d-1\}$. Then
$\DD{\mF'}{\mF}=1$, provided that $(\varepsilon-1)m \ge \log_q(4)+n^2/4$.
\end{corollary}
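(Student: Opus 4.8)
The plan is to read this corollary as the rank-metric counterpart of the Hamming proposition immediately preceding it, and to obtain it by combining two facts already in hand: the covering-radius upper bound forced by maximality, and the matching lower bound supplied by Theorem~\ref{CRmax} in the case $\varepsilon = 2$.

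First I would fix an arbitrary $\mC \in \mF$, a non-zero maximal code of minimum rank distance $d$. Because $\mC$ is maximal, the Linear Supercode Lemma (Lemma~\ref{LSC}) applies directly in the rank metric and yields $\rhork(\mC) \le d-1$. Next I would apply Theorem~\ref{CRmax} with $\varepsilon = 2$: the hypothesis $d \ge \varepsilon$ holds since $d \ge 2$, and the size condition $(\varepsilon-1)m \ge \log_q(4) + n^2/4$ collapses to exactly $m \ge \log_q(4) + n^2/4$, which is the standing assumption of the corollary. The theorem then gives $\rhork(\mC) \ge d - \varepsilon + 1 = d-1$. Combining the two inequalities forces $\rhork(\mC) = d-1$, so $\mC \in \mF'$; as $\mC$ was arbitrary, $\mF' = \mF$ and hence $\DD{\mF'}{\mF} = 1$.

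I do not expect a genuine obstacle here, since all the analytic work — the ball-size estimate and the arithmetic establishing $\log_q(4) + m(n-d+1) + (d-\varepsilon)(n+m-d+\varepsilon) \le nm$ — was already carried out inside Theorem~\ref{CRmax}. The only care needed is the bookkeeping: checking that the choice $\varepsilon = 2$ makes $d - \varepsilon + 1$ equal to the target $d-1$ and turns the hypothesis of Theorem~\ref{CRmax} into the stated condition on $m$. It is worth emphasizing that, unlike the asymptotic density statements elsewhere in the paper, this is an exact identity valid for every $q$ and every $m$ satisfying $m \ge \log_q(4) + n^2/4$: under this hypothesis every maximal code already attains covering radius exactly $d-1$, so no limiting argument is required.
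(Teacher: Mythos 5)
Your proof is correct and matches the paper's intended derivation: the corollary is stated without a separate proof precisely because it follows immediately from Theorem~\ref{CRmax} with $\varepsilon=2$ (giving $\rhork(\mC)\ge d-1$ under the stated hypothesis on $m$) combined with the Linear Supercode Lemma~\ref{LSC} (giving $\rhork(\mC)\le d-1$ for maximal codes), which is exactly the combination you use. Your remark that this yields the exact identity $\mF'=\mF$ for every $q$ and every sufficiently large $m$, with no limiting argument, is also consistent with the paper's statement.
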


\section{Average Parameters of Codes}\label{sec:avpar}

In this final section we show a second class of applications 
of partition-balanced families, namely, the computation of certain average parameters of codes.

We first compute the average weight distribution of a code of given dimension, and compare it the weight distribution of MDS and MRD codes. We then study the average distance of a code from a given element $x \in X$. This parameter generalizes the well-known 
\textit{total weight} of a code.

\begin{definition}
A \textbf{code parameter} is a function $p: \{\mbox{codes in } X\} \to \R \cup \{+\infty\}$.
If $\mF$ is a non-empty family of codes in $X$, then we 
denote the {average value} of $p$ on $\mF$ by
$$\overline{p}(\mF):= |\mF|^{-1} \sum_{\mC \in \mF} p(\mC).$$
\end{definition}

An example of a code parameter is, for $i \in \N$, the $i$-th element of the weight distribution, i.e.,
$\mC \mapsto W_i(\mC)$. Another example is 
the {average distance from} $x$, i.e., $\mC \mapsto \delta_x(\mC):=|\mC|^{-1} \sum_{y \in \mC} d(x,y)$.
In the remainder of the section we focus on these two parameters.

\begin{theorem}
Let $0 \le k \le N$ be an integer, and let $\mF$ denote the family of $k$-dimensional linear codes in $X$. The average weight distribution of $\mF$ is 
$$\overline{W_0}(\mF)=1 \qquad \mbox{and} \qquad \overline{W_i}(\mF)=\frac{q^k-1}{q^N-1} \ W_i(X) \quad \mbox{ for } 1 \le i \le |\omega|.$$
\end{theorem}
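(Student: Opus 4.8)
The plan is to recognize $\mF$ as a $\mP(\omega)$-balanced family and then feed it into Proposition~\ref{examp2}. First I would verify that the family $\mF$ of all $k$-dimensional linear codes is indeed $\mP(\omega)$-balanced. For any two nonzero $x,y \in X$ there is an $\F_Q$-linear automorphism $G:X \to X$ with $G(x)=y$, since the general linear group acts transitively on the nonzero vectors; such a $G$ induces a bijection between the $k$-dimensional subspaces containing $x$ and those containing $y$. Hence $|\mF_x|$ takes one value at $x=0$ and a single common value on $X \setminus \{0\}$, so in particular it depends only on the weight class of $x$, and $\mF$ is $\mP(\omega)$-balanced. Counting $k$-dimensional subspaces through a fixed vector then gives the invariants
$$\mP_0(\omega)(\mF) = |\mF| = \qbin{N}{k}{Q}, \qquad \mP_i(\omega)(\mF) = \qbin{N-1}{k-1}{Q} \quad \text{for } 1 \le i \le |\omega|.$$

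Next I would split into the two cases. For $i=0$, every linear code $\mC$ satisfies $W_0(\mC)=1$, so $\overline{W_0}(\mF)=|\mF|^{-1}\sum_{\mC\in\mF}1=1$ with no further work. For $1 \le i \le |\omega|$, Proposition~\ref{examp2} gives $\sum_{\mC\in\mF}W_i(\mC)=\mP_i(\omega)(\mF)\,W_i(X)$ (and Remark~\ref{rem:candivide} guarantees $W_i(X)\neq 0$, legitimizing the quotient), so dividing by $|\mF|$ yields
$$\overline{W_i}(\mF)=\frac{\mP_i(\omega)(\mF)}{|\mF|}\,W_i(X)=\frac{\qbin{N-1}{k-1}{Q}}{\qbin{N}{k}{Q}}\,W_i(X).$$
Alternatively, and perhaps more transparently, this step can be carried out by a bare double count via Lemma~\ref{KEY} applied to the indicator function of $\{x \in X : \omega(x)=i\}$, bypassing Proposition~\ref{examp2} entirely; both routes land on the same $Q$-binomial ratio.

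The only remaining computation is to simplify that ratio. Recalling $|X|=Q^N$, I would expand both $Q$-binomial coefficients: all factors cancel except a single one in the numerator and a single one in the denominator, leaving $\qbin{N-1}{k-1}{Q}\big/\qbin{N}{k}{Q}=(Q^k-1)/(Q^N-1)$, which completes the identity. I do not expect a genuine obstacle here: the only real content is the balancedness check in the first paragraph, resting on transitivity of the $\F_Q$-linear automorphisms on nonzero vectors, while everything else is bookkeeping. As a sanity check, the boundary cases behave correctly: $k=0$ gives $\mF=\{\{0\}\}$ with $\overline{W_i}(\mF)=0$ for $i\ge 1$, and $k=N$ gives $\mF=\{X\}$ with $\overline{W_i}(\mF)=W_i(X)$, both consistent with the stated formula.
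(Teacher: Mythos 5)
Your proof is correct and is essentially the same double-counting argument as the paper's: both reduce to Lemma~\ref{KEY} applied to the indicator function of the weight-$i$ shell for a partition-balanced family of all $k$-dimensional codes. The only cosmetic difference is that you invoke the weight partition and Proposition~\ref{examp2}, computing the invariant as $\qbin{N-1}{k-1}{Q}$ and then simplifying the $Q$-binomial ratio, whereas the paper uses the coarser two-class partition of Proposition~\ref{examp1} with $\mD=\{0\}$, whose invariant $|\mF|(Q^k-1)/(Q^N-1)$ already has the desired form.
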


\begin{proof}
The result is clear if $i=0$. When $i>0$, we can apply Lemma ~\ref{KEY} to the family of $k$-dimensional codes in $X$ and to the partition of Proposition~\ref{examp1}, with $\mD=\{0\}$. We take as $f$ the characteristic function of $\Ball(0,i)\setminus \Ball(0,i-1)$. 
\end{proof}

\begin{corollary}[Corollary 3.1.21 of~\cite{pelbook}] \label{avgdistr1}
Let $k$ be an integer with $0 \le k \le n$. Let $\mF$ be the family of $[n,k]_q$ codes endowed with the Hamming metric. 
The average Hamming weight distribution of $\mF$ is 
$$\overline{\WH_0}(\mF)=1 \qquad \mbox{and} \qquad \overline{\WH_i}(\mF)=\frac{q^k-1}{q^n-1} \ \binom{n}{i}(q-1)^i \quad \mbox{ for } 1 \le i \le n.$$
\end{corollary}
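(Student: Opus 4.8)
The plan is to obtain this as a direct specialization of the preceding Theorem on the average weight distribution of $k$-dimensional linear codes in an arbitrary $Q$-ary distance-regular space $(X,d)$. First I would invoke the fact, already recorded in the subsection on the Hamming space, that $(\F_q^n,\dH)$ is a $q$-ary distance-regular space of dimension $N=n$; in particular it satisfies all the hypotheses of Definition~\ref{defregular}, since the weight image is $\{0,1,\dots,n\}$ and hence $|\wH|=n=|\wH(\F_q^n)|-1$. The Theorem therefore applies verbatim with $Q=q$ and $N=n$, yielding $\overline{\WH_0}(\mF)=1$ together with
$$\overline{\WH_i}(\mF)=\frac{q^k-1}{q^n-1}\,\WH_i(\F_q^n)\qquad\text{for }1\le i\le n,$$
where $\WH_i(\F_q^n)$ denotes the number of vectors in $\F_q^n$ of Hamming weight exactly $i$.

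The only remaining ingredient is the weight enumerator of the full ambient space. I would compute $\WH_i(\F_q^n)$ by the standard combinatorial count: choose the $i$-element support among the $n$ coordinates in $\binom{n}{i}$ ways, and assign to each selected coordinate one of the $q-1$ nonzero elements of $\F_q$, which gives $\WH_i(\F_q^n)=\binom{n}{i}(q-1)^i$. Substituting this expression into the displayed formula produces exactly the claimed value of $\overline{\WH_i}(\mF)$.

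I do not anticipate any genuine obstacle here: the corollary carries no content beyond specializing the general Theorem to $(\F_q^n,\dH)$, and the single nontrivial step—evaluating the weight distribution of the entire space—is an elementary count. The main point worth stating explicitly is merely the verification that the Hamming space meets the standing hypotheses, so that the Theorem is applicable with $Q=q$ and $N=n$.
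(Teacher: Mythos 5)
Your proposal is correct and matches the paper's (implicit) argument exactly: the corollary is stated as an immediate specialization of the preceding theorem to the Hamming space with $Q=q$, $N=n$, and the only computation needed is the standard count $\WH_i(\F_q^n)=\binom{n}{i}(q-1)^i$. Nothing further is required.
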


\begin{corollary} \label{avgdistr2}
Let $k$ be an integer with $0 \le k \le n$. Let $\mF$ be the family of $[n,k]_{q^m}$ vector rank metric codes. 
The average rank weight distribution of $\mF$ is 
$$\overline{\Wrk_0}(\mF)=1 \qquad \mbox{and} \qquad \overline{\Wrk_i}(\mF)=\frac{q^{mk}-1}{q^{mn}-1} \ \qbin{n}{i}{q}\prod_{j=0}^{i-1}(q^m-q^j) \quad \mbox{ for } 1 \le i \le n.$$
\end{corollary}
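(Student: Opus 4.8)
The plan is to obtain Corollary~\ref{avgdistr2} as a direct specialization of the general average-weight-distribution formula proved in the theorem above, applied to the vector rank metric space $(\F_{q^m}^n,\drk)$. Following Notation~\ref{mainnotaz}, this space is $Q$-ary of dimension $N$ with $Q=q^m$ and $N=n$, so the scalar prefactor $\frac{Q^k-1}{Q^N-1}$ of that theorem becomes $\frac{q^{mk}-1}{q^{mn}-1}$. The whole statement then reduces to computing the weight enumerator $\Wrk_i(\F_{q^m}^n)$, namely the number of vectors in $\F_{q^m}^n$ of rank weight exactly $i$.

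First I would invoke the theorem to record that $\overline{\Wrk_0}(\mF)=1$ and, for $1 \le i \le n$,
$$\overline{\Wrk_i}(\mF) = \frac{q^{mk}-1}{q^{mn}-1}\, \Wrk_i(\F_{q^m}^n).$$
It then remains only to evaluate $\Wrk_i(\F_{q^m}^n)$. By the isometry of Proposition~\ref{isome}, counting vectors of rank weight $i$ in $\F_{q^m}^n$ is the same as counting $n \times m$ matrices over $\F_q$ of rank $i$, and this number equals $\qbin{n}{i}{q}\prod_{j=0}^{i-1}(q^m-q^j)$. Conveniently, this value is already present in the paper: the ball $\Ball(0,r)$ is the disjoint union of the spheres of radii $0,1,\dots,r$, so $\brk(r)=\sum_{i=0}^{r}\Wrk_i(\F_{q^m}^n)$, and reading off the summands of the rank-ball formula~(\ref{estimate_ball_rk}) gives $\Wrk_i(\F_{q^m}^n)=\qbin{n}{i}{q}\prod_{j=0}^{i-1}(q^m-q^j)$. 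Substituting this into the displayed identity yields exactly the claimed expression.

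There is no genuine obstacle in this argument, which runs entirely parallel to the Hamming case in Corollary~\ref{avgdistr1}. The single point that deserves an explicit line is the identification of the sphere size $\Wrk_i(\F_{q^m}^n)$ with the $i$-th summand of~(\ref{estimate_ball_rk}); I would spell out the disjoint-union decomposition of the ball so that the corollary follows from the theorem by pure substitution, with the prefactor traced back to the invariant $\mP_2(\mF)$ of Proposition~\ref{examp1} taken with $\mD=\{0\}$.
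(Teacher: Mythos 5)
Your proposal is correct and matches the paper's (implicit) argument: the corollary is intended as a direct specialization of the preceding theorem with $Q=q^m$, $N=n$, and $W_i(\F_{q^m}^n)=\qbin{n}{i}{q}\prod_{j=0}^{i-1}(q^m-q^j)$, exactly as you describe. Your extra care in identifying the sphere sizes with the summands of the rank-ball formula is a reasonable way to make the substitution explicit, but it introduces no new idea beyond what the paper relies on.
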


\begin{corollary} \label{avgdistr3}
Let $k$ be an integer with $0 \le k \le mn$. Let $\mF$ be the family of $k$-dimensional matrix rank metric codes in $\F_q^{n \times m}$. 
The average rank weight distribution of $\mF$ is 
$$\overline{\Wrk_0}(\mF)=1 \qquad \mbox{and} \qquad \overline{\Wrk_i}(\mF)=\frac{q^{k}-1}{q^{mn}-1} \ \qbin{n}{i}{q}\prod_{j=0}^{i-1}(q^m-q^j) \quad \mbox{ for } 1 \le i \le n.$$
\end{corollary}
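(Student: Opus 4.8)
The plan is to obtain this corollary as a direct specialisation of the general average weight distribution theorem proved at the start of this section, applied to the matrix rank metric space. First I would recall from Section~\ref{sec:dist-reg} that $(\mat,\drk)$ is a $q$-ary distance-regular space of dimension $N=mn$ over $\F_q$ whose induced weight is the matrix rank, so that $|\omega|=n$ (using $n \le m$). Specialising the general theorem with $Q=q$ and $N=mn$ to the family $\mF$ of $k$-dimensional matrix codes would then give $\overline{\Wrk_0}(\mF)=1$ together with
$$\overline{\Wrk_i}(\mF)=\frac{q^k-1}{q^{mn}-1}\,\Wrk_i(\mat) \qquad \mbox{for } 1 \le i \le n,$$
where $\Wrk_i(\mat)$ denotes the number of $n \times m$ matrices over $\F_q$ of rank exactly $i$.

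It would then remain only to evaluate $\Wrk_i(\mat)$. I would read this off from the ball-size formula~(\ref{estimate_ball_rk}): since $|\Ball(x,r)|=\sum_{j=0}^r \p{0}{j}{j}$ and the annulus $\Ball(x,r)\setminus\Ball(x,r-1)$ consists of the matrices at rank distance exactly $r$ from $x$, translation-invariance of the metric gives $\p{0}{r}{r}=\Wrk_r(\mat)$. Comparing term by term with $\brk(r)=\sum_{i=0}^r \qbin{n}{i}{q}\prod_{j=0}^{i-1}(q^m-q^j)$ would yield
$$\Wrk_i(\mat)=\qbin{n}{i}{q}\prod_{j=0}^{i-1}(q^m-q^j),$$
and substituting this into the displayed expression for $\overline{\Wrk_i}(\mF)$ would complete the argument. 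The case $i=0$ is handled separately but trivially, since the zero matrix is the unique rank-$0$ matrix and lies in every linear code.

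I do not expect any genuine difficulty: the whole statement is a specialisation of an already established theorem combined with the classical rank distribution of the full matrix space, which the paper has in effect already recorded in~(\ref{estimate_ball_rk}). The only place warranting a moment of care is the identification $\p{0}{r}{r}=\Wrk_r(\mat)$, i.e. the claim that the $i$-th summand of the rank-metric ball-size formula counts precisely the rank-$i$ matrices; this is immediate from Property~\ref{pr4} of Definition~\ref{defregular} together with the translation-invariance in Property~\ref{pr3a}, but it is worth stating explicitly so that the substitution into the general formula is unambiguous.
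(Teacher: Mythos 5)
Your proposal is correct and matches the paper's intent exactly: Corollary~\ref{avgdistr3} is stated as an immediate specialisation of the general average weight distribution theorem (with $Q=q$, $N=mn$, and $W_i(\mat)=\qbin{n}{i}{q}\prod_{j=0}^{i-1}(q^m-q^j)$ the classical rank distribution of the full matrix space), which is precisely the route you take. The extra care you give to the identification $\pp(0;r,r)=W_r(\mat)$ is sound and harmless, though the paper treats this as implicit.
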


\begin{remark}
It is interesting to observe that, although the family of MDS $[n,k]_q$ codes is dense in the family of $[n,k]_q$ codes as $q\to +\infty$, the $(d-1)$-th component of the average Hamming weight distribution of an $[n,k]_q$ code does not converge to $0$ 
for $q \to +\infty$. Following the notation of Corollary~\ref{avgdistr1}, assume $1 \le k <n$ and let $d=n-k+1$.
Then we have
\begin{equation*}
\lim_{q \to \infty} \overline{\WH_{i}}(\mF) = \left\{ \begin{array}{cl} 0 & \mbox{ if } 1 \le i \le d-2, \\ \binom{n}{d-1}
& \mbox{ if } i=d-1, \\ +\infty & \mbox{ if } d \le i \le n. \end{array}\right.
\end{equation*}
In particular,
\begin{equation} \label{neq0}
\lim_{q \to \infty} \overline{\WH_{d-1}}(\mF) = \binom{n}{d-1} \neq 0.
\end{equation}
Similarly, the $(d-1)$-th component of the average rank weight distribution of an $[n,k]_{q^m}$ code does not converge to 
$0$ for $m \to +\infty$. In the notation of Corollary~\ref{avgdistr2}, assume $1 \le k <n$, and let $d:=n-k+1$. Then
$$\lim_{m \to \infty} \overline{\Wrk_{d-1}}(\mF) = \qbin{n}{d-1}{q} \neq 0,$$
which is the rank-analogue of the limit in (\ref{neq0}).
Note that if $\mF$ is instead the family of $\F_q$-linear matrix codes of dimension $k$, as in Corollary~\ref{avgdistr3}, then
$$\lim_{q \to \infty} \overline{\Wrk_{d-1}}(\mF) = +\infty.$$
\end{remark}

\bigskip

The following result for codes in Hamming space $\F_q^n$ is well-known.
Let $\mC$ be a linear code in $\F_{q}^n$ and let $x \in \F_{q}^n$. Recall that the \textbf{support} of $\mC$ is  $\mathrm{supp}(\mC) = \{1 \le i \le n \mid y_i \neq 0 \text{ some } y \in \mC\}$. 
Then 
$$\delta_x(\mC)= \frac{1}{|\mC|} \sum_{y \in \mC} \dH(x,y) = \frac{q-1}{q} \ |\mathrm{supp}(\mC)| + \wH(\hat{x}),$$
where $\hat{x}$ is the vector of length $n-|\mathrm{supp}(\mC)|$ obtained by puncturing $x$ on the support of $\mC$.
In particular, if $\mC$ has support $n$, then the average distance of a vector $x$ to the codewords of $\mC$ is given by the constant $\delta_x(\mC)=n (q-1)/q$. 
This simple result has found various applications in coding theory~\cite{bgks10, gabklove}. However no such result holds for the rank distance. The average of the ranks of the elements of the coset $x+\mC$ of a matrix code is neither an invariant with respect to $\mC$, nor with respect to the rank of $x$. 

However, the average distance of an element $x \in X$ to a member of a {\em family} of (partition balanced) linear codes is often an invariant of the parameters of that family and the weight of $x$. We demonstrate this with the following results.

\begin{theorem} \label{OKfirsttheo}
Let $k$ be an integer with $0 \le k \le N$, and let $\mF$ be the family of linear codes
$\mC \le X$ with $\dim(\mC)=k$. For all $x \in X$ we have
$$\overline{\delta_x}(\mF) = \frac{\omega(x)}{q^k} + \frac{q^k-1}{q^k(q^N-1)} \left( -\omega(x)+\sum_{i=0}^{|\omega|} i \cdot \p{0}{i}{i} \right).$$ 
\end{theorem}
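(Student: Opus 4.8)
The plan is to reduce the average to a single application of Lemma~\ref{KEY}. Since every code $\mC \in \mF$ has dimension $k$, we have $|\mC| = q^k$ for all $\mC \in \mF$, so that
$$\overline{\delta_x}(\mF) = \frac{1}{|\mF|} \sum_{\mC \in \mF} \frac{1}{q^k} \sum_{y \in \mC} d(x,y) = \frac{1}{q^k |\mF|} \sum_{\mC \in \mF} \sum_{y \in \mC} f(y),$$
where $f : X \to \R$ is the function $f(y) := d(x,y)$ with $x$ fixed. The family $\mF$ of all $k$-dimensional codes is exactly the family of Proposition~\ref{examp1} with $\mD = \{0\}$ and $t = 0$; hence it is $\mP$-balanced for the size-two partition $\mP_1 = \{0\}$, $\mP_2 = X \setminus \{0\}$, with invariants $\mP_1(\mF) = |\mF|$ and $\mP_2(\mF) = |\mF|(q^k-1)/(q^N-1)$. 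Applying Lemma~\ref{KEY} to $f$ then expresses the inner double sum as $\mP_1(\mF)\, d(x,0) + \mP_2(\mF) \sum_{y \in X \setminus \{0\}} d(x,y)$.

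Next I would evaluate the two class-sums. The class $\mP_1 = \{0\}$ contributes $d(x,0) = \omega(x)$. For $\mP_2$, I would first compute the full sum over $X$ by grouping codewords according to their distance from $x$:
$$\sum_{y \in X} d(x,y) = \sum_{i=0}^{|\omega|} i \cdot |\{y \in X \mid d(x,y) = i\}| = \sum_{i=0}^{|\omega|} i \cdot \p{0}{i}{i},$$
where the last equality is just the definition of the intersection number $\p{0}{i}{i}$ (taking the two centres in that definition to both equal $x$, and using that $d$ is symmetric, which follows from Properties~\ref{pr3} and~\ref{pr3a} of Definition~\ref{defregular}). Subtracting the $y = 0$ term then gives $\sum_{y \in X \setminus \{0\}} d(x,y) = \left( \sum_{i=0}^{|\omega|} i \cdot \p{0}{i}{i} \right) - \omega(x)$.

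Finally I would substitute the invariants and simplify. Writing $T := \sum_{i=0}^{|\omega|} i \cdot \p{0}{i}{i}$, the double sum equals $|\mF| \omega(x) + |\mF| \frac{q^k-1}{q^N-1}(T - \omega(x))$, and dividing by $q^k |\mF|$ yields
$$\overline{\delta_x}(\mF) = \frac{\omega(x)}{q^k} + \frac{q^k-1}{q^k(q^N-1)}(T - \omega(x)),$$
which is the claimed identity once $T$ is written out. There is no serious obstacle here: the argument is a clean application of Lemma~\ref{KEY}, and the only points requiring a little care are to keep the $\{0\}$ class separate (so that the coefficient of $\omega(x)$ is not absorbed into the $\mP_2$ term) and to recognise the total-distance sum over $X$ as $\sum_{i=0}^{|\omega|} i \cdot \p{0}{i}{i}$.
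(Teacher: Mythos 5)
Your proposal is correct and matches the paper's own argument essentially step for step: both apply Lemma~\ref{KEY} with $f(y)=d(x,y)$ to the partition of Proposition~\ref{examp1} (with $\mD=\{0\}$), identify $\sum_{y\in X}d(x,y)=\sum_{i=0}^{|\omega|} i\cdot\p{0}{i}{i}$, subtract the $y=0$ term, and divide by $q^k|\mF|$. No issues.
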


\begin{proof}
Define the function $f:X \to \N$ by $f(y):= d(x,y)$ for all
$y \in X$.
We apply Lemma~\ref{KEY} to the family $\mF$ and the function $f$, taking as $\mP$ the partition of Proposition~\ref{examp1}. We obtain
\begin{eqnarray*}
\sum_{\mC \in \mF} \sum_{y \in \mC} d(y,x) &=& 
|\mF| \cdot \omega(x) + \frac{|\mF| \cdot (q^k-1)}{q^N-1} \sum_{y \in X \setminus \{0\}} d(x,y) \\
&=& |\mF| \cdot \omega(x) + \frac{|\mF| \cdot (q^k-1)}{q^N-1} \left( -\omega(x)+\sum_{y \in X} d(x,y) \right) \\
&=& |\mF| \cdot \omega(x) + \frac{|\mF| \cdot (q^k-1)}{q^N-1} \left( -\omega(x)+\sum_{i=0}^{|\omega|} i \cdot \p{0}{i}{i} \right).
\end{eqnarray*}
The result follows.
\end{proof}

We now establish an analogous result for a
$\mP(\omega)$-balanced family $\mF$, under the assumption that
all codes in $\mF$ have the same weight distribution.

\begin{theorem} \label{OKavgPomega}
Let $k$ and $d$ be integers with $1 \le k \le N$ and $1 \le d \le |\omega|$. Let $\mF$ be a $\mP(\omega)$-balanced family of linear codes in $X$ of dimension $k$ and minimum distance $d$. Assume that all codes in $\mF$ have the same weight distribution, say
$(W_i(\mF) \mid i \in \N)$. For all $x \in X$ we have
$$\overline{\delta_x}(\mF) =q^{-k} \left(\omega(x) + \sum_{i=d}^{|\omega|} \frac{W_i(\mF)}{W_i(X)} \sum_{j=0}^{|\omega|} j \cdot \p{\omega(x)}{i}{j}\right).$$ 
\end{theorem}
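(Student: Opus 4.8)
The plan is to compute the double sum $\sum_{\mC \in \mF}\sum_{y \in \mC} d(x,y)$ directly via Lemma~\ref{KEY}, then divide by $|\mF|\cdot q^k$. Indeed, every $\mC \in \mF$ has $|\mC| = q^k$, so by the definitions of $\delta_x$ and of the average of a code parameter,
$$\overline{\delta_x}(\mF) = |\mF|^{-1} q^{-k} \sum_{\mC \in \mF} \sum_{y \in \mC} d(x,y).$$
Thus it suffices to evaluate the inner double sum.

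First I would apply Lemma~\ref{KEY} to the family $\mF$ and to the weight partition $\mP(\omega)$, taking $f(y) := d(x,y)$. This yields
$$\sum_{\mC \in \mF}\sum_{y \in \mC} d(x,y) = \sum_{i=0}^{|\omega|} \mP_i(\omega)(\mF) \sum_{y \in \mP_i(\omega)} d(x,y).$$
The next step is to rewrite the sum over each weight class in terms of intersection numbers. Since $\mP_i(\omega) = \{y \in X \mid \omega(y) = i\}$, grouping the elements $y$ of weight $i$ according to their distance $j$ from $x$ and recalling that $|\{y \mid d(y,0)=i, \ d(y,x)=j\}| = \p{\omega(x)}{i}{j}$, where $0$ and $x$ are taken as the two base points so that the governing distance is $d(0,x)=\omega(x)$, I obtain
$$\sum_{y \in \mP_i(\omega)} d(x,y) = \sum_{j=0}^{|\omega|} j \cdot \p{\omega(x)}{i}{j}.$$

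I would then substitute the value of the invariant. Because all codes in $\mF$ share the same weight distribution $(W_i(\mF))$, Proposition~\ref{examp2} gives $\mP_i(\omega)(\mF) = |\mF| \, W_i(\mF)/W_i(X)$ for all $0 \le i \le |\omega|$, the denominators being nonzero by Remark~\ref{rem:candivide}. Dividing through by $|\mF| q^k$ leaves
$$\overline{\delta_x}(\mF) = q^{-k} \sum_{i=0}^{|\omega|} \frac{W_i(\mF)}{W_i(X)} \sum_{j=0}^{|\omega|} j \cdot \p{\omega(x)}{i}{j}.$$

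Finally I would isolate the $i=0$ term and discard the low-weight terms, which is where the stated formula differs from the expression above. For $i=0$ one has $W_0(\mF)=W_0(X)=1$, while $\p{\omega(x)}{0}{j}$ counts only the point $y=0$, which satisfies $d(0,x)=\omega(x)$; hence the $i=0$ summand equals $\omega(x)$, matching the term pulled out front. For $1 \le i \le d-1$ the minimum-distance hypothesis forces $W_i(\mF)=0$, so those terms vanish and the surviving sum runs from $i=d$. This produces exactly the claimed identity. The argument is essentially bookkeeping; the only points requiring care are the correct choice of base points in $\p{\omega(x)}{i}{j}$ and the separate treatment of the $i=0$ class, since it is precisely this class that contributes the free $\omega(x)$ term sitting outside the summation.
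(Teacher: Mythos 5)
Your proposal is correct and follows essentially the same route as the paper: apply Lemma~\ref{KEY} with $f(y)=d(x,y)$ to the weight partition, use Proposition~\ref{examp2} for the invariants, convert each weight-class sum into intersection numbers, and observe that the $i=0$ class yields the free $\omega(x)$ term while the classes $1\le i\le d-1$ vanish because $W_i(\mF)=0$. Your write-up is in fact slightly more explicit than the paper's about the choice of base points and the treatment of the low-weight terms.
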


\begin{proof}
Define the function $f:X \to \N$ by $f(y):= d(x,y)$ for all
$y \in X$. As in the proof of Theorem~\ref{OKfirsttheo}, we apply Lemma~\ref{KEY} to the partition $\mP(\omega)$ of Proposition~\ref{examp2} and obtain
$$\overline{\delta_X}(\mF)  \cdot q^k = |\mF|^{-1} \sum_{\mC \in \mF}  \sum_{y \in \mC} d(x,y) = \omega(x) + \sum_{i=d}^{|\omega|} \frac{W_i(\mF)}{W_i(X)} \sum_{\substack{y \in X \\ \omega(y)=i}} d(x,y).$$
By definition of intersection numbers, we have
$$\sum_{\substack{y \in X \\ \omega(y)=i}} d(x,y) = \sum_{j=0}^{|\omega|} j \cdot \p{\omega(x)}{i}{j},$$
which gives the desired formula.
\end{proof}

Theorems~\ref{OKfirsttheo} and~\ref{OKavgPomega} can specialized to codes with the Hamming and the rank metric. 
We apply these results in the case of the Hamming metric, using the formula derived in Section~\ref{sec:dist-reg} for the intersection numbers.

\begin{example}
 Let $k$ be an integer with $1 \le k \le n$, and let $\mF$ be the family of MDS $[n,k]_q$ codes in the Hamming space $\F_{q}^n$. Then $\mF$ is balanced with respect to the partition on $\F_q^n$ induced by the Hamming weight, and each code in $\mF$ has the same weight distribution \cite{macws}. In the notation of Theorem~\ref{OKavgPomega}, we have
   	\begin{eqnarray*}
   	W_i(\F_{q}^n)&=&\pH{0}{i}{i} =\binom{n}{i}(q-1)^i \\
   	W_i(\mF)& =& \binom{n}{i}(q-1)\sum_{j=0}^{i-d}(-1)^j\binom{i-1}{j}q^{i-d-j} \\
   	\pH{\omega(x)}{i}{j}& = & \displaystyle \sum_{r\geq 0}\binom{\omega(x)}{r}\binom{n-\omega(x)}{j-\omega(x)+r}\binom{\omega(x)-r}{j-i+r}(q-1)^{j-\omega(x)+r}(q-2)^{i+\omega(x)-j-2r},
   \end{eqnarray*}
   where $\omega=\wH$ is the Hamming weight.
Applying Theorem~\ref{OKavgPomega}, we get
\begin{multline*}
    \overline{\delta_x}(\mF) \cdot q^k = \omega(x) + \sum_{i=d}^{n} \frac{\sum_{j=0}^{i-d}(-1)^j\binom{i-1}{j}q^{i-d-j}}{(q-1)^{i-1}} \sum_{j=0}^{n} j  \sum_{r}\binom{\omega(x)}{r} \cdots \\ \cdots \binom{n-\omega(x)}{j-\omega(x)+r}\binom{\omega(x)-r}{j-i+r}(q^{i-r}+\cdots).
\end{multline*}
It can be checked that 
\begin{eqnarray*}
\lim_{q \to +\infty}\overline{\delta_x}(\mF)  &=& \lim_{q \to +\infty} q^{-k}\left(\omega(x) + q^{-d+1} q^n \sum_{j=\omega(x)}^{n} j \binom{n-\omega(x)}{j - \omega(x)}\binom{\omega(x)}{j-n}\right)
\\ &=&  \lim_{q \to +\infty} nq^{n-d+1-k} = n.
\end{eqnarray*}
\end{example}

\bigskip

\bigskip

\bigskip

\end{document}